\documentclass[10pt]{article}

\usepackage{graphicx}
\usepackage{tikz}
\usetikzlibrary{arrows.meta, positioning}

\usepackage[margin=1in]{geometry}
\usepackage{setspace}
\usepackage{amsmath}
\usepackage{longtable}
\usepackage{algorithm}
\usepackage[noend]{algpseudocode}

\usepackage{amsfonts}
\usepackage{amssymb}
\usepackage{bm}
\usepackage{natbib}
\usepackage{enumitem}
\usepackage{dsfont}

\newcommand{\1}{\mathds{1}}

\usepackage{subcaption}  
\usepackage{graphicx}
\usepackage{float}
\usepackage{mathtools}

\usepackage{url}
\usepackage{tikz}
\usetikzlibrary{calc, arrows.meta, positioning, shapes.geometric}

\definecolor{SWIGGray}{RGB}{200, 200, 200}
\tikzset{>=Latex}
\tikzstyle{obs} = [draw, circle, minimum size=0.9cm, inner sep=1pt, thick, fill=white]
\tikzstyle{lat} = [draw, circle, minimum size=0.9cm, inner sep=1pt, thick, dashed, fill=white]
\tikzstyle{opt} = [draw, circle, minimum size=0.9cm, inner sep=1pt, thick, dotted, fill=white]
\tikzstyle{fix} = [draw, rectangle, minimum size=0.7cm, inner sep=2pt, thick, fill=gray!20]
\tikzstyle{arr} = [->, thick, >=Latex]
\tikzstyle{bidir} = [<->, thick, >=Latex, dashed]

\usepackage{booktabs}
\usepackage{comment}
\usepackage{theoremref}
\usepackage{amsthm}
\usepackage{multirow}

\usepackage[dvipsnames]{xcolor}

\renewcommand{\P}{\mathbb{P}}

\newcommand{\E}{\mathbb{E}}
\newcommand{\R}{\mathbb{R}}
\newcommand{\Q}{\mathbb{Q}}
\newcommand{\Hh}{\mathbb{H}}
\newcommand{\G}{\mathbb{G}}

\newcommand{\calX}{\mathcal{X}}

\newcommand{\calY}{\mathcal{Y}}

\newcommand{\calI}{\mathcal{I}}

\newcommand{\calJ}{\mathcal{J}}
\newcommand{\calB}{\mathcal{B}}
\newcommand{\calT}{\mathcal{T}}
\newcommand{\calC}{\mathcal{C}}

\newcommand{\calZ}{\mathcal{Z}}
\newcommand{\calA}{\mathcal{A}}
\newcommand{\calN}{\mathcal{N}}
\newcommand{\calS}{\mathcal{S}}

\newcommand{\Var}{\mathrm{Var}}

\newcommand{\base}{\mathrm{base}}
\newcommand{\Rem}{\mathrm{Rem}}

\newcommand\independent{\protect\mathpalette{\protect\independenT}{\perp}}
\def\independenT#1#2{\mathrel{\rlap{$#1#2$}\mkern2mu{#1#2}}}

\newcommand{\wh}[1]{\widehat{#1}}
\newcommand{\wt}[1]{\widetilde{#1}}
\newcommand{\wb}[1]{\overline{#1}}

\newcommand{\corr}{\mathrm{Corr}}

\newcommand{\obs}{\mathrm{obs}}

\newcommand{\naive}{\mathrm{naive}}

\algrenewcommand\algorithmicrequire{\textbf{Input:}}
\algrenewcommand\algorithmicensure{\textbf{Output:}}
\newcommand{\intm}{\mathrm{int}}
\newcommand{\extm}{\mathrm{ext}}

\newtheorem{theorem}{Theorem}
\numberwithin{theorem}{section}
\newtheorem{lemma}[theorem]{Lemma}

\newtheorem{corollary}[theorem]{Corollary}

\newtheorem{ass}{Assumption}

\theoremstyle{definition}

\theoremstyle{remark}
\newtheorem{remark}[theorem]{Remark}

\usepackage{hyperref}
\hypersetup{
	colorlinks=false,
	bookmarks=true,
	breaklinks=true,
	hidelinks=true,
	pdfpagemode=empty,
}
\usepackage{nameref}
\usepackage{authblk}
\usepackage[noabbrev, capitalize, nameinlink]{cleveref}
\title{Measuring Gift Card Program Incrementality \\via Causal Data Fusion}
\author[1]{Justin Whitehouse\thanks{Corresponding author: contact at \texttt{jwhiteho@stanford.edu}}}
\author[2]{William Betz}
\author[2]{Yan Zhang}
\author[2]{Peter Coles}
\author[1]{\\Ramesh Johari}
\author[1]{Vasilis Syrgkanis}

\affil[1]{Management Science and Engineering, Stanford University}
\affil[2]{Airbnb}

\date{October 5, 2026}

\begin{document}

\maketitle

\begin{abstract}

Businesses regularly offer gift card programs to drive customer spending and increase engagement. A central question is how much incremental revenue these programs generate, and which channels drive it most efficiently. Measuring the incremental revenue associated with a gift card program is a challenging problem in causal inference, requiring a firm to infer how much each customer would have spent if they never received a gift card. Observational data on past customer purchasing behavior reveal possession of a gift card only when a customer makes a purchase, thus leaving a customer’s \textit{treatment status} systematically censored. 

In this paper, we develop a novel data fusion approach to overcome this missing data challenge. We identify and estimate incrementality by combining a large observational dataset with a smaller experimental dataset from a different population. Our approach relies on a mild \textit{transferability} condition, which posits that the conditional relative treatment effect of gift card receipt on the decision to purchase is invariant across the two populations. We develop a flexible, machine learning-based estimator for the incremental revenue and establish its asymptotic normality.

We apply our estimator across both first- and third-party channels through which Airbnb distributes gift cards, finding heterogeneity in incrementality across segments of the population. In particular, we find not only that third-party channels are more incremental than first-party ones, but also that ``self-gifters'' (i.e., customers likely to have purchased their own gift cards) are more incremental than the broader population.

\end{abstract}

\newpage

\section{Introduction}
\label{sec:intro}

Gift cards are stored-value instruments offered by businesses across disparate sectors, with restaurant, travel, and retail industries being notable examples. These instruments have an undeniably large economic footprint: in the United States alone the gift card market was estimated at \$447 billion in 2025 and is projected to exceed \$1.2 trillion by 2035~\citep{precedenceresearch2026giftcards}. 
Further, it is estimated that eight out of ten U.S. shoppers have purchased a gift card in the past twelve months~\citep{blackhawk2026stretched}. 
To some degree, the scale of these gift card programs is not surprising: they offer a variety of strategic benefits to consumers and firms alike. For consumers, these cards provide a convenient gift that offers increased flexibility over traditional gifts.
Further, depending on the channel of distribution, they also offer monetary incentives for self-gifting --- the phenomenon through which a consumer purchases their own gift card.
For businesses, gift cards can be \textit{incremental}, generating additional revenue on top of what the firm would have experienced without the program. This business-side potential raises an important question: how can firms rigorously assess the incrementality of their gift card programs?

Determining the incrementality of gift card programs is an essential question for businesses to address. While businesses seek to develop customer-centric programs that drive engagement and growth, it is important to note that gift card programs are not costless to operate. Each transaction carries direct expenses, including production, distribution, and payment processing, as well as substantial fees to third-party retail and online distribution partners. Further, there are fixed costs associated with running such programs, such as employee salaries and operating fees. When cards are sold at a promotional discount or bundled with loyalty incentives, the effective cost per card rises further. If a business can reliably assess the incrementality of their gift card program, they can leverage their internal cost structure to determine which channels of distribution drive profit most efficiently.

Determining the degree of gift card program incrementality is not simply a matter of measuring the raw revenue associated with gift card redemption. Rather, it requires answering a counterfactual question: how would customers have behaved had they not received a gift card? \citet{norvell2017gift} identify three distinct modes of customer interaction with gift cards. Some recipients provide an \textit{incremental visit}, making a purchase they would not have made absent the card (extensive margin effects). Others exhibit \textit{incremental spend}, purchasing a greater volume of goods than they would have otherwise (intensive margin effects). Finally, some customers exhibit \textit{cannibalization}, using the gift card to pay for a purchase they were already planning to make out-of-pocket. In particular, one might expect cannibalizing behavior to be particularly prevalent among ``self-gifters,'' or individuals who purchase cards for their own use (often, but not exclusively, to exploit promotional discounts, cash-back, or other transfers of value).
While previous works have attempted to assess the incrementality of gift card programs, most have largely relied upon survey methods, which have a natural risk of being biased.
To handle the complex set of counterfactual behaviors enumerated above, one needs to develop a rigorous, mathematical approach  for  gauging program incrementality.

In this paper, we develop a formal causal framework for gauging the incrementality of gift card programs. To evaluate a program's incrementality, or to decompose the earnings from gift card holders into intensive, extensive, and cannibalization shares, one must develop target estimands that capture the impact of receiving a gift card on an individual's spend.
In this paper, we consider two such estimands. 
The first quantity is an \textit{incrementality coefficient (IC)}, which gives the average number of incremental dollars of revenue that are generated for each one dollar of gift card value redeemed. 
The second is an \textit{incrementality ratio (IR)}, which captures the fraction of revenue amongst gift card users that is attributable to the existence of the gift card program.
Both of these quantities can be formalized in terms of the \textit{incremental revenue} of the gift card program, which is the counterfactual causal quantity defined as the expected per-customer change in revenue were the gift card program to be shut down. The former is just the incremental revenue divided by the \textit{average gift card spend amount} over customers, and the latter is the incremental revenue divided by the \textit{average spend amount for those who possess a gift card}.
Instead of focusing our work around this incremental revenue, we opt to focus on the IR and IC as these objects  offer a scale-free means of assessing the incremental revenue of a company's gift card program.

In principle, a firm could estimate the above effects by running a large-scale randomized controlled trial (RCT), in which one half of the population would be experimentally assigned to receive a gift card while the other half would not. However, distributing gift cards to a massive treated group would be prohibitively expensive. Further, the population considered for involvement in the experiment would need to be distributionally similar to the target population of interest, which may be difficult to accomplish in some settings (e.g., for recipients of third-party gift cards). Conversely, attempting to estimate the IC and IR from historical observational data presents a severe missing data problem that has been underexplored in the broader causal inference literature. In many settings, gift card possession is only revealed to the business when a customer makes a purchase; if a customer does not purchase, the firm cannot know whether they hold an unredeemed gift card.\footnote{Some businesses actually observe whether or not a customer possesses a gift card when they ``claim'' a gift card, or upload its value to a company's portal. However, throughout this paper, we view the act of claiming and redeeming (or making a purchase with) a gift card as indistinguishable.} Because the decision to purchase is endogenously determined by both customer characteristics and gift card possession, the treatment variable itself is systematically censored. Consequently, standard observational causal inference methods fail to identify the desired effects. For instance, even if one assumes standard one-sided conditional ignorability of the treatment,\footnote{That is, conditional on observable customer characteristics, the average counterfactual spend of a gift card recipient \textit{in the absence of a gift card} equals the observed spend of non-recipients with the same characteristics. This assumption allows one to impute the treated group's counterfactual behavior using comparable untreated customers.} the propensity score cannot be estimated because the number of gift card holders within each population sub-group is unobserved. We are thus at an impasse: \emph{experiments are far too costly to run at a large scale, and observational data inevitably suffers from a partially observed treatment.}

In this paper, we resolve this impasse by proposing a data fusion approach that combines a large, treatment-censored observational dataset of historical customer transactions with a small, auxiliary experimental dataset in which treatment is fully observed. The key insight is that while observational data alone cannot identify the target causal effects due to censoring, and experimental data alone is typically too small and non-representative of the broader population to yield precise estimates, the two datasets can yield proper identification when taken together. The observational data provides rich information about customer spending behavior conditional on purchase in the target population of interest, while the experimental data reveals the causal relationship between gift card possession and the decision to purchase. By fusing these two sources of information, we can recover both the incrementality coefficient and ratio (along with the raw, incremental revenue itself) without requiring a large scale experiment.
We emphasize that our particular approach to data fusion in this paper is entirely novel and not captured by existing works on missing data (see Section~\ref{sec:related} for a full discussion).

Crucially, our approach does not assume that the experimental and observational populations are similar in any strong sense. The distributions over customer characteristics, gift card assignments, and realized spend amounts can differ arbitrarily between the two datasets. We require only a single, heterogeneous \textit{transferability} assumption: the ratio between the conditional probabilities (given covariates) of someone booking when they respectively do and do not possess a gift card is invariant between the experimental and observational distributions.
As a special case, this condition captures the assumption that the conditional booking probabilities given a customer's characteristics and gift card receipt status are the same between the two populations.
We emphasize that this assumption is substantially weaker than requiring the populations to share outcome distributions or treatment assignment mechanisms, and the assumption can remain credible even when there are systematic differences in conditional booking rates between the experimental and observational datasets (say, due to temporal shifts). 

Under this transferability assumption, together with standard one-sided conditional exogeneity and positivity conditions, we prove that the incremental revenue, and hence also the IC and IR, are nonparametrically identified. The identification formula expresses both the IC and IR in terms of three nuisance functions: the \textit{experimental} gifted and ungifted conditional booking rates and the organic spend function (the expected spend conditional on booking without a gift card), which is identified from the \textit{observational} data. We further show that the identified IR admits a natural decomposition into  \textit{intensive and extensive margin shares},  capturing the fraction of incremental spend resulting from increased spend from customers who would have already made a purchase and those who purchase \textit{only because} of the gift card, respectively. This decomposition not only clarifies which components require experimental data and which are identified from the observational population alone, but also provides guidance on whether a firm's gift card program primarily drives increased out of pocket spend or deeper engagement among existing customers.\footnote{Note that both avenues of influence can exhibit cannibalization: some bookings could have happened even in the absence of a card and a fraction of the spend could have occurred without the card. The causal effects that we identify attempt to separate out the incremental influence along these two channels from the cannibalization effect.}

Based on our identification result, we prove the $\sqrt{n}$-consistency and asymptotic normality of a natural cross-fit estimator~\citep{chernozhukov2018double, van2006targeted, bang2005doubly} for our effects. Crucially, our Neyman orthogonal construction allows us to establish valid inference even when flexible machine learning methods are used to estimate the heterogeneous booking rate and organic spend functions. A distinctive feature of our analysis is that it accommodates highly unbalanced sample sizes: our coverage guarantees remain valid even when the number of experimental observations grows sub-linearly relative to the observational sample. This scaling regime is precisely the one faced in practice, where firms may have millions of historical transactions but can only afford to experiment on a few thousand customers. The Neyman orthogonal construction ensures robustness to first-order misspecification of the nuisance functions, and the cross-fitting procedure permits the use of flexible machine learning methods for nuisance estimation without introducing regularization bias. We also develop sensitivity analyses for our computed effects, which measure the degree to which our transferability assumption must be misspecified in order to overturn our findings.

We apply our methodology to evaluate the incrementality of the gift card program at Airbnb, one of the world's largest online travel platforms. Airbnb has been expanding its gift card program across multiple geographies and distribution channels over the past decade, and so measuring the degree of program incrementality across segments is particularly timely. We estimate both the IC and IR across four North American channels in which individuals can purchase gift cards: First-Party (via Airbnb's website), Retail Online, Retail In-Store, and Business-to-Business (B2B) sales. To make conditional exogeneity assumptions credible, we control for over 150 customer features capturing historical spend behavior, past interactions with gift cards, price consciousness, and website search patterns. 

Our point estimates indicate that both the incrementality coefficient (IC) and incrementality ratio (IR) are positive in Retail Online and B2B channels (around 0.20-0.22, respectively), and such positivity is statistically significant in B2B alone. On the other hand, we find that the incrementality in First-Party and Retail In-Store channels is generally neutral, with the point estimate in First-Party sales being mildly negative (around $-0.20$). 
We additionally present our estimates on a monthly basis, finding relative stability of our estimates across time.
All channels (with the exception of First-Party) possess mildly positive \textit{intensive margins}, sitting between 0.06-0.09.
This indicates that, conditional on booking, gifted customers spend slightly more than non-gifted ones.
On the other hand, extensive margins are highly variable across channels, and all confidence intervals for said margins contain zero.
We further compare our estimates to naive baselines that either do not adjust for covariates or the amount of time between gift card purchase and redemption. These baselines drastically overstate incrementality, which can in turn misinform downstream decision making tasks.
We further measure incrementality when we restrict our analysis to suspected self-gifters, who are chosen via a proxy based on the number of days between gift card purchase and redemption.
Unlike the general population, we find that self-gifters are \textit{highly incremental}, with IR estimates between 0.35 and 0.50 depending on the channel. Further, the 95\% confidence intervals for our estimates exclude zero, indicating the statistical significance of our findings. 
We find that self-gifters exhibit large intensive and extensive margins, with the latter being particularly pronounced. 
Taken together, our results show that our method can capture bespoke differences in gift card program incrementality amongst the various channels of distribution and types of customers.

Our analysis reveals several key insights regarding the mechanisms underlying these findings. First, the channels that appear most incremental (B2B and Retail Online) are precisely those offering the strongest ``effective discounts'', such as the ability to exchange reward points for gift cards, to receive a free gift card upon making a large purchase, or other transfers of value. 
Second, we surprisingly found that incrementality was \textit{highest} amongst units suspected of self-gifting, and that said incrementality was largely driven through \textit{extensive margins}, or an increase in booking probability upon receipt of a gift card.
This might seem counterintuitive, as one may suspect that those who purchase their own gift cards at a discount are likely subsidizing a trip they already had in mind, and thus are fundamentally cannibalizing out of pocket spend. 
However, we note that such behavior can arise even under simple behavioral economic models  --- see Appendix~\ref{app:discuss:econ_model} for such an example.
Third, we generally find that First-Party sales exhibit the lowest IR and IC, with all third-party channels offering much greater incrementality estimates. Although suspected self-gifters in First-Party do exhibit high incrementality estimates, with a significantly positive IR of 0.37, these units only make up a small fraction of those who actually leverage a gift card through this channel (around 11.6\%). This low prevalence is likely due to the fact that typical incentives associated with self-gifting (such as value transfers or discounts) were not offered in this channel during the study period.
All of the above findings point to a surprising conclusion: channels that entice deal-seeking and self-gifting customers seem to be of higher incrementality to Airbnb than traditional channels through which only true-gifting is likely. 

From a methodological perspective, our work contributes to the growing literature on causal inference in operations management~\citep{ho2017causal}. The censored treatment problem we address, where treatment status is revealed only upon a purchase event,  has not been analyzed in the prior data fusion literature, and our identification result provides a new empirical paradigm for analyzing such environments under flexible transferability assumptions. One promising example would be for promotional coupons distributed through third-party channels, as a business would have no way of knowing whether or not a customer received such a coupon until redemption. Hence, our identification and estimation framework can potentially be applied to these settings, providing a general-purpose tool for evaluating the causal effect of promotional instruments when treatment is partially observed.

The remainder of the paper is structured as follows. In Section~\ref{sec:related}, we discuss related work, with a primary focus on existing measurement approaches for gift card program incrementality and existing causal methods for data fusion. In Section~\ref{sec:context}, we detail the empirical context at Airbnb. Section~\ref{sec:estimands} formalizes the target causal estimands. Section~\ref{sec:model} states the identifying assumptions and presents the main identifiability theorems and decompositions. Section~\ref{sec:estimation} develops the cross-fit estimator and establishes its asymptotic properties. Section~\ref{sec:case_study} presents the empirical application to Airbnb. In Section~\ref{sec:discuss} we provide a qualitative interpretation of our findings.
Finally, we conclude in Section~\ref{sec:conclusion}.

\section{Related Work}
\label{sec:related}

Our work contributes to two distinct streams of literature: the empirical analysis of gift card programs and promotional instruments, and the methodological literature on causal inference and data fusion. We discuss each in turn, emphasizing how prior work motivates the questions studied in this paper and how our setting differs from existing analyses.

\paragraph{Empirical Analysis of Gift Cards, Promotions, and Mental Accounting} There is a large literature that studies the economic motivations behind gift card programs, including their appeal to both businesses and consumers alike. Several authors study business-side incentives for offering gift cards to customers. First, authors such as \citet{offenberg2007markets}, \citet{horne2007unredeemed}, and \citet{berg2021firms} note that gift cards can benefit businesses through prepayment, breakage, customer lock-in, and increased redemption period spending, while also offering consumers flexibility relative to traditional gifts. Further, a related behavioral literature shows that gift cards are not treated as equivalent to cash: because they are restricted-use funds, they can alter mental accounts, encourage hedonic or category-congruent spending, and interact with promotional framing~\citep{thaler1985mental,prelec1998red,white2006format,helion2014gift,reinholtz2015mental,yao2014gift,cheng2018double}. These diverse behavioral mechanisms motivate the central empirical concern in our setting: gift cards may generate incremental visits or incremental spend, but they may also cannibalize purchases that would have occurred absent the program~\citep{bawa2004effects,norvell2017gift,norvell2018bonus}. These works, however, do not provide a mathematically rigorous means of inferring counterfactual customer behavior absent a gift card, which is the primary focus of this work.

Several papers study the profitability or optimal design of gift-card programs. \citet{khouja2011analysis}, \citet{khouja2015channel}, and \citet{khouja2016effects} develop analytical models of gift-card pricing, redemption, sales thresholds, channel design, and inventory decisions. These papers characterize optimal firm policies under structural assumptions on demand and redemption behavior, whereas our goal is to identify and estimate the \textit{causal effect} of receiving a gift card on customer spend using customer-level data. Notably, our results in the sequel do not require making structural assumptions on a customer's utility model or decision making process. \citet{norvell2017gift, norvell2018bonus} propose survey-based approaches for decomposing gift-card revenue into incremental purchases, incremental spend, and cannibalization. These studies articulate the quantities that motivate our work (namely, incremental revenue as an object of study), but their measurement strategy relies on self-reported counterfactual behavior.
On the other hand, we not only provide two highly-interpretable, scale-free estimands for gauging incrementality, but also develop a rigorous causal identification/estimation strategy based on realized spend amounts instead of self-reported behavior.

The closest work in terms of methodology  is \citet{kadiyala2024dual}, who study delayed-incentive gift-card promotions at a major U.S.\ department store. In their setting, customers receive promotional emails (the treatment) that allow them to earn a future gift card by spending above tiered thresholds during a qualification window. In this work, the authors exploit discontinuities in the retailer's email targeting rule to estimate local average treatment effects (LATEs) of receiving the promotional email on downstream outcomes such as purchase probability and expenditure amongst others. The authors leverage fuzzy regression discontinuity methods to identify and estimate their targets. Their analyses show that such promotions can (1) generate incremental sales through acting as an advertisement/reminder to customers, and (2) can generate sales via gift card redemption. Our paper studies a very different problem: rather than estimating LATEs of exposure to a promotional offer among customers near targeting thresholds, we aim to estimate program level incrementality measures for actual gift-card recipients. In particular, in our setting, gift-card possession is endogenously determined and systematically censored in the observational population.  We thus must develop a data-fusion approach that combines a treatment-censored observational dataset with an auxiliary experimental one in order to identify the causal effects of interest.

\paragraph{Causal Inference Under Data Fusion} 
Methodologically, our paper contributes to the literature on causal inference with multiple imperfect data sources. In the structural causal model literature, the concepts of \textit{transportability} and \textit{data fusion} ask when a causal query in a target population can be identified by combining data collected under different populations, interventions, or sampling regimes~\citep{bareinboim2016causal}. These approaches often use selection diagrams, or causal directed acyclic graphs (DAGs) augmented with nodes indicating which structural mechanisms may differ across environments, to derive graphical and algorithmic criteria for when such transfer is possible. Also, related is the literature on \textit{generalizability}, or the transport of results or parameters from a randomized controlled trial (RCT) to a target population of interest~\citep{stuart2011use,dahabreh2019generalizing}. These works typically assume transferability/transportability of  outcome regressions from the RCT to the target population for the sake of identification. In particular, they do not permit treatment missingness as is considered in this paper.  We point to \citet{colnet2024causal} for a recent review of data fusion, which discusses weighting methods, outcome-modeling, and doubly robust methods for improving external validity.

Our work can be viewed as complementary to the literature on surrogate outcomes, which focuses on the use of short-term outcomes to identify and estimate long term causal effects. \citet{prentice1989surrogate} introduces the notion of surrogacy, in which the impact of a treatment on a long-term outcome of interest is entirely captured (up to exogenous variation) by the measurement of a short-term outcome (deemed the surrogate). 
Following this original contribution, a rich literature focused on applying and extending the notions of surrogates has developed~\citep{day1996trial, begg2000use, frangakis2002principal}.
Relevant to the present paper are the contributions of \citet{athey2025surrogate}, who define the \textit{surrogate index} as the regression of long-term outcomes onto short-term surrogates and covariates. The authors show that under assumptions of unconfoundedness, surrogacy (in the sense of Prentice), and external validity, one can identify the long term ATE via the surrogate index, and thus can define natural de-biased estimators. 
\citet{ghassami2022combining} and \citet{hu2022identification} consider settings similar to \citet{athey2025surrogate}, except where long-term observational outcomes may suffer from unobserved confounding, with each work making a distinct set of assumptions to ensure identifiability of the target causal effects on the experimental data.
\citet{price2018estimation} consider a complementary problem to the estimation of treatment effects with short-term surrogates, developing methods for determining the optimal surrogate from a vector of candidates.
Recent work due to \citet{imbens2025long} extends the identification of long term outcomes to settings where there is persistent, latent confounding impacting both long and short term outcomes. The authors attack the confounding by leveraging time-series structure in multiple, short term outcomes.
Lastly, \citet{kallus2025role} discuss how surrogates can be useful in settings where there are limited long-term outcomes. They investigate how the quality of the surrogates impacts the efficiency of their estimators. 
Unlike the aforementioned works, our paper occupies an understudied niche in the data fusion/missing data literature where the \textit{treatment} is the partially-observed variable in the target population. This setting of our paper can be thought of as the mirror image of the setting for papers focused on surrogates: our methods require a historical experiment with fully-observed treatments to identify a causal effect in observational data with partially-observed treatments. 
Heuristically, our \textit{transferability} assumption, which posits that the ratios of certain conditional probabilities are shared between the experimental and observational populations, can be viewed as an analogue of the surrogacy assumption. 
Nonetheless, our results and identification argument cannot be directly embedded in the frameworks of any of the aforementioned works.

\paragraph{Causal Inference With a Missing Treatment and Causal Estimation}

Our paper is not the first to study the estimation of causal effects under treatment missingness or errors in treatment measurements. \citet{lewbel2007estimation} studies average treatment effect (ATE) estimation  when a binary treatment is observed with \textit{misclassification}, meaning that there is some probability that the treatment bit is flipped upon observation. They show that the ATE can be identified using an instrument or a second mismeasured treatment proxy together with conditional independence restrictions. \citet{braun2017propensity} study the use of propensity score methods in a similar setting of misclassified treatment assignment. They present a likelihood-based approach for correcting the bias from misclassification. Unlike these two approaches, our setting involves \textit{partially-observed treatments}, with missingness being directly tied to a customer's booking decision. A more modern approach to studying causal inference with a missing treatment is explored in \cite{zhou2024causal}. In this work, the authors consider a setting where treatment is never observed and instead two proxy variables are measured. It is assumed that these proxy measurements have no effect on the outcome conditional on the missing treatment, and it is also assumed that they are conditionally independent of one another. Under some additional causal structural assumptions, the authors prove  that one can use such proxies to identify the causal effect of the missing treatment. However, this structure is not present in our current context, so we must develop a new identification strategy.

Lastly, we mention that our work builds upon the rich literature focused on causal inference, missing data, and data fusion that has developed over the past few decades. We first mention the foundational works on semi-parametric estimation~\citep{kosorok2008introduction, newey1994asymptotic, van2000asymptotic,robins1995analysis,robins1995semiparametric,bickel1993efficient, levit1976efficiency}, doubly-robust estimation~\citep{tsiatis2006semiparametric}, targeted maximum likelihood estimation~\citep{van2006targeted, van2011targeted, van2014targeted}, and double machine learning (DML)/cross-fitting~\citep{chernozhukov2018double, chernozhukov2022automatic, chernozhukov2022long, chernozhukov2022riesznet, chernozhukov2022nested, chernozhukov2023automatic}. Our data fusion estimator, which we develop in Section~\ref{sec:estimation}, is primarily based on the latter DML literature, and we use de-biasing techniques from \citet{chernozhukov2018double} to construct our cross-fitting estimator in Section~\ref{sec:estimation}. Because our estimator is defined in terms of two samples, an experimental dataset and an observational one, we need to leverage results on de-biasing under covariate shifts, for which \citet{chernozhukov2023automatic} and \citet{guerdan2025doubly} are two examples, with the analysis of our estimator being primarily based on the former of these two works.   While we explicitly characterize all additional nuisance functions (termed Riesz representers in the DML literature) needed for de-biasing  in our work and leverage plug-in estimates, we note that tools from the literature on automatic de-biased learning could have been used to estimate these components without direct characterization~\citep{chernozhukov2022automatic, chernozhukov2022riesznet}.

\section{Empirical Context}
\label{sec:context}

The goal of this section is to provide a concrete motivation for the abstract identification arguments we will make in the following sections. We develop our causal estimation framework with the goal of evaluating the incrementality of Airbnb's gift card program.
Airbnb offers an extensive gift card program across multiple distributional channels and geographies. In our work, we aim to assess the incrementality of the program within North America (NA) across the business's four largest channels: first-party, retail online, retail in-store, and business-to-business (B2B) sales. In particular, we measure the incrementality (via the IC and IR) on both a month-to-month basis (capturing temporal trends in customer gift card redemption behavior) and on an aggregate basis averaged over an entire year as well. We further decompose our estimates into \textit{intensive} and \textit{extensive} shares, which serves to shed light on the mechanisms through which gift cards generate incremental revenue. 
We formally define all relevant quantities, estimands, and variables in Sections~\ref{sec:estimands} and \ref{sec:model} below.

To enable our analysis, we collect rich observational customer spend data for existing Airbnb customers across all months of the year 2024. For each month, our data consists of roughly 30,000 customers who have booked with a gift card, 3.5 million customers who have booked without a gift card, and 3 million customers who haven't booked at all. We note that, due to the large number of customers at Airbnb, our dataset is subsampled from a much larger super-population, and that all estimates discussed in the sequel are adjusted via Horvitz-Thompson inverse propensity weighting to account for the fact the data has been subsampled. The monthly realized sample sizes are noted in Table~\ref{tab:obs_monthly_counts} in Appendix~\ref{app:case_study:estimation}. We reveal neither the super-population size nor the subsampling rates.

In this monthly observational data, the outcome of interest is a customer's \textit{gross booking value (GBV)}, or the amount spent by a customer when booking a property through Airbnb. The endogenously-determined treatment is correspondingly \textit{whether or not a customer possessed a gift card} at the time of their first purchase in the target month.\footnote{For customers who do not make bookings, treatment status is determined by whether or not they possessed a gift card at all in the target period.} We note that it is generally not possible to observe the treatment status of all customers. Under the mild assumption that a customer who possesses a gift card will leverage it conditional on making a purchase, we will observe the gift card receipt status of units who make bookings. However, for the customers that do not make a purchase, it is not generally possible to observe gift card receipt status.
This presents an interesting yet unstudied case of missing data in which treatments are systematically masked from the business on a subset of the population.

Despite this partial observation problem, Airbnb can still reason about customer booking behavior through historical small scale experiments.
In our analysis, we leverage an experiment from 2022 that consisted of 26,980 existing customers who were selected according to the screening detailed at the end of this section. Of these participants, 50\% were assigned a gift card (the treatment group) and 50\% were not (the control group).
The treated units received an email notification detailing that they had received a free gift card from Airbnb. Half of the treated group received a digital gift card worth \$100 while the other received a card with face value of \$200.
The control group received no change to their experience at Airbnb and received neither notification nor gift card value.
The spending and booking decisions of the customers were then monitored over the course of the following year on a monthly basis --- we leverage the granularity of these measurements in the sequel. 

This particular structure of data at Airbnb --- a large observational dataset with systematically masked treatments and a small, fully-observed experiment --- motivates our identification strategy in the following sections. 

\paragraph{Additional Discussion About the Experiment}

We now provide greater detail on Airbnb's small scale experiment from 2022. In this experiment, participants were taken to be units who appeared likely to redeem a gift card.
In more detail, a \textit{focal group} was constructed, consisting of all users who had redeemed a gift card between June, 2018 and the end of December, 2021. These users were taken from a mix of both the NA (North America) and EMEA (Europe, Middle East, and Africa) geographies, with 88\% of units in the pre-selection population from NA and 12\% from EMEA. A corresponding \textit{complement group} was constructed from the same geographies, consisting of customers who made a booking during the aforementioned period and did not redeem any gift cards. A classifier was then trained using over 500 features to discriminate between these two populations.\footnote{The classifier trained was an \texttt{XGBoost} classifier with $\ell_2$-regularized logistic regressions at each leaf} The 26,980 units with the highest predicted probabilities were then chosen as candidate controls for the experiment, receiving treatment through uniform randomization as noted above.

We note that there are three primary ways in which the above experiment deviates from an ideal experiment that one would wish to leverage for our research application.
First, units considered for the experiment were subject to the pre-selection procedure noted above, which implicitly selected for units that seemed likely to redeem a gift card.
If we were able to rerun an experiment, we would remove this pre-selection step, only filtering to include recently active users (say, individuals who have logged into their accounts in the past three years) in the experiment.
Second, as noted above, the experiment leverages a mix of units from both the NA and EMEA geographies, with 88\% of units in the pre-selection population from NA and 12\% from EMEA. 
Ideally, one would run a separate experiment for each geographical region.
Lastly, one might aim to develop a method for treatment that better captures phenomena like self-gifting.
For instance, one could imagine treating some fragment of the population by offering, say, a 5\% discount on whichever property they booked. This would simulate self-gifting through third-party channels such as B2B and Retail Online, wherein units may be able to redeem points or obtain discounted gift cards.
However, because the experiment was designed for a different purpose, we view it as a best available approximation.

\section{Target Causal Estimands}
\label{sec:estimands}

We now abstractly motivate estimands that capture precisely the incrementality of a business's gift card program. We note that although our language naturally focuses on gift card programs in the sequel, the estimands and partial observation pattern we consider can naturally be applied to other business settings, with third-party coupons, partner issued credits, and referral programs being notable examples. 

Consider a population of customers interacting with the firm and let $Y$ denote the observed spend for each customer. In the status quo, where the gift card program is active, the firm collects total expected revenue $\E[Y]$ per customer. Now, consider a counterfactual reality in which the gift card program did not exist. In this setting, every customer who had received and not yet redeemed a gift card would instead generate some \emph{counterfactual} revenue. This quantity, which we denote in potential outcome notation by $Y(0)$, captures the counterfactual spend that the customer would have generated \textit{without} the card. For some customers, it may be unchanged (i.e., $Y = Y(0)$) while for some others we may have $Y(0) = 0$ whereas $Y > 0$, implying the absence of the gift card program caused the customer to not spend any money at the business. We operate under the assumption that the behavior of non-recipients would remain unchanged.\footnote{This statement implicitly rules out equilibrium or brand effects arising from the mere existence of the program; such secondary effects are ignored in our analysis. Formally, we invoke the \textit{stable unit treatment value assumption} (SUTVA): the potential outcomes for each customer are independent of the treatment assignments of other customers \citep{rubin1980randomization}.}

The expected per-customer revenue in the case of such a shutdown can be expressed as $\E[Y(0) \cdot G + Y \cdot (1 - G)]$, where $G \in \{0,1\}$ indicates a customer's gift card receipt status. We always assume that $Y=Y(0)$ when $G=0$. It thus follows that the \textit{incremental revenue}  of the program can be expressed as the following difference $\Delta_0$:
\begin{align}
\label{eq:program-value}
\Delta_0 = \E[Y] - \E[Y(0) \cdot G + Y \cdot (1-G)] = \E[(Y - Y(0)) \cdot G] 
\end{align}
While incremental revenue $\Delta_0$ gives the raw, per-customer dollar amount gained or lost due to the shut down of the gift card program, we prefer in the sequel to work with \textit{scale-free} quantities that do not involve dollar amounts. 
We thus leverage $\Delta_0$ as a building block on which to develop more actionable estimands.

One scale-free way of determining the incrementality of a business's gift card program would be to measure the expected change in revenue for each dollar of gift card value redeemed. This quantity, which we call the \emph{incrementality coefficient} (IC) and denote by $\kappa_0$, is defined formally as follows:
\begin{align}\label{eqn:ic}
\kappa_0 := \frac{\E[(Y - Y(0))\cdot G]}{\E[A\cdot G]} = \frac{\Delta_0}{\E[A \cdot G]}.
\end{align}
Here, $A$ is a random variable capturing the \textit{realized gift card spend} of a customer, and is naturally positive when the customer makes a purchase with a gift card and zero otherwise. The IC naturally plays a role in the computation of a business's profits under a toy model: if the firm incurs a cost per-dollar of gift-card transacted $c$, and keeps a margin $m$ for each dollar spent on the platform, then the firm's profit takes on the form\footnote{We note that this simple model does not capture elements of profitability calculation such as breakage, costs for alternative forms of payments, and more. We simply provide this toy model to aid in the interpretation of our estimand.}
\begin{align}
\mathrm{Profit}  = m\cdot \E[(Y - Y(0))\cdot G] - c \cdot  \E[A \cdot G] = m \cdot \Delta_0 - c \cdot \E[A \cdot G].
\end{align}
Thus, under this simple model profitability can be determined by estimating the IC and checking whether it surpasses the ratio $c/m$. \emph{The IC can therefore be heuristically interpreted as the maximal cost-per-gift-card-dollar to margin ratio for which the program is profitable.} 
Naturally, a precise computation of profitability requires a more bespoke analysis and is not the focus of this paper.

Beyond the per-dollar IC, it is useful to express the causal effect as a fraction of total revenue generated by gift card holders. 
We define the \emph{Incrementality Ratio} (IR) as the fraction of incremental revenue attributable to the existence of the gift card program:
\begin{align}\label{eqn:ir}
\rho_0 := \frac{\E[(Y - Y(0))\,G]}{\E[Y\cdot G]} = \frac{\Delta_0}{\E[Y \cdot G]}.
\end{align}
The IR has a natural interpretation in terms of cannibalization: $1 - \rho_0$ represents the fraction of revenue generated from gift card holders that corresponds to cannibalized spend, or the fraction of the revenue the firm would have earned anyway. Thus, $\rho_0$ can be viewed as the fraction of generated revenue that is truly incremental. An IR of zero means the gift card generates no additional revenue (complete cannibalization), while an IR of one means all spend by gift card holders is incremental. We note that there is no restriction that the IR be positive: there may be settings where $Y < Y(0)$ for many customers.

\section{Identifiability}
\label{sec:model}

We now formalize the assumptions under which the incrementality coefficient (IC) $\kappa_0$ and the incrementality ratio (IR) $\rho_0$ can be identified. The identification of these quantities ultimately depends on identifying the incremental revenue $\Delta_0 := \E[(Y - Y(0))G]$. 
First, in Section~\ref{sec:model:full}, we present the standard identification for the parameters under full observability of treatment. Then, in Section~\ref{sec:model:fusion}, we show how a small auxiliary experimental dataset can be used to recover identification when the treatment is censored. The remainder of the section is then focused on defining intensive and extensive margins (Section~\ref{sec:model:margins}) and discussing how gift card type-heterogeneity may play a role in identification (Section~\ref{sec:model:heterogeneity}).

\subsection{Notation and Identification under Full Observability}
\label{sec:model:full}

We model each customer's interaction with the firm as a tuple $Z = (X, G, B, Y)$, drawn from an \textit{observational distribution} $\P_{\obs}$. Here, $X \in \calX$ denotes a vector of customer characteristics. In our application to Airbnb, these include historical purchasing tendencies, past gift card usage, and other features capturing spending habits. The variable $G \in \{0, 1\}$ indicates whether the customer received a gift card, $B \in \{0, 1\}$ records whether the customer made a purchase (a booking in the case of Airbnb), and $Y \in \R_{\geq 0}$ denotes total spend. We assume throughout that customers who do not purchase spend nothing. That is, we assume $Y = 0$ whenever $B = 0$. When relevant, we will augment this tuple with a value $A \geq 0$, which represents the number of gift card dollars redeemed by a customer.

\begin{figure}[ht]
\centering
\begin{subfigure}[b]{0.45\textwidth}
\centering
\begin{tikzpicture}[node distance=1cm and 1.5cm, line width=0.7pt]
    \node[obs] (X) at (0, 2.5) {$X$};
    \node[obs] (G) at (-1.8, 0) {$G$};
    \node[obs] (B) at (0, 0) {$B$};
    \node[obs] (Y) at (1.8, 0) {$Y$};
    
    \draw[arr] (X) -- (G);
    \draw[arr] (X) -- (B);
    \draw[arr] (X) -- (Y);
    \draw[arr] (G) -- (B);
    \draw[arr] (G) to[bend right=40] (Y);
    \draw[arr] (B) -- (Y);
\end{tikzpicture}
\caption{Fully-observed data.}
\label{fig:dag_full}
\end{subfigure}
\hfill
\begin{subfigure}[b]{0.45\textwidth}
\centering
\begin{tikzpicture}[node distance=1cm and 1.5cm, line width=0.7pt]
    \node[obs] (X) at (0, 1.5) {$X$};
    \node[lat] (G) at (-1.8, 0) {$G$};
    \node[obs] (B) at (0, 0) {$B$};
    \node[obs] (Y) at (1.8, 0) {$Y$};
    \node[obs] (S) at (-0.0, -1.8) {$S$};
    
    \draw[arr] (X) -- (G);
    \draw[arr] (X) -- (B);
    \draw[arr] (X) -- (Y);
    \draw[arr] (G) -- (B);
    \draw[arr] (G) to[bend right=30] (Y);
    \draw[arr] (B) -- (Y);
    \draw[arr] (G) -- (S);
    \draw[arr] (B) -- (S);
\end{tikzpicture}
\caption{Observational data.}
\label{fig:dag_obs}
\end{subfigure}
\caption{Causal directed acyclic graphs (DAGs) for both the fully-observed data and the partially-observed observational data.}
\label{fig:causal_dag}
\end{figure}

Under full observability of $G$, all three estimands are identified under two standard assumptions. The first is a one-sided conditional exogeneity condition.

\begin{ass}[One-Sided Conditional Exogeneity]
\label{ass:exogeneity}
Under $\P_{\obs}$, we have $Y(0) \independent G \mid X$.
\end{ass}

This states that, conditional on observed characteristics, the potential spend without a gift card is independent of whether a customer actually received one. We emphasize that this is a \textit{one-sided} condition: we only require independence of the untreated potential outcome $Y(0)$, which is strictly weaker than the two-sided condition $\{Y(0), Y(1)\} \independent G \mid X$ needed for identification of causal effects such as the average treatment effect (ATE). This weaker requirement is natural in our setting, as we only need to impute counterfactual outcomes for the treated group. 
In the context of our Airbnb case study (Section~\ref{sec:case_study}), we take extensive steps to ensure this assumption is credible. We construct over 150 covariates designed to capture potential confounding, which we discuss in greater detail in our empirical section. 

The second assumption is a one-sided positivity condition. This requires that, for any covariate profile observed among gift card recipients, there exist comparable customers who did \textit{not} receive a card. In a business context, this condition would fail only if certain customer segments receive gift cards with probability one, which is unlikely in practice.

\begin{ass}[One-Sided Positivity]
\label{ass:positivity}
Under $\P_{\obs}$, we have $\P_{\obs}(G = 0 \mid X = x) > 0$ for almost every $x$ in the support of $X \mid G = 1$.
\end{ass}

Under Assumptions~\ref{ass:exogeneity} and~\ref{ass:positivity}, the incremental revenue $\Delta_0$ is identified by the standard $g$-formula~\citep{hahn1998role, imbens2004nonparametric}:
\begin{equation}
\label{eq:fully-observed-ident}
\Delta_0 = \E_{\obs}[Y G] - \E_{\obs}[\mu_0(X) G] = \E_{\obs}[GY] - \E_{\obs}[\pi_0(X)\mu_0(X)],
\end{equation}
where $\pi_0(x) := \P_{\obs}(G = 1 \mid X = x)$ is the propensity score and $\mu_0(x) := \E_{\obs}(Y \mid G = 0, X = x)$ is the outcome regression for the untreated group. Both are clearly identified when $G$ is fully observed. Noting that we can write
\begin{gather}
\label{eq:full-identification}
\kappa_0 = \frac{\Delta_0}{\E_{\obs}[AG]} \qquad \text{and} \qquad
\rho_0 = \frac{\Delta_0}{\E_{\obs}[GY]},
\end{gather}
it thus follows that all three target estimands are identified in the fully observed setting.

\subsection{Data Fusion under Censored Treatment}
\label{sec:model:fusion}

In practice, however, the firm does not observe $G$ for all customers. Gift card receipt is only revealed when a customer actually makes a purchase.\footnote{We note that many companies actually offer a gift card ``claim'' step that involves uploading the card's value to their platform. We lump this stage together with redemption/purchase for simplicity and to avoid pedantry.} For non-purchasing customers, their gift card status remains unknown. We formalize this as follows.

\begin{ass}[Partial Observation of Treatment]
\label{ass:partial-obs}
The learner observes $W = (X, S, D, B, Y)$ in samples from $\P_{\obs}$, where $S := B \cdot G$ is the indicator that a customer purchases with a gift card and $D:=B(1 - G)$ is the indicator that a customer purchases without a gift card.
\end{ass}

Again, we may augment these tuples with the variable $A \geq 0$ representing the number of gift card dollars redeemed with a purchase when this information is available.
Under this censoring, the firm observes $S$ (the indicator of a customer purchase with a gift card) and $D$ (purchase without a gift card), but never $G$ on its own.\footnote{Implicit in this assumption is that a unit who possesses a gift card and makes a purchase will always leverage the gift card in said purchase.} This obstructs the identification of both nuisances in~\eqref{eq:fully-observed-ident}. The propensity $\pi_0(x)$ is unidentified because, while we can count gift card \textit{redeemers}, we cannot count the gift card \textit{holders} who chose not to transact. The outcome regression $\mu_0(x)$ is partially obstructed: since $Y = 0$ when $B = 0$, we know that:
$$\mu_0(x) = \E_{\obs}(Y \mid G = 0, X = x)=\E_{\obs}(Y \mid G = 0, B = 1, X = x)\cdot \P_{\obs}(B=1\mid G=0, X = x).$$
Even though we can identify $\E_{\obs}(Y \mid G = 0, B = 1, X = x)$ from the observed data, we cannot determine the probability an ungifted customer purchases without knowing $G$ amongst non-purchasers. 

In our approach, we show that we can overcome the identification challenge if we also have access to samples from an auxiliary \textit{experimental dataset}, in which treatment status is fully recorded. This dataset should be thought of as small relative to the observational one. For instance, in our case study with Airbnb, the experimental dataset only has around 26,000 samples, whereas the observational data have millions.
While our identification arguments naturally only depend on the distributions over the experimental and observational samples, our estimators in the sequel will maintain valid coverage even if the number of experimental samples grows much more slowly than the number of observational ones. Formally, we assume the firm has access to independent samples from an experimental distribution $\P_{\exp}$, where observations take the same form $Z = (X, G, B, Y)$ and $G$ is observed for all customers regardless of whether or not they made a purchase.

We emphasize that the experimental data need not come from the same population as the observational data, and we do not assume that the covariate $X$ or outcome $Y$ distributions coincide across the two populations.\footnote{In fact, for our methodology, observing the spending outcome $Y$ in the experimental dataset is optional.} We only require that the \textit{ratio of conditional purchase propensities} is consistent between the observational and experimental populations, which we formalize as our main assumption.

\begin{ass}[Transferable Risk Ratios]
\label{ass:transfer}
We assume that, for almost every $x \in \calX$, we have
\[
\frac{\P_{\obs}(B = 1 \mid G = 0, X = x)}{\P_{\obs}(B = 1 \mid G = 1, X = x)} = \frac{\P_{\exp}(B = 1 \mid G = 0, X = x)}{\P_{\exp}(B = 1 \mid G = 1, X = x)}.
\]
Further, we assume $\P_{\exp}(B = 1 \mid G = 1, X = x) > 0$ for almost every $x$ in the support of $X \mid G = 1$ under $\P_{\obs}$, and that $\P_{\exp}(G = g \mid X = x) > 0$ for each $g \in \{0,1\}$ and almost every $x$ under $\P_{\exp}$.
\end{ass}

We note that one way that the above assumption would be satisfied is under the assumption that the conditional purchase probabilities in the observational dataset are an $X$-dependent multiple of the analogous probabilities under the experimental distribution, i.e., that
\begin{equation}
\label{eq:f_transfer}
\P_{\obs}(B = 1 \mid G = g, X = x) = f(x) \cdot \P_{\exp}(B = 1 \mid G = g, X = x).
\end{equation}
for all $ g \in \{0, 1\}$ and almost every $x \in \calX$. In the special case that $f(x) \equiv 1$, this corresponds to the conditional probabilities being the same between the two populations. More broadly, the multiplicative function $f(x)$ allows for heterogeneous differences in purchasing behaviors between the populations, so long as these differences do not depend on gift card receipt status. These systematic fluctuations could capture geographic or temporal differences in purchasing behavior.
These assumptions may be violated, for instance, if customers under control purchased at the same rate between the two populations, but observational units had significantly higher conditional purchase rates.

We now show that, under the above assumptions, our causal estimands remain identified despite the censored treatment. The key idea is that the experimental data provides precisely ``missing pieces'' needed to complete the standard identification formula.\footnote{Our data fusion approach shares some similarities but is distinctly different than the recent literature on using surrogates and data fusion for estimating long-term effects. We discuss the relationship in Appendix~\ref{app:surrogates}.} In particular, we define several nuisance functions that collectively bridge the gap between what is observed and what is needed. We start by providing notation for key conditional purchasing probabilities:
\begin{align*}
p_0(x) &:= \P_{\exp}(B = 1 \mid G = 0, X = x) &&\text{(experimental conditional ungifted purchase rate),} \\
q_0(x) &:= \P_{\exp}(B = 1 \mid G = 1, X = x) &&\text{(experimental conditional gifted purchase rate).} 
\end{align*}
We also define the organic, counterfactual spend amount for units who purchase without a gift card:
\begin{align*}
g_0(x) &:= \E_{\obs}(Y \mid G = 0, B = 1, X = x) && \text{(avg.\ spend of those who purchase without a card)}.
\end{align*}
We emphasize that $p_0(x)$ and $q_0(x)$ are identified from the experimental distribution, which is fully observed. Thus, by Assumption~\ref{ass:transfer}, the observational risk ratio $\P_{\obs}(B = 1 \mid G = 0, X = x)/\P_{\obs}(B = 1 \mid G = 1, X = x)$ is identified as well. The counterfactual spend amount, $g_0(x)$, is identified directly from the observational data, since the event $\{D = 1\}$ (purchasing without a gift card) is observed even under the censoring described in Assumption~\ref{ass:partial-obs}.

\begin{theorem}[Identification of the IC and IR]
\label{thm:identification}
Suppose Assumptions~\ref{ass:exogeneity}--\ref{ass:transfer} hold. Then, the incremental revenue of the gift card program is identified as
\begin{align}
\label{eq:ident}
\Delta_0 = \E_{\obs}[YS] - \E_{\obs}\left[\tfrac{g_0(X)p_0(X)S}{q_0(X)}\right]
\end{align}
Consequently, the incrementality ratio and incrementality coefficient are respectively identified as
\[
\rho_0 = \frac{\E_{\obs}[YS] - \E_{\obs}\left[\tfrac{g_0(X)p_0(X)S}{q_0(X)}\right]}{\E_{\obs}\left[Y S\right]} \quad \text{and} \quad 
\kappa_0 =\frac{\E_{\obs}[YS] - \E_{\obs}\left[\tfrac{g_0(X)p_0(X)S}{q_0(X)}\right]}{\E_{\obs}[A S]}.
\]
\end{theorem}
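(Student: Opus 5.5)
Starting from the full-observability identity \eqref{eq:fully-observed-ident}, $\Delta_0 = \E_{\obs}[YG] - \E_{\obs}[\mu_0(X)G]$, which uses only Assumptions~\ref{ass:exogeneity} and~\ref{ass:positivity}, I will rewrite both terms using quantities observable under Assumption~\ref{ass:partial-obs}, together with the experimental nuisances $p_0,q_0$.

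\textbf{First term.} Since $Y=0$ whenever $B=0$, we have $Y = YB$, so $YG = YBG = YS$. This gives $\E_{\obs}[YG]=\E_{\obs}[YS]$ directly.

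\textbf{Second term.} Write $\mu_0(x) = g_0(x)\,\bar p(x)$, with $\bar p(x) := \P_{\obs}(B=1\mid G=0,X=x)$ and $\bar q(x) := \P_{\obs}(B=1\mid G=1,X=x)$. This is the factorization displayed just before Assumption~\ref{ass:transfer}. Then
\[
\E_{\obs}[\mu_0(X)G] = \E_{\obs}\!\big[g_0(X)\bar p(X)\pi_0(X)\big].
\]
The key step is to convert the unidentified factor $\pi_0(X)$ into the observed indicator $S$. By the tower property,
\[
\E_{\obs}[S\mid X] = \E_{\obs}[BG\mid X] = \pi_0(X)\,\bar q(X),
\]
so on $\{\bar q(X)>0\}$ we have $\pi_0(X) = \E_{\obs}[S\mid X]/\bar q(X)$. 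Hence
\[
\E_{\obs}[\mu_0(X)G] = \E_{\obs}\!\left[\frac{g_0(X)\bar p(X)}{\bar q(X)}\,S\right].
\]
Applying Assumption~\ref{ass:transfer}, the ratio $\bar p/\bar q$ equals $p_0/q_0$ almost everywhere, which yields \eqref{eq:ident}. The expressions for $\rho_0$ and $\kappa_0$ then follow by dividing by the denominators. These become $\E_{\obs}[YS]$, by the first step, and $\E_{\obs}[AS]$, because $A=0$ unless the customer purchases with a card, so that $AG = AS$.

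\textbf{Main obstacle: zero-measure and support bookkeeping.} The division by $\bar q(X)$ is only legitimate where $\bar q>0$, and I handle the two regions separately. The contribution $\E[g_0\bar p\,\pi_0]$ must vanish on the set where $\bar q(X)=0$ but $\pi_0(X)>0$. On that set the transfer assumption is ill-posed as a ratio, so I will use its stated positivity clause instead: $q_0>0$ a.e.\ on the support of $X\mid G=1$. This makes the right-hand ratio finite there, and the identity is then read as $\bar p = (p_0/q_0)\bar q$. With $\bar q=0$ this forces $\bar p=0$, so the integrand vanishes, and $S=0$ a.s.\ there as well. Where $\pi_0(X)=0$, both sides vanish trivially.

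I also need $g_0$ to be well defined on the relevant support, i.e.\ $\P_{\obs}(G=0,B=1\mid X)>0$ wherever $\bar p>0$ and $\pi_0>0$. This uses positivity (Assumption~\ref{ass:positivity}); on the set where $\bar p=0$, the value of $g_0$ is irrelevant because it is multiplied by zero. Finally, the experimental positivity $\P_{\exp}(G=g\mid X)>0$ guarantees that $p_0$ and $q_0$ are identified.
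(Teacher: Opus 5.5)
Your proposal is correct and follows the paper's proof step for step. It uses $YG = YBG = YS$ for the first term. For the second term it uses the factorization $\mu_0 = g_0\,\bar p$, the Bayes identity $\pi_0(X) = \E_{\obs}[S\mid X]/\bar q(X)$ with the tower property, and Assumption~\ref{ass:transfer} to replace $\bar p/\bar q$ by $p_0/q_0$. It uses $AG = AS$ for the IC denominator.

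Your extra null-set bookkeeping reads the transferability assumption in the cross-multiplied form $\bar p = (p_0/q_0)\,\bar q$ where $\bar q = 0$. This goes slightly beyond the literal statement, which implicitly presumes $\bar q > 0$ wherever the ratio is formed. It is harmless and makes explicit a point the paper's proof passes over.
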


\begin{proof}
Our strategy is to show that the components of the fully observed identification formula outlined in Equation~\eqref{eq:fully-observed-ident}, namely $\E_{\obs}[GY]$ and $\E_{\obs}[\pi_0(X)\mu_0(X)]$, can be re-expressed in terms of the observed censored data and the three identified nuisance functions $p_0(x), q_0(x)$, and $g_0(x)$.

\textit{Step 1: Identifying the treated outcome $\E_{\obs}[GY]$.} The first component of the identification formula is the average spend among gift card recipients. Although we do not observe $G$ directly, we can recover this term by exploiting the zero-spend assumption. A customer contributes to $GY$ only if they both received a gift card \textit{and} spent a positive amount. But spending a positive amount requires making a purchase ($B = 1$). Thus $GY = G \cdot B \cdot Y = S \cdot Y$, and we have
\[
\E_{\obs}[GY] = \E_{\obs}[YS].
\]
In words: the total revenue from gift card holders equals the total revenue from gift card \textit{purchasers}, because holders who do not purchase contribute zero revenue.

\textit{Step 2: Identifying the counterfactual spend $\E_{\obs}[\pi_0(X)\mu_0(X)]$.} The next component is the expected counterfactual spend that gift card recipients would have generated without the program. Recall we had defined the outcome regression under control as $\mu_0(x) = \E_{\obs}(Y \mid G = 0, X = x)$. Since customers who do not purchase spend nothing, we can decompose this regression by conditioning on the purchasing decision $B$:
\[
\mu_0(x) = \underbrace{\E_{\obs}(Y \mid G = 0, B = 1, X = x)}_{= \, g_0(x)} \cdot \P_{\obs}(B = 1 \mid G = 0, X = x).
\]
In words, a customer's expected spend without a gift card equals their expected spend \textit{given they purchase} (the organic spend $g_0$, observed in the data) times their probability of purchasing without a card.

Next, note that Bayes rule allows us to write the observational propensity $\pi_0(x)$ as 
\[
\pi_0(x) = \P_{\obs}(G = 1 \mid X = x) = \frac{\E_{\obs}(S \mid X = x)}{\P_{\obs}(B = 1 \mid G = 1, X = x)}.
\]

These two observations combined allow us to write
\begin{align*}
\E_{\obs}[\pi_0(X)\mu_0(X)] &= \E_{\obs}\left[\frac{\E_{\obs}(S \mid X)}{\P_{\obs}(B = 1 \mid G = 1, X)} \cdot g_0(X) \P_{\obs}(B = 1 \mid G = 0, X )\right] \\
&= \E_{\obs}\left[\frac{g_0(X)p_0(X)S}{q_0(X)}\right],
\end{align*}
where the final equality follows directly from our transferability assumption (Assumption~\ref{ass:transfer}).

Combining the results of Step 1 and Step 2 yields precisely the identification for $\Delta_0$ outlined in Equation~\eqref{eq:ident}. Identification for the IR $\rho_0$ similarly follows as the denominator is just the first term in $\Delta_0$. Lastly, for the IC, we simply observe that $A=0$ whenever $B=0$ and therefore $AG=ABG=AS$. Thus the denominator in Equation~\eqref{eqn:ic} is identified in the observational data as $\E_{\obs}[A\,S]$, assuming $A$ is observed for every purchase. 
\end{proof}

\subsection{Intensive and Extensive Margins} 
\label{sec:model:margins}
Using our identification formula, we can arrive at a revealing decomposition that separates the gift card's effect on \textit{how much} customers spend (intensive margin) from its effect on \textit{whether} they purchase at all (extensive margin). To formalize this decomposition, we first make the following assumption.
\begin{ass}
\label{ass:cross_positivity}
We assume that $\P_{\obs}(D = 1 \mid X = x) > 0$ for almost all $x \in \calX$ satisfying $\P_{\obs}(S = 1 \mid X = x)$.
\end{ass}
\noindent Further, we define the conditional purchasing probabilities
\begin{align*}
p_{\obs}(x) := \P_{\obs}(B = 1 \mid G = 0, X = x) \quad \text{and} \quad 
q_{\obs}(x) := \P_{\obs}(B = 1 \mid G = 1, X = x)
\end{align*}
and let the risk ratio be given as $\eta_0(x) := p_0(x)/q_0(x) \equiv p_{\obs}(x)/q_{\obs}(x)$. Note that by adding and subtracting $\E_{\obs}[S g_0(X)]$ (which is well-defined by Assumption~\ref{ass:cross_positivity}), we can rewrite the identification formula for the incremental per-customer revenue as
\begin{align}
\label{eq:decomp_weighted}
\Delta_0 = \underbrace{\E_{\obs}\left[S\left\{Y - g_0(X)\right\}\right]}_{=: \Delta_{0, \intm}} + \underbrace{\E_{\obs}\left[Sg_0(X)\left\{1  - \eta_0(X)\right\}\right]}_{=: \Delta_{0, \extm}},
\end{align}
where $\Delta_{0, \intm}$ and $\Delta_{0, \extm}$ represent intensive and extensive margins, respectively. We shall now justify this terminology. First, note that we can rewrite the intensive component/share of incremental revenue as
\[
\Delta_{0, \intm} = \left\{\E_{\obs}[Y \mid S = 1] - \E_{\obs}[g_0(X) \mid S = 1]\right\}\P_{\obs}(S = 1),
\]
which thus represents the average \textit{lift in out of pocket spend} for those who purchase with a gift card multiplied by the marginal probability of purchasing with a gift card. This very intuitively captures the fraction of incremental revenue associated with increased spend amongst those who would have purchased otherwise. Further, by noting that $1 - \eta_0(X) = \frac{q_{\obs}(X) - p_{\obs}(X)}{q_{\obs}(X)}$ and that $\E_{\obs}[S \mid X]/q_{\obs}(X) = \pi_0(X)$, we can write
\[
\Delta_{0, \extm} = \E_{\obs}\left[\pi_0(X)g_0(X)\left\{q_{\obs}(X) - p_{\obs}(X)\right\}\right].
\]
Thus, the extensive margin of incremental revenue is a weighted average (per the propensity $\pi_0(X)$) of the increase $q_{\obs}(x) - p_{\obs}(x)$ in conditional purchasing probability times the counterfactual expected spend of a unit without a gift card $g_0(X)$. In the simplified setting where neither $q_{\obs}$ nor $p_{\obs}$ depends on $X$, this simplifies to 
\[
\Delta_{0, \extm} = \E_{\obs}[\pi_0(X)g_0(X)]\left\{\P_{\obs}(B = 1 \mid G = 1) - \P_{\obs}(B = 1 \mid G = 0)\right\},
\]
which more directly captures the lift in conditional purchasing probability provided by gift card receipt. 

Heuristically, the \textit{intensive margin} $\Delta_{0, \intm}$ answers the following question: amongst customers who would have purchased irrespective of gift card receipt, how much additional revenue does the gift card generate? The \textit{extensive margin} $\Delta_{0, \extm}$ answers the following: amongst customers who purchase \textit{only because} of the gift card, what revenue do they bring in? This decomposition is informative as it reveals the distinct ways through which gift card programs can generate incremental revenue. If the intensive margin dominates, the program primarily encourages existing purchasers to spend more. If the extensive margin dominates, the program's value comes from converting non-purchasers into purchasers. 

We note that our definitions of intensive and extensive margins above just follow from algebraic manipulations. Under a mildly stronger conditional exogeneity assumption than the one outlined in Assumption~\ref{ass:exogeneity}, one can actually show that the intensive and extensive margins associated with the incremental revenue $\Delta_0$ are precisely the \textit{direct} and \textit{indirect} effects of receiving a gift card on customer spend, respectively. 
We make this connection to mediation analysis explicit in Appendix~\ref{app:margins}, where we provide a formal causal interpretation of these margins. 

\paragraph{Decomposing the IR and IC into extensive, intensive, cannibalization shares}

While $\Delta_{0, \intm}$ and $\Delta_{0, \extm}$ give a measurement of intensive and extensive margin, respectively, their magnitude depends on the overall scale of the incremental revenue. This consequently makes them difficult to interpret in a vacuum. It is thus natural to decompose both the incrementality coefficient (IC) $\kappa_0$ and the incrementality ratio (IR) $\rho_0$ into intensive and extensive shares, which we can then compute and interpret in our application to Airbnb.

For the IC, we naturally arrive at the decomposition:
\begin{align}\label{eqn:ic_decomp}
\kappa_0 = \underbrace{\frac{\E_{\obs}\left[S\left\{Y - g_0(X)\right\}\right]}{\E_{\obs}[A G]}}_{=: \kappa_{0, \intm}} + \underbrace{\frac{\E_{\obs}\left[\pi_0(X)g_0(X)\left\{q_{\obs}(X) - p_{\obs}(X)\right\}\right]}{\E_{\obs}[A G]}}_{=:\kappa_{0, \extm}}
\end{align}
Under this decomposition, $\kappa_{0, \intm}$ can be viewed as the expected change in out of pocket spend for each dollar of gift card value redeemed amongst those who would have purchased anyway. This interpretation is further cemented by noting that
\[
\kappa_{0, \intm} = \frac{\E_{\obs}[Y \mid S = 1] - \E_{\obs}[g_0(X) \mid S = 1]}{\E_{\obs}[A \mid S  =1]}.
\]
A bit more heuristically, $\kappa_{0, \extm}$ gives the expected change in spend per dollar of gift card value redeemed amongst those who purchased because of the gift card.

One can perform this decomposition equally well for the incrementality ratio, simply replacing $\E_{\obs}[AG]$ with $\E_{\obs}[Y G]$. Further, note the fact that $1 - \rho_0$ represents the \textit{cannibalization share (CS)}, or the fraction of revenue amongst gift card recipients that would have been generated without the program. 
We can thus decompose total revenue into an intensive incremental share ($\mathrm{IIS}$), an extensive incremental share ($\mathrm{EIS}$), and a cannibalized share ($\mathrm{CS}:=1 - \rho_0$), providing a complete accounting of every dollar of GBV generated by gift card holders. These shares are respectively identified as:
\begin{align*}
\mathrm{IIS} &= \frac{\E_{\obs}[Y \mid S = 1] - \E_{\obs}[g_0(X) \mid S = 1]}{\E_{\obs}[Y\mid S=1]} &
\mathrm{EIS} &= \frac{\E_{\obs}[g_0(X)\, (1 - p_0(X) / q_0(X))\mid S=1]}{\E_{\obs}[Y\mid S=1]}
\end{align*}
\begin{equation*}
\mathrm{CS} = \frac{\E[g_0(X)\, p_0(X) / q_0(X) \mid S=1]}{\E[Y\mid S=1]}
\end{equation*}
These numbers naturally sum to one. Note that there is no restriction that the various shares are positive, and in fact in some lower incrementality channels in our case study, we will see that intensive shares of revenue can actually be negative.

\paragraph{Data fusion through the lens of the decomposition.} This decomposition of incremental revenue into intensive and extensive margin also clarifies precisely which components require the auxiliary experimental data for identification and which are identified from the observational data alone. Consider the intensive margin in~\eqref{eq:decomp_weighted}: the term $\E_{\obs}[Y \mid S = 1]$ is the average spend among observed gift card purchasers, and is thus directly computable from the firm's transaction records. The term $\E_{\obs}[g_0(X) \mid S = 1]$ averages the organic spend function $g_0(x) = \E_{\obs}(Y \mid D = 1, X = x)$ over gift card purchasers, and since $g_0$ is estimated from non-gift-card purchasers in the observational data, this too requires no experimental information. Thus, the \textit{entire intensive margin is identified from observational data alone}.

The extensive margin, by contrast, fundamentally requires the experiment. The ratio between purchase rates $\eta_0(X) = \P_{\obs}(B = 1 \mid G = 0, X)/\P_{\obs}(B = 1 \mid G = 1, X)$ cannot be recovered from the observational data because $G$ is unobserved among non-purchasers. This ratio is precisely the quantity that our transferability assumption (Assumption~\ref{ass:transfer}) allows us to borrow from the experimental distribution.  In summary, the observational data identifies \textit{what} customers spend, while the experimental data identifies \textit{how often} they purchase, and it is the latter that the censoring obscures. In Section~\ref{sec:case_study}, we estimate both components and discuss their relative magnitudes.

\subsection{Incorporating Gift Card Heterogeneity}
\label{sec:model:heterogeneity}

Up until this point, we have just been considering a unit treated if they possess an unredeemed gift card. However, there are many ways in which possession of a gift card can be viewed as a \textit{multivariate treatment}. As one example, in our identification strategy above, we do not consider heterogeneity with respect to the amount of time between the purchase and redemption of the gift card. For instance, it is highly probable that a customer who received a gift card one month ago will have a different instantaneous purchasing rate than a customer who received a gift card six months ago, even if customer attributes $X$ were the same. Another example is gift card \textit{attribute heterogeneity}, which posits that different aspects of the gift card itself (e.g., face value) may impact the rate at which customers purchase. In this section, we extend our identification strategy to account for both purchase time and attribute heterogeneity simultaneously.

We must first revisit our transferability assumption. Transferability, as outlined in Assumption~\ref{ass:transfer}, ignores a key temporal distinction between the experimental and observational distributions. In the experimental distribution, all units receive a treatment assignment at the same time. Purchasing decisions are then monitored after treatment assignment at regular intervals (e.g., every few months). On the other hand, in the observational distribution treatment corresponds to holding unredeemed gift card value at the time of first purchase in the target period. In this setting, even if two observational units possess the same characteristics, if their corresponding times between gift card receipt and the start of the target period (for which we measure outcomes) differ, it may not be reasonable to assume that their conditional purchasing probabilities during that target period are the same. Thankfully, when a gift card is redeemed, the purchase date of the card is also generally revealed to the business, so the amount of time between gift card purchase and redemption is known. We can leverage this extra observation to introduce a meaningful notion of transferability in this setting. Intuitively, for an observational customer who holds a $t$ month old gift card of attribute $a$ that has yet to be spent, the natural object to transfer is the ratio between a unit's probability of purchasing given they do not possess a gift card and the \textit{conditional purchasing hazard rate} from the $t$ month mark of the experiment for cards with attribute $a$. In other words, we want the denominator of the transferred ratio to be the conditional probability that a customer who received a gift card of attribute $a$ at the start of the experiment purchased in month $t$ given they had not yet purchased in the first $t - 1$ months.

\paragraph{Definitions and Observations}
We begin by formalizing the role of time between gift card receipt and redemption, as well as the gift card attributes. Suppose the experiment was monitored over some evenly-spaced collection of discrete periods $\calT := \{1, 2, \dots, t_{\max}\}$. Canonically, we can think of these periods as months. Let $a \in \calA = \{1, \ldots, A\}$ denote the discrete set of possible gift card attributes (e.g., face value brackets, region of purchase, original merchant of distribution). Let $\calJ=\calT \times \calA$ and define the type of a card as the tuple $\tau = (t, a)\in \calJ$ that jointly captures both the elapsed time and the card attributes. 

For an experimental sample, we let $R \in \calT \cup \{\infty\}$ denote the random time at which a customer decided to purchase, with $R = \infty$ indicating that they have not purchased by the end of the experiment. We define $B_t := \1\{R = t\}$ as the indicator of an experimental customer's first purchasing event occurring at time $t$, and $B_{< t} := \sum_{s = 1}^{t-1} B_s$ as the indicator of a customer purchasing before time $t$. Moreover, we let $G_a$ denote the indicator that an experimental customer received a gift card with attribute $a$, and we assume $G_a = 1$ for at most one $a$.

\begin{ass}[Augmented Experimental Data]
\label{ass:obs-exp-time}
The learner observes $Z = (X, \{G_a\}_{a \in \calA}, B, R)$ in samples from $\P_{\exp}$, where $R$ is as defined above.
\end{ass}

For the observational population, we define $G_\tau$ to be the indicator that, at the beginning of the target period, a customer holds an unredeemed gift card of attribute $a$, purchased $t$ periods ago. We assume $G_\tau = 1$ for \textit{at most one} $\tau$. The overall gift card indicator is then $G = \sum_{\tau \in \calJ} G_\tau$. Moreover, we define the purchasing indicators:
\begin{align*}
S_\tau &:= B \cdot G_\tau, \quad \tau \in \calJ,  & 
D & := B \cdot (1 - G).
\end{align*}
Here $S_\tau$ is the indicator that a customer purchases using a gift card of type $\tau = (t, a)$, and $D$ is the indicator of purchasing without a gift card. Note that $S = \sum_{\tau \in \calJ} S_\tau$ recovers the overall gifted-purchasing indicator defined in the previous section.

\begin{ass}[Augmented Observational Data]
\label{ass:partial-obs-type}
The learner observes $W = (X, \{S_\tau\}_{\tau \in \calJ}, D, B, Y)$ in samples from $\P_{\obs}$.
\end{ass}

\paragraph{Heterogeneous Transferability via Hazard Functions}
Our identification strategy proceeds by transferring the experimental conditional hazard of purchasing for each augmented type $\tau = (t, a) \in \calJ$:
\begin{equation}
    q_0(\tau, x) := \P_{\exp}(B_t = 1 \mid B_{<t} = 0, G_a = 1, X = x).
\end{equation}
In words, this function gives the probability that a gifted customer with covariates $x$ and a gift card of attribute $a$ purchases in window $t$ given that they have not purchased in any earlier window. Our extension of transferability posits that the probability an observational unit with $G_\tau = 1$ purchases in the target period is equal to the above hazard. At first glance, it may seem that the more natural object to transfer is the following conditional probability:
\[
q_0^{\naive}(\tau, x) := \underbrace{\P_{\exp}(B_t = 1 \mid G_a = 1, X = x)}_{\text{purchases \emph{exactly} in window } t}.
\]
However, this misses an important subtlety: if a unit has $G_\tau = 1$ in the observational setting, we know by definition that they have held their card for $t$ periods without redeeming it. This behavior needs to be embedded in the object being transferred. Since we have the identity
\[
\P_{\exp}(B_t = 1 \mid G_a = 1, X = x) = q_0(\tau, x)\bar{F}_{t-1,a}(x) \leq q_0(\tau, x)
\]
where $\bar{F}_{t-1,a}(x) := \prod_{s=1}^{t-1}\big(1 - q_0((s, a), x)\big) = \P_{\exp}(B_{<t} = 0 \mid G_a = 1, X = x)$ is the survival function, we see that leveraging $q_0^{\naive}$ can lead to survivorship bias and systematic underestimation of true conditional purchasing probabilities. For the above reasons, we consider the following definition of transferability when we want to account for time between gift card purchase and redemption and card attributes.

\begin{ass}[Type-Dependent Transferability]
\label{ass:transfer-type}
For each card type $\tau = (t, a) \in \calJ$ and almost every $x \in \calX$:
\[
\frac{\P_{\obs}(B = 1 \mid G = 0, X = x)}{\P_{\obs}(B = 1 \mid G_\tau = 1, X = x)} = \frac{\P_{\exp}(B_1 = 1 \mid G = 0, X = x)}{\P_{\exp}(B_t = 1 \mid B_{<t} = 0, G_a = 1, X= x)}.
\]
We further assume $q_0(\tau, x) > 0$ for almost every $x$ in the support of $X \mid G_\tau = 1$ under $\P_{\obs}$, and that $\P_{\exp}(B_{<t}=0, G_a = 1 \mid X = x) > 0$ and $\P_{\exp}(G = 0 \mid X = x) > 0$ for each $a\in {\cal A}$, $t\in [t_{\max}]$ and almost every $x$ under $\P_{\exp}$.
\end{ass}

Intuitively, the above assumption is matching the observational subpopulation $\{G_\tau = 1\}$, or those who hold an unredeemed card of attribute $a$ purchased $t$ periods ago, with the experimental subpopulation $\{G_a = 1,\, B_{<t} = 0\}$, or those who have likewise survived to window $t$ and are considering purchasing. 
For the untreated population, we consider the experimental \emph{one-period} purchasing rate $p_0$. If a customer is selected for the control group in the experiment, they will likely not be notified that they are participating in the experiment. Rather, their purchasing behavior will just be monitored over the course of the experiment. It thus makes sense that their purchasing rate in the first period of the experiment should match their behavior during the target period, so long as these two windows of time have the same length and untreated customer purchasing behavior is relatively stationary over time.

Under the above assumption, we can prove a heterogeneous identification analogous to the one presented in Theorem~\ref{thm:identification}. This identification now depends on the collection of hazards $\{q_0(\tau, \cdot)\}_{\tau \in \calJ}$ instead of the single counterfactual purchasing probability $q_0$.
 
\begin{theorem}[Identification under Augmented Type Heterogeneity]
\label{thm:ident-type}
Suppose Assumptions~\ref{ass:exogeneity}, \ref{ass:positivity}, \ref{ass:obs-exp-time}, \ref{ass:partial-obs-type}, and \ref{ass:transfer-type} hold. Then, the incremental revenue of the gift card program is identified as
\begin{equation}
\Delta_0 := \E_{\obs}\left[Y \sum_{\tau \in \calJ} S_\tau\right] - \E_{\obs}\left[g_0(X) \, p_0(X) \, \sum_{\tau \in \calJ} \dfrac{S_\tau}{q_0(\tau, X)}\right].
\end{equation}
Consequently, the IC $\kappa_0$ and IR $\rho_0$ are identified as well.
\end{theorem}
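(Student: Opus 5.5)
The plan mirrors the proof of Theorem~\ref{thm:identification}. We start from the fully observed $g$-formula \eqref{eq:fully-observed-ident}, $\Delta_0 = \E_{\obs}[GY] - \E_{\obs}[\pi_0(X)\mu_0(X)]$, which holds under Assumptions~\ref{ass:exogeneity} and~\ref{ass:positivity} with $G = \sum_{\tau} G_\tau$. We then rewrite each piece in terms of observables and the experimental nuisances $p_0(x) = \P_{\exp}(B_1 = 1 \mid G = 0, X = x)$ and $q_0(\tau, x)$. Here we read $p_0$ in the one-period sense appearing in Assumption~\ref{ass:transfer-type}.

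For the first term, $Y = 0$ when $B = 0$ gives $GY = GBY = \sum_\tau G_\tau B Y = Y \sum_\tau S_\tau$, so $\E_{\obs}[GY] = \E_{\obs}[Y\sum_\tau S_\tau]$. For the second term, we decompose $\mu_0(x) = g_0(x)\, p_{\obs}(x)$, where $p_{\obs}(x) = \P_{\obs}(B=1\mid G=0,X=x)$; this uses $Y=0$ off purchases, exactly as before. Note that $g_0$ is identified from $\{D=1\}$.

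The new ingredient is the propensity. Because the events $\{G_\tau = 1\}$ are disjoint, $\pi_0(x) = \sum_\tau \pi_\tau(x)$ with $\pi_\tau(x) := \P_{\obs}(G_\tau = 1 \mid X = x)$. Bayes' rule applied type by type gives
\[
\pi_\tau(x) = \frac{\E_{\obs}(S_\tau \mid X = x)}{\P_{\obs}(B = 1 \mid G_\tau = 1, X = x)},
\]
which is valid wherever the denominator is positive on the support of $X \mid G_\tau = 1$. Positivity there follows from the transferability identity together with $q_0(\tau,x)>0$, provided $p_{\obs}(x) > 0$. Multiplying by $\mu_0(x) = g_0(x) p_{\obs}(x)$ and invoking Assumption~\ref{ass:transfer-type} gives
\[
\frac{p_{\obs}(x)}{\P_{\obs}(B = 1 \mid G_\tau = 1, X = x)} = \frac{p_0(x)}{q_0(\tau,x)}.
\]
Hence $\pi_0(x)\mu_0(x) = g_0(x)p_0(x)\sum_\tau \E_{\obs}(S_\tau\mid X=x)/q_0(\tau,x)$, and the tower property yields the second term of the stated formula. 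The IC and IR then follow because the denominators are $\E_{\obs}[Y\sum_\tau S_\tau]$ and $\E_{\obs}[A\sum_\tau S_\tau]$, using $AG = AS$ as before.

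I expect the main obstacle to be the measure-theoretic bookkeeping rather than the algebra. The quantities $q_0(\tau,x)$ must be well defined as conditional probabilities under $\P_{\exp}$; this is guaranteed by $\P_{\exp}(B_{<t}=0, G_a=1\mid X=x)>0$. The transferred ratio must also be meaningful on the relevant supports, and we need to handle points where $p_{\obs}(x)=0$. At such points both sides of the product vanish, because $\mu_0(x)=0$ and the transfer identity forces $p_0(x)=0$, so the Bayes step can be bypassed. Finally, the expectations need to be finite, which I would handle by nonnegativity of $Y$ and monotone convergence.
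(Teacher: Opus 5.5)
Your proposal is correct and follows the paper's proof essentially step for step. It uses $GY = Y\sum_\tau S_\tau$ for the first term, then $\mu_0 = g_0\, p_{\obs}$ with the type-by-type Bayes decomposition $\pi_0 = \sum_\tau \E_{\obs}(S_\tau\mid X)/\P_{\obs}(B=1\mid G_\tau=1,X)$, Assumption~\ref{ass:transfer-type}, and the tower property for the second. Your extra bookkeeping at points where $p_{\obs}(x)=0$ goes beyond what the paper writes and is harmless; note, though, that nonnegativity of $Y$ (via Tonelli/monotone convergence) only justifies the interchanges and does not give finiteness, which the paper also implicitly assumes.
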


Note that under the assumption of memoryless purchasing times and no attribute dependence, i.e., when:\footnote{Which would hold if the purchasing time $R$ of an individual follows an exponential distribution and does not depend on the card attribute $a$.}
\[
q_0(\tau, x)  \equiv q_0(x) := \P_{\exp}(B_1 = 1 \mid G = 1, X = x) \qquad \text{for all } \tau \in \calJ, x \in \calX.
\]
the identification formula reduces to that of Theorem~\ref{thm:identification}, as we have $\sum_{\tau \in \calJ} \frac{S_\tau}{q_0(\tau, X)} = \sum_{\tau \in \calJ} \frac{S_\tau}{q_0(X)} = \frac{S}{q_0(X)}$. Under this assumption, it is justified to transfer \emph{one-period} (e.g., one-month) purchasing probabilities for both the treatment and control arms and apply the analysis in the previous section.

While our framework can accommodate a highly granular type space, in practice we find that the gift card face value does not substantially alter purchasing rates within the range observed in our data. Consequently, our main specification relies primarily on the time dimension $t$. 

Lastly, we note that our notions of intensive and extensive margins can also be defined under type heterogeneity. Importantly, while the definition of $\Delta_{0, \intm}$ remains unchanged, the definition of the extensive share of incremental revenue becomes
\[
\Delta_{0, \extm} = \E_{\obs}\left[\sum_{\tau \in \calJ} S_\tau g_0(X)\left\{1 - \eta_0(\tau, X)\right\}\right],
\]
where we have defined the heterogeneous ratio $\eta_0(\tau, x) := \frac{p_0(x)}{q_0(\tau, x)}$ for each $\tau \in \calJ$. The intensive and extensive share of both the IR and IC follow using this new representation for $\Delta_{0, \extm}$ in place of the homogeneous version. Note that we would need to assume an equivalent to Assumption~\ref{ass:cross_positivity} for each $\tau \in \calJ$ in order to ensure $\Delta_{0, \intm}$ and $\Delta_{0, \extm}$ are well-defined.

\section{Estimation}
\label{sec:estimation}

In the previous section, we developed an identification strategy for the incrementality coefficient (IC) $\kappa_0$ and the incrementality ratio (IR) $\rho_0$. 
We now manifest this identification into an estimation strategy.

Our identification for $\rho_0$ and $\kappa_0$ hinged upon identifying $\Delta_0$, the incremental revenue of the business's gift card program. Our approach to estimation will be exactly analogous --- we will derive confidence intervals for $\Delta_0$ first and then leverage the delta method to produce confidence intervals for our objects of interest, namely $\kappa_0$ and $\rho_0$. Any estimation strategy for $\Delta_0$ naturally requires the estimation of the organic spend function $g_0$ and the experimental conditional booking propensities $p_0$ and $q_0$. These \textit{nuisance functions} correspond to possibly highly complex predictive models. Our goal is to develop an estimation strategy that will allow us to use flexible, modern machine learning methods (such as random forests, gradient boosted forests, deep neural networks, and penalized regressions) to estimate these complex relationships, while still providing asymptotically valid uncertainty quantification and confidence interval construction. 

The most direct approach to estimating $\Delta_0$ would be to train machine learning models for these nuisance functions and simply plug their predictions into our identification formula (Theorem~\ref{thm:ident-type}). In particular, defining the plug-in observational score $m(W; g, p, q)$ as
\begin{equation}
\label{eq:base-score}
m(W; g, p, q) := Y \sum_{\tau \in \calJ} S_\tau - g(X)\, p(X)\sum_{\tau \in \calJ} \frac{S_\tau}{q(\tau, X)} 
\end{equation}
and noting that our identification formula implies that $\Delta_0 = M(g_0, p_0, q_0)$ (where $M(g, p, q) := \E_{\obs}\left[m(W; g, p, q)\right]$), one may naively believe that $\wh{\Delta} := \frac{1}{n}\sum_{i = 1}^n m\left(W; \wh{g}, \wh{p}, \wh{q}\right)$ is a good estimate of $\Delta_0$. Here, $n$ denotes the size of the observational dataset, and $\wh{g}, \wh{p},$ and $\wh{q}$ represent trained estimates for $g_0, p_0$, and $q_0$.

Unfortunately, this naive plug-in approach fails in practice. Machine learning models inherently involve regularization and variable selection, which introduces small, systematic estimation errors (bias). In a standard plug-in estimator, these small errors in the nuisance functions directly contaminate the final estimate of $\Delta_0$, introducing substantial bias and rendering standard confidence intervals invalid. To solve this, we employ the \textit{debiased machine learning} (DML) framework \citep{chernozhukov2018double}. The core idea of DML is to mathematically ``correct'' the score so that it becomes insensitive to small errors in the machine learning predictions. When a score possesses this insensitivity property, it is called \textit{Neyman orthogonal}. By using a Neyman orthogonal score, the biases from the machine learning models become second-order, allowing us to estimate $\Delta_0$ at root-$n$ rates and compute valid confidence intervals.

\subsection{Debiasing under Flexible Logistic Booking Propensity}
\label{sec:logistic-propensity}

To construct a Neyman orthogonal score, the DML framework provides a general recipe: for each nuisance function, we must append a correction term that exactly cancels the sensitivity of the base moment to small perturbations of that nuisance. This is typically achieved by adding a weighted average of the regression residuals for the corresponding regression problem that defines each nuisance function $f\in \{g,p,q\}$, i.e., for an appropriately defined weighting function $a:\calX\to R$, we add to the original moment a term of the form
\begin{align*}
\E_{e}\left[a(X) Q (C - f(X))\right]
\end{align*}
where $e\in \{\obs, \exp\}$ represents the dataset the nuisance was trained on, $C$ is the target label that $f$ is predicting and $Q$ is a selection event defining the sub-population on which the regression is trained.

In our setting, the base moment $M$ is defined as an expectation over the observational distribution $\P_{\obs}$. However, the booking propensities $p_0$ and $q_0$ are trained on experimental sub-populations. In principle, one could construct fully nonparametric corrections by estimating the density ratio $\omega_0(X) := d\P_{\obs}/d\P_{\exp}(X)$ between the observational and experimental populations (we detail this approach in Appendix~\ref{sec:neyman_construction}). However, in practice, this density ratio can take extremely large values if regions of the covariate space are under-represented in the experiment, leading to unstable estimates and wide confidence intervals. To avoid this problematic density ratio, we adopt a semiparametric approach. We assume the conditional booking propensities follow a logistic regression model with a flexible set of engineered features $\phi(X)$:
\begin{ass}
\label{ass:logistic}
There are known maps $\phi_p : \calX \rightarrow \R^{d_p}$ and $\phi_\tau : \calX \rightarrow \R^{d_\tau}$ and unknown vectors $\beta_{0, p} \in \R^{d_p}$ and $\beta_{0, \tau} \in \R^{d_\tau}$ such that
\[
p_0(x) = \sigma(\phi_p(x)^\top \beta_{0, p}) \quad \text{and} \quad q_0(\tau, x) = \sigma(\phi_\tau(x)^\top \beta_{0, \tau}) \quad \text{for all } \tau \in \calJ,
\]
where $\sigma(u) := (1 + e^{-u})^{-1}$ is the logistic function. Let $\beta_0 = (\beta_{0, p}, \{\beta_{0, \tau}\}_{\tau \in \calJ})$.
\end{ass}

Under this assumption, we can apply a correction term that directly targets the parameters $\beta$ and avoids the dependence on the density ratio. Instead of rendering the moment insensitive to arbitrary, nonparametric perturbations of the propensity functions, it suffices to cancel the derivative of the base moment with respect to the finite-dimensional parameters $\beta$. The true parameters satisfy the first-order condition of the logistic loss, meaning the expected logistic scores on the experimental data are exactly zero. 
\begin{align}
\E_{\exp}\left[s(Z;\beta)\right] = 0, \qquad \text{with } s(Z;\beta) := Q\, (C - \sigma(\beta^\top \phi(X)))\, \phi(X).
\end{align}
where $Q$ is the selection event that defines the sub-population of the experimental sample on which each of the propensities is trained and $C$ is the target binary label. For instance, for identifying $p_0$, the selection event is $Q = 1 - G$ and the label is $C = B$. Likewise, for identifying $q_0(\tau, \cdot)$ where $\tau = (t, a)$, the selection event is $Q = (1 - B_{< t}) G_a$ and the outcome is $C =  B_t$. Therefore, we can correct our moment by adding a linear combination of these experimental logistic scores, i.e., $A\E_{\exp}\left[s(Z;\beta)\right]$. The optimal linear combination matrices $A$ are derived by setting the derivative of the corrected moment with respect to $\beta$ to zero (see Appendix~\ref{sec:neyman_construction} for explicit formulae). Crucially, for this semiparametric correction to be valid, we only require a benign \textit{column space inclusion regularity condition} (again, see Appendix~\ref{sec:neyman_construction}), which is substantially weaker than (and implied by) requiring a bounded density ratio everywhere.

Applying this recipe yields a corrected score, which consists of two parts. The \textit{observational correction} debiases the organic spend model $g_0$ by appending a term proportional to its prediction residual:
\begin{equation}
\label{eq:obs_score}
\psi_{\obs}(W;  g, \beta, \alpha_g) := m(W; g, p, q) + \alpha_g(X)\,D\left\{Y - g(X)\right\},
\end{equation}
where the correction weight $\alpha_{0,g}(x) = -\P_{\obs}(G=1\mid X=x)\,/\,\P_{\obs}(G=0\mid X=x)$ is the negative odds of being a gift-card holder. While this correction $\alpha_{0, g}(x)$ is not identifiable from observational samples alone, it can be expressed as $\alpha_{0, g}(x) = -\frac{p_0(x)}{\ell_0(x)}\sum_{\tau \in \calJ}\frac{r_0(\tau, x)}{q_0(\tau, x)}$, where $r_0(\tau, x) := \P_{\obs}(S_\tau = 1 \mid X = x)$ is the probability of observing the gift-card-holder indicator for each type $\tau$ and $\ell_0(x) := \P_{\obs}(D = 1 \mid X = x)$ is the probability of making a purchase without a gift card.

The \textit{experimental correction} de-biases the booking propensities $p_0$ and $q_0$ by adding a linear combination of the corresponding logistic scores from the experimental data:
\begin{equation}
\label{eq:exp_score_para}
\psi_{\exp}(Z; \beta, A) := A s(Z; \beta) \quad \text{where}\;\; s(Z; \beta) := \big(s_p(Z; \beta_p), \left\{s_\tau(Z; \beta_\tau) : \tau \in \calJ\right\}\big), A \in \R^{1 \times d},
\end{equation}
where $s_p$ and $s_\tau$ are the logistic score vectors from the experimental control and treated arms, respectively, and the correction matrices $A$ are computed in closed form as empirical Jacobians of the base moment and Hessians of logistic scores with respect to the parameters $\beta$ (see Appendix~\ref{sec:neyman_construction}).

\subsection{Cross-Fitting Estimation Algorithm}
\label{sec:estimation-algo}

With the Neyman orthogonal moment established, we construct our final estimator. As is typical in the debiased machine learning framework, to avoid in-sample overfitting bias of the ML models, we employ $K$-fold \emph{cross-fitting}. We partition our observational data into $K$ equal-sized folds (e.g., $K=5$). For each fold $k$, we train our observational machine learning models on all the \textit{other} folds, and then generate predictions for the data in fold $k$. For the experimental data, cross-fitting is not required, as we assume the booking propensities follow a logistic model. Algorithm~\ref{alg:dml} summarizes the estimation procedure.

To streamline the presentation of the estimator, we find it convenient to define the plug-in variance estimates as
\[
\wh{\Var}_{\obs}(\phi_i(W_i)) := \frac{1}{n}\sum_{i = 1}^n \left(\phi_i(W_i) - \P_n\left\{\phi_i(W_i)\right\}\right)^2 \qquad \wh{\Var}_{\exp}(\varphi_j(Z_j)) := \frac{1}{N}\sum_{j =1}^N \left(\varphi_j(Z_j) - \P_N \left\{\varphi_j(Z_j)\right\}\right)^2
\]
for any functions $\phi_1, \dots, \phi_n$ and $\varphi_1, \dots, \varphi_N$, where $\P_n\left\{\phi_i(W_i)\right\} := \frac{1}{n}\sum_{i = 1}^n \phi_i(W_i)$ and $\P_N$ is defined similarly.

\begin{algorithm}[ht]
\caption{Cross-Fit Incrementality Estimator}
\label{alg:dml}
\begin{algorithmic}[1]
\State \textbf{Input:} Observational data $\{W_i\}_{i=1}^n$, Experimental data $\{Z_j\}_{j=1}^N$, number of folds $K$.
\State \textbf{Partition} observational data randomly into $K$ equal-sized folds $\calI_1, \dots, \calI_K$.
\For{$k = 1, \dots, K$}
    \State Train nonparametric models $\wh{g}^{(-k)}$, $\wh{r}^{(-k)}$, and $\wh{\ell}^{(-k)}$ on all data \textit{except} fold $k$.
\EndFor
\State Learn $\wh{\beta}_p$ and $\wh{\beta}_{\tau}$ via logistic regression. 
\State For each $k \in [K]$, define $\wh{\alpha}_g^{(-k)}(x) := -\frac{\wh{p}(x)}{\wh{\ell}^(-k)(x)}\sum_{\tau \in \calJ} \frac{\wh{r}(\tau, x)}{\wh{q}(\tau, x)}$, where $\wh{p}(x) = \sigma(\phi_p(x)^\top \wh{\beta}_p)$ and $\wh{q}(\tau, x) := \sigma(\phi_\tau(x)^\top\wh{\beta}_\tau).$
\State Compute correction matrix $\wh{A}$.
\State Let $\wh{U}_i := \psi^{\obs}\left(W_i; \wh{g}^{(-k)}, \wh{\beta}, \wh{\alpha}_g^{(-k)}\right)$ for $i \in \calI_k$ and $\wh{V}_j := \psi^{\exp}(Z_j; \wh{\beta}, \wh{A})$. Define $\wh{\Delta}$ as
\begin{equation}
\label{eq:estimator}
\wh{\Delta} = \frac{1}{n}\sum_{k = 1}^K\sum_{i \in \calI_k}\wh{U}_i + \frac{1}{N}\sum_{j = 1}^N \wh{V}_j.
\end{equation}
\State Define the standard error as:
\[
\wh{\Sigma} = \sqrt{\frac{\wh{\Var}_{\obs}(\wh{U}_i)}{n} + \frac{\wh{\Var}_{\exp}(\wh{V}_j)}{N}}.
\]
\State \textbf{Return:} Point estimate $\wh{\Delta}$ and $1-\delta$ confidence interval $\wh{\Delta} \pm z_{1-\delta/2}\wh{\Sigma}$.
\end{algorithmic}
\end{algorithm}

This estimator achieves asymptotically nominal coverage under typical regularity conditions in the debiased machine learning literature. Crucially, this convergence is guaranteed as long as the \textit{minimum} of the two sample sizes grows large, making no assumptions about their relative scale. This flexibility is essential for common industrial settings where observational data is abundant but experimental samples are limited. We prove a formal asymptotic normality theorem in Appendix~\ref{sec:neyman_construction}.

\paragraph{Estimation of the incrementality coefficient and ratio.}
The IC $\kappa_0 = \Delta_0 / \E[A G]$ and the IR $\rho_0 = \Delta_0 / \E[YG]$ can be estimated by directly leveraging our point estimates for $\Delta_0$ alongside the delta method. In particular, letting $\wh{\E}_{\obs}[Y S] := \frac{1}{n}\sum_{i = 1}^n Y_i S_i$ and $\wh{\E}_{\obs}[A S] := \frac{1}{n}\sum_{i = 1}^n A_i S_i$, our point estimates are naturally given by
\[
\wh{\kappa} = \frac{\wh{\Delta}}{\wh{\E}[A S]} \quad \text{and} \quad \wh{\rho} = \frac{\wh{\Delta}}{\wh{\E}[Y S]}.
\]
These estimates are asymptotically normal by the delta method, and their variance can readily be computed as well for the sake of constructing asymptotically valid confidence intervals at a desired coverage level. In more detail, define $H_\rho(W) := Y\cdot S$, $H_\kappa(W) := A \cdot S$, and let $r \in \{\kappa, \rho\}$ be arbitrary. Define the standard error
\[
\wh{\Sigma}_r := \sqrt{\frac{\wh{\Var}_{\obs}(\wh{U}_i - \wh{r}H_r(W_i))}{n\wh{\mu}_r^2} + \frac{\wh{\Var}_{\exp}(\wh{V}_j)}{\wh{\mu}_r^2 N}}.
\]
Then, an asymptotically valid $1 - \delta$ confidence interval for $r_0 \in \{\kappa_0, \rho_0\}$ is given by
\[
\calC_{1 - \delta} := \left[\wh{r} - z_{1 - \delta/2}\wh{\Sigma}_r, \wh{r} + z_{1 - \delta/2}\wh{\Sigma}_r\right].
\]

We provide full details on the exact construction of these estimates and the corresponding intervals in Appendix~\ref{sec:neyman_construction}.

\section{Empirical Application to Airbnb's Gift Card Program}
\label{sec:case_study}

We now apply our causal framework to the task of estimating the incrementality of Airbnb's gift card program. We start by providing a brief roadmap.  In Section~\ref{sec:case_study:empirical_setup}, we briefly remind the reader of both the observational and experimental datasets we are working with, rigorously defining treatment and outcomes. In Section~\ref{sec:case_study:main_results}, we present and interpret our main findings regarding the incrementality of Airbnb's gift card program, providing channel-by-channel results. In Section~\ref{sec:case_study:margin_effects} we decompose both our incrementality coefficient (IC) and incrementality ratio (IR) estimates into intensive and extensive margins to study how gift card receipt separately impacts customer spend and booking behavior. In Section~\ref{sec:case_study:baselines}, we compare our estimates to naive baselines that one might construct for the incrementality ratio.
Lastly, in Section~\ref{sec:case_study:self_gift}, we recompute estimates for the IC and IR on a restricted population of suspected self-gifters.

\subsection{Data and Empirical Setup}
\label{sec:case_study:empirical_setup}

We provided a detailed overview of both our experimental and observational datasets in the empirical context section (Section~\ref{sec:context}). Now, with both identification and estimation details at hand, we can complete our discussion of these datasets by rigorously describing how our variables are defined in both the experimental and observational populations. We also discuss the details of estimation, including nuisance model selection and fitting.
Throughout this section, we operate under the \textit{type-heterogeneous} identification framework presented in Section~\ref{sec:model:heterogeneity} above, considering just \textit{time heterogeneity}, meaning we take into account the number of months between when a customer received and redeemed a gift card.

Throughout our application, we take the time index as the set $\calT = \{1, 2, \dots, 11, 12\}$, meaning that we monitor customer booking hazard rates on the basis of months, which thus matches the length of the experimental monitoring window. 

\paragraph{Observational Dataset}
Throughout this description, we suppose we are focusing on a single, North American channel (e.g., B2B or First-Party) and month in 2024. In our analysis, we exclude all units who booked with a gift card from a channel other than the current one being considered. Here, $B \in \{0, 1\}$ indicates  whether or not a customer made a purchase in the target month. While most customers only make at most a single booking in any given month, some customers may make multiple bookings --- we view multiple bookings as a single, large booking event. We define the treatment $G_t \in \{0, 1\}$ as the indicator of whether or not a unit possessed a gift card that was purchased between $t - 1$ and $t$ months ago at the time of their first purchase in the focal period.\footnote{If a unit did not make a purchase but possessed a gift card,  we let $G_t$ be the indicator of whether or not they possessed a gift card of age $t$ by the end of the current month. Note that we never actually observe $G_t$ for these units.}
While some units may possess multiple gift cards, we always use the age of the \textit{youngest gift card}. If a unit does not possess a gift card at the time of their first booking, we set $G = 0$. If a unit's youngest gift card is greater than twelve months old, we set $G_{12} = 1$. The outcome variable $Y$ is defined as the \textit{gross booking value (GBV)} or \textit{total amount spent} on all properties booked by a customer in the focal month.
Lastly, $A$ is defined to be the number of gift card dollars \textit{redeemed} by a unit. It is thus zero if either (a) a customer does not use a gift card in a purchase or (b) if they do not make a purchase at all.

For these units, we measure a rich covariate vector $X$ that consists of 153 features. These features are roughly organized into three categories: (i) \emph{gifting-related features}, capturing historical gift card behavior over the preceding three years;
(ii) \emph{historical spend and booking features}, measuring the volume, frequency, and dollar amount of uncanceled bookings over the previous three years; and (iii) \emph{product detail page (PDP) features} capturing historical browsing intensity and price signals from listing views. All of these features are defined prior to the first booking made by a unit in the period (or prior to the start of the period for non-booking units). Full details on these features can be found in Appendix~\ref{app:case_study:features}. 

\paragraph{Experimental Dataset}

A unit's treatment status $G \in \{0, 1\}$ is defined as whether or not they were assigned a gift card at the start of the experiment. The booking variable $B_t \in \{0, 1\}$ is then naturally defined as the indicator of whether or not a unit made their first booking between $t - 1$ and $t$ months after receiving a gift card, and the variables $B_{< t}$ are naturally defined in terms of the $B_t$'s as outlined in Section~\ref{sec:model:heterogeneity}.
For covariates $X$, we measure a set of 23 features that balance representativeness across all three groups noted above --- again see Appendix~\ref{app:case_study:features} for a full list. These features are defined for each unit prior to the start of the experiment, i.e., all measurements are taken over a period ending immediately before treatment assignment.

We note some differences between how treatment and features are defined in the experimental and observational datasets. First, in the experimental dataset, treatment is determined by random assignment at a single point in time and then booking outcomes are monitored monthly over the following year. In the observational data, treatment corresponds to \emph{possession} of an unredeemed gift card of age $t$ months at the time of booking. Our transferability assumptions do not require that treatment is specified in precisely the same manner between the populations. Further, our use of the time-heterogeneous identification/estimation strategy is meant to handle these discrepancies by, very heuristically, pairing observational units who possess a gift card of age $t$ with experimental units who make their first booking at the $t$ month time point. For observational features $X$, we note that it is generally not possible to know precisely when a  unit received their gift card, particularly amongst those who did not book. Thus, we opt to define features in a way that is maximally consistent across units in the observational dataset.  

\subsection{Main Results: Incrementality Coefficients and Ratios}
\label{sec:case_study:main_results}

We now report our main findings, which describe the Incrementality Coefficient (IC) and Incrementality Ratio (IR) segmented by the primary distribution channels in North America: First-Party (Airbnb's own platform),  Retail Online, Retail In-Store, and Business-to-Business (B2B). In this section, we show incrementality estimates aggregated over all months of 2024 in Table~\ref{tab:allunits}. In Figure~\ref{fig:au_main} of Appendix~\ref{app:case_study:monthly}, we further show our estimates plotted month-by-month in order to measure temporal changes and stability over time. Again, yearly aggregated numbers are obtained by first computing the monthly estimates in the manner described above and then appropriately averaging the influence functions per Appendix~\ref{sec:neyman_construction:time_avg}.

\begin{table}[H]
\centering
\begin{tabular}{lcccc}
\toprule
& \multicolumn{2}{c}{\textbf{IR}} & \multicolumn{2}{c}{\textbf{IC}} \\
\cmidrule(lr){2-3}\cmidrule(lr){4-5}
\textbf{Channel} & \textbf{Estimate} & \textbf{95\% CI} & \textbf{Estimate} & \textbf{95\% CI} \\
\midrule
First-Party      & $-0.199$ & $[-0.544,\ 0.146]$ & $-0.676$ & $[-1.856,\ 0.505]$ \\
Retail Online    & $0.204$  & $[-0.006,\ 0.415]$ & $0.428$  & $[-0.019,\ 0.875]$ \\
Retail In-store  & $-0.049$ & $[-0.265,\ 0.168]$ & $-0.131$ & $[-0.746,\ 0.485]$ \\
B2B              & $0.221$  & $[0.035,\ 0.406]$  & $0.545$  & $[0.083,\ 1.008]$ \\
\bottomrule
\end{tabular}
\caption{Yearly estimates and marginally valid 95\% confidence intervals for the IR and IC broken down by channel.}
\label{tab:allunits}
\end{table}

It is clear from Table~\ref{tab:allunits} that incrementality varies highly by channel. The yearly point estimates for both the IC and IR are respectively negative and neutral for First Party and Retail In-Store channels. On the other hand, the incrementality point estimates for Retail Online and B2B channels are positive, with estimates for the IR indicating that roughly 20\% of GBV amongst gift card recipients is attributable to the existence of the gift card program. Further, the confidence intervals for both the IC and IR estimates in B2B exclude zero, indicating that our findings of incrementality in this channel are statistically significant. We note that the confidence intervals for all channels are relatively wide, which is likely due to the limited sample size of the experimental sample. We hypothesize that, given a larger experiment, we would find that Retail Online would also be significantly incremental as well. 

At first, our findings here seem somewhat unintuitive. The channels exhibiting the highest incrementality (B2B and Retail Online) are arguably the ones most susceptible to self-gifting phenomena. This is because they often offer gift cards that are effectively discounted, say via rewards programs or loyalty programs. As noted earlier in the paper, one might naively believe that self-gifters are fundamentally non-incremental, simply cannibalizing and replacing potential out of pocket spend with discounted gift card value. However, this intuition is not captured by our findings. We return to investigating the incrementality of self-gifting customers in Section~\ref{sec:case_study:self_gift} below. We provide a more qualitative discussion on the incrementality of self-gifters in Section~\ref{sec:discuss}. 

\subsection{Intensive and Extensive Margin Decomposition}
\label{sec:case_study:margin_effects}

In the previous subsection, we studied the general incrementality of Airbnb's gift card program across channels. We now aim to determine the mechanisms through which gift cards impact customer spend behavior.
To understand the drivers of incrementality, we decompose our incrementality estimates into intensive margins (increased spend among those who would have booked anyway) and extensive margins (new bookings induced by the gift card).  We display the channel-by-channel decomposition of the IC and IR into intensive and extensive components in Figure~\ref{fig:au_decomp}, and Table~\ref{tab:au_ir_shares} displays point estimates for intensive and extensive shares of IR alongside marginally valid 95\% confidence intervals. Further, for the IR, we display our estimate for $1 - \rho_0$, which represents the fraction of GBV amongst those who received a gift card that was cannibalized. 

Our first interesting finding occurs by looking at our estimates of intensive margins. While we see that the intensive margin amongst First-Party gift card recipients is essentially zero, for all other channels the estimates are positive (being around 0.17-0.19 for IC and 0.06-0.09 for IR). Further, the 95\% confidence intervals for these estimates all exclude zero, indicating that the finding of positive intensive margins is statistically significant. 
We note that, regardless of channel, the confidence intervals for the intensive margins are significantly smaller than those for both the extensive margins and the overall incrementality estimates. This makes sense per our discussion in Section~\ref{sec:model:margins} --- the intensive shares/margins \textit{are identifiable from the observational population alone}, and hence the small standard errors are a consequence of the overall large dataset sizes.

Our second interesting finding is that the point estimates for extensive margins appear to be mildly negative in the First-Party and Retail In-Store channels, but positive in Retail Online and B2B channels. 
For the former two channels, these results seem to indicate that gift card receipt actually has the effect of \textit{decreasing} the probability of a unit booking in said channels.
We note that it is important to interpret these point estimates with a grain of salt, as in all channels the confidence intervals for the extensive margins are wide and contain zero. 
In reality, we believe that extensive margins should be non-negative regardless of channel: receipt of a gift card should at worst keep a unit's probability of booking the same. We speculate that extensive margins are actually high amongst units who hold recently purchased gift cards, with the intuition being that they are either self-gifting\footnote{And thus that they would have instead booked with an outside option had the gift card program not existed.} or that they are experiencing some sort of advertising effect. We investigate this later in Section~\ref{sec:case_study:self_gift}, where we re-estimate both intensive and extensive margins while restricting our attention to units who redeem a gift card within a week of purchase date.

\begin{figure}[H]
\centering
\begin{subfigure}{0.46\textwidth}
  \centering\includegraphics[width=\textwidth]{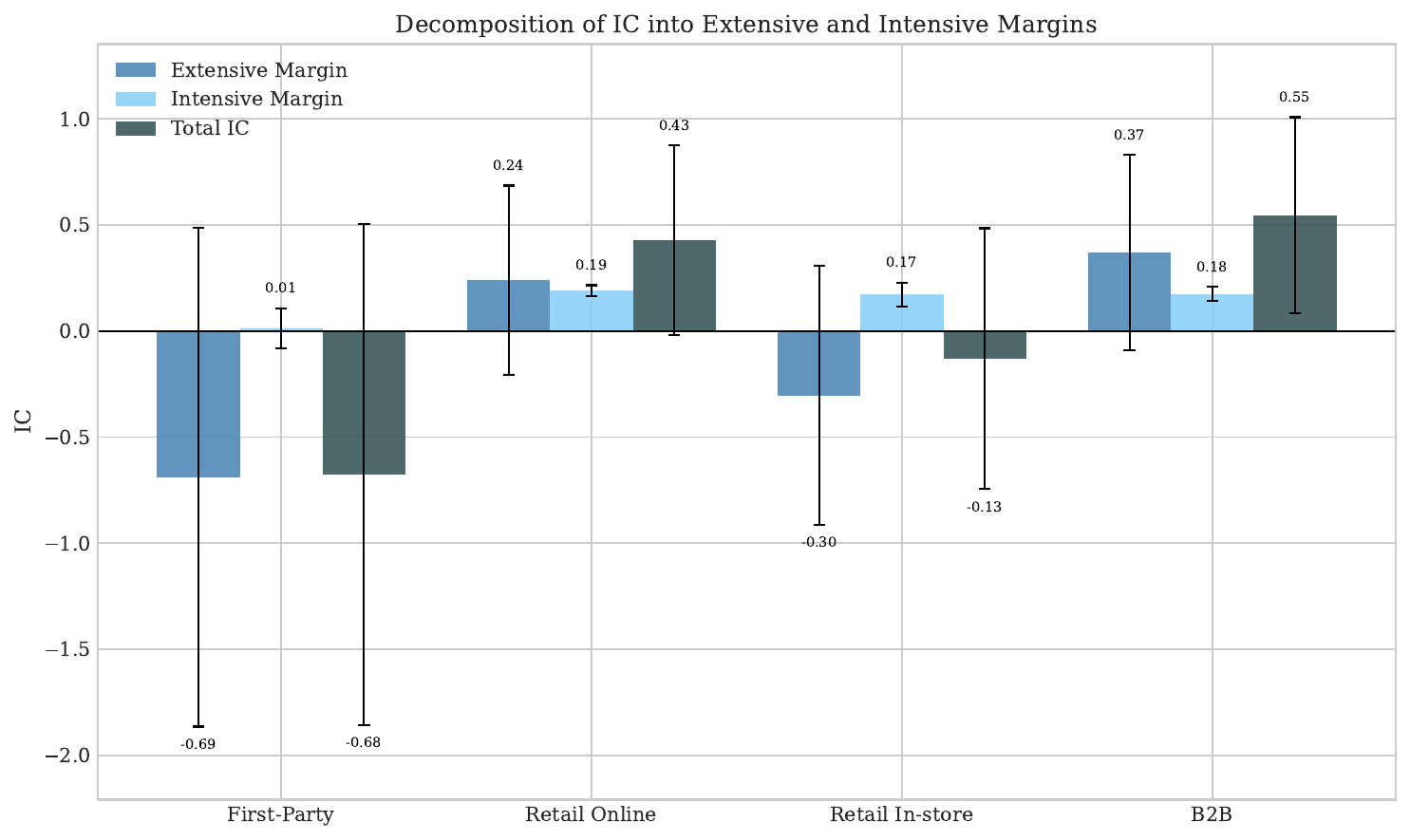}
  \caption{IC into intensive and extensive margins.}
\end{subfigure}\hfill
\begin{subfigure}{0.53\textwidth}
  \centering\includegraphics[width=\textwidth]{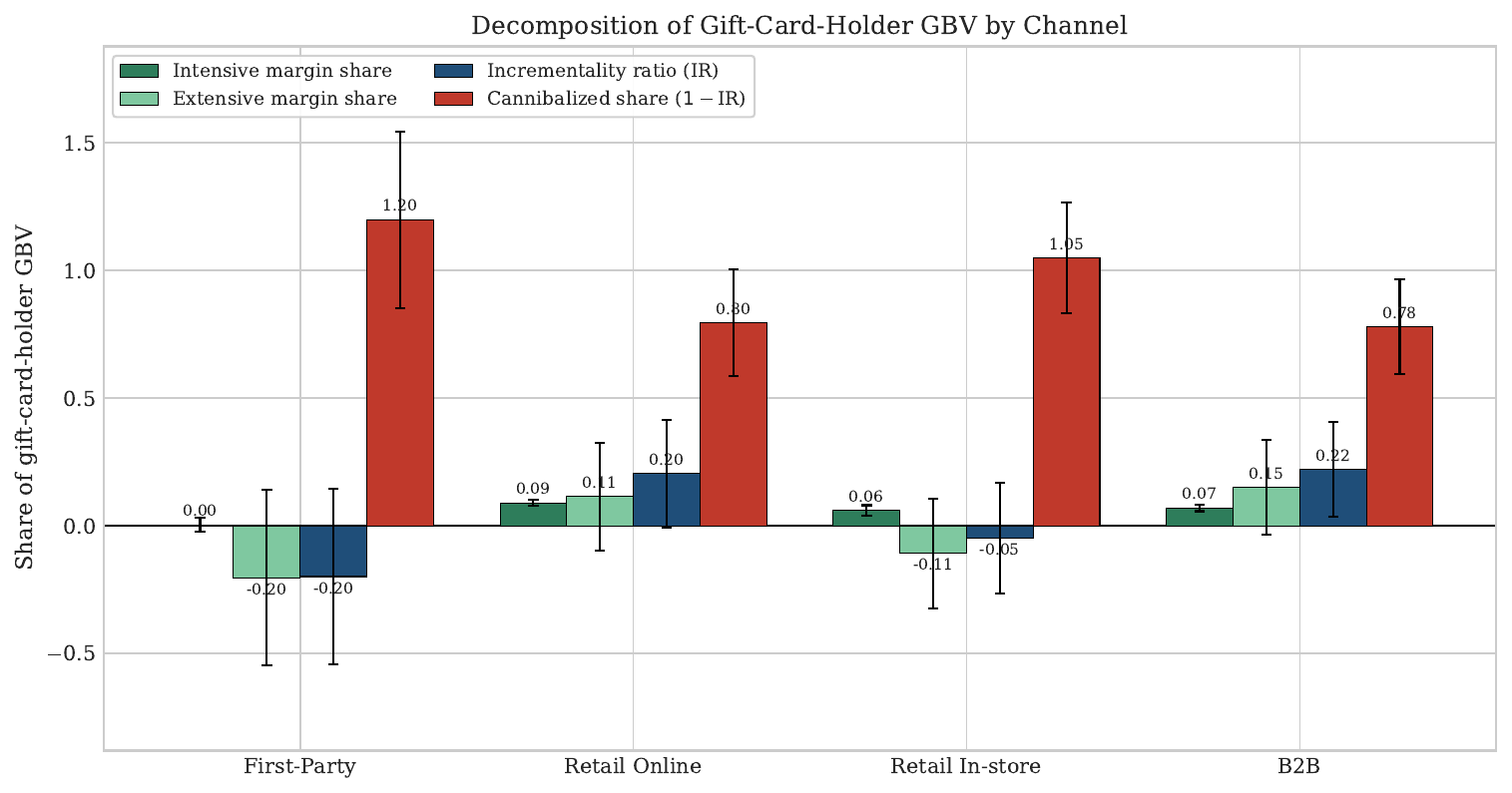}
  \caption{IR into intensive, extensive and cannibalised shares.}
\end{subfigure}
\caption{All units: decompositions, yearly time averages with 95\% confidence intervals. The
sign-safe grouped layout is used because the extensive component is negative in two channels; a
stacked layout would clip it at the axis. For the same reason no stacked cannibalisation figure is
shown for this population.}
\label{fig:au_decomp}
\end{figure}

\begin{table}[H]
\centering
\begin{tabular}{lcccc}
\toprule
& \multicolumn{2}{c}{\textbf{Intensive-Margin Share}} & \multicolumn{2}{c}{\textbf{Extensive-Margin Share}} \\
\cmidrule(lr){2-3}\cmidrule(lr){4-5}
\textbf{Channel} & \textbf{Share} & \textbf{95\% CI} & \textbf{Share} & \textbf{95\% CI} \\
\midrule
First-Party     & $0.004$ & $[-0.024,\ 0.032]$ & $-0.203$ & $[-0.546,\ 0.140]$ \\
Retail Online   & $0.089$ & $[0.077,\ 0.102]$  & $0.115$  & $[-0.096,\ 0.325]$ \\
Retail In-store & $0.060$ & $[0.040,\ 0.081]$  & $-0.109$ & $[-0.324,\ 0.106]$ \\
B2B             & $0.071$ & $[0.057,\ 0.084]$  & $0.150$  & $[-0.035,\ 0.335]$ \\
\bottomrule
\end{tabular}
\caption{All units: intensive- and extensive-margin shares of gift-card-holder GBV, with 95\%
confidence intervals (the margin-share intervals that Figure~\ref{fig:au_decomp} suppresses). }
\label{tab:au_ir_shares}
\end{table}

\subsection{Comparison with Baselines}
\label{sec:case_study:baselines}

We now compare our estimates for the incrementality ratio with several natural baselines that can be constructed from the observational and experimental datasets. These approaches represent naive approaches one may take in gauging incrementality, and we show how they can lead to misleading or incorrect conclusions regarding the incrementality of a gift card program within a target channel. We now describe the approaches, and emphasize that none of the baselines provides a valid estimate for the observational IR.

\begin{enumerate}
\item \textbf{Observational naive intensive margin:} one can compute the marginal difference between the expected spend of a unit who books with a gift card and one who books without: $$\frac{\E_{\obs}[Y \mid S = 1] - \E_{\obs}[Y\mid D{=}1]}{\E_{\obs}[Y\mid S{=}1]}.$$
\item \textbf{Experimental IR:} One can directly compute the analogue of the IR in the experiment by differencing conditional spend between those who received a gift card and those who did not:
$$\frac{\E_{\exp}[Y \mid G = 1] - \E_{\exp}[Y\mid G{=}0]}{\E_{\exp}[Y\mid G{=}1]} = \frac{\E_{\exp}[Y \mid G = 1] - \E_{\exp}[Y(0) \mid G = 1]}{\E_{\exp}[Y \mid G = 1]} = \frac{\E_{\exp}[(Y - Y(0))G]}{\E_{\exp}[Y G]}.$$ 

\item \textbf{Experimental extensive margin:} We can measure an effect that just attempts to predict extensive margin, replacing $Y$ with the one month booking indicator $B_1$ in the definition of the IR: 
$$\frac{\P_{\exp}(B_1  = 1 \mid G = 1)  - \P_{\exp}(B_1{=}1\mid G{=}0)}{\P_{\exp}(B_1{=}1\mid G{=}1)} = \frac{\E_{\exp}[(B - B(0))G]}{\E_{\exp}[BG]},$$
where $B(0)$ represents a unit's counterfactual booking decision had they not received a gift card.
\item \textbf{Experimental intensive margin:} We define the natural experimental intensive margin as well, which is just the lift in spend conditional on booking. This is the same as the observational naive intensive margin, just replacing $\E_{\obs}$ by $\E_{\exp}$.
\item \textbf{Unadjusted data fusion:} We can additionally operate under the assumption of \textit{no heterogeneity} and compute the best constant estimates of $g_0, p_0,$ and $q_0$. We can do this in both the memoryless and time-heterogeneous settings. 
\end{enumerate}

\paragraph{Comparison with experimental baselines.} We first compare our estimates to the point estimate for the IR in the experimental dataset. We note that this estimate (which is 0.076) is contained in the IR confidence intervals across four channels. This indicates that our IR estimates are not drastically larger or smaller than what one would expect from an experiment. The confidence interval for the experimental IR also contains zero, which indicates that the experimental data alone cannot confirm the incrementality of the gift card program. In particular, if one had relied on experimental estimates alone, our finding that B2B is significantly incremental would not have been supported.

Continuing with our comparison to baselines defined on the experiment, we can look at naive analogues of intensive and extensive margins. The former simply looks at the relative difference between the average realized spend for those who booked with and without a gift card (normalized by the former quantity). The latter correspondingly looks at the relative lift in one month booking probability (ignoring weighting by spend) between those who had a gift card and those who did not.  We note that, unlike the IR, the stated definitions \textit{do not} represent the corresponding direct and indirect effects of receiving a gift card. The point estimate for the naive intensive margin in the experiment is mildly negative (-0.07), which indicates that units who booked with gift card, on average, spent \textit{less} than those who booked without a gift card. On the other hand, our estimates indicate significantly positive intensive margins regardless of channel. We note that the point estimate for the experimental extensive share (0.138) is approximately the same as the extensive shares in B2B and Retail Online (0.150 and 0.115, respectively). However, we note that these quantities are not directly comparable, as the latter weights gaps in conditional booking probabilities by both the probability of booking and the expected natural spend.

\paragraph{Comparison with observational baselines.}

We first look at unadjusted (i.e., covariate independent) estimates of the IR produced assuming memorylessness transferability (i.e., Assumption~\ref{ass:transfer}) and assuming time-heterogeneous transferability (i.e., Assumption~\ref{ass:transfer-type}). We observe that the memoryless estimates are much larger than our proposed estimates, ranging from 0.38-0.43, depending on the channel. We view these estimates with a bit of skepticism, especially the large estimates in channels like First-Party (0.379) and Retail In-Store (0.354) where we believe true-gifting behavior to be common. For these channels, we actually expect the IR estimate from the experiment (0.076) to be accurate, as the treatment assignment mechanism was reminiscent of true-gifting phenomena. The time-heterogeneous unadjusted estimates (the third row of Panel A of Table~\ref{tab:allunits_baselines_ext}) provide much more conservative estimates of incrementality, and they are generally closer to our own estimates for the IR in terms of magnitude (i.e., within a margin of error of 0.1-0.15 in all but First-Party). In particular, the unadjusted time-heterogeneous estimate for the IR in First-Party is within 0.015 of the corresponding experimental estimate. However, for this estimate to yield valid inference, one would need to make strong, unconditional exogeneity assumptions on the treatment, which we find hard to justify outside of experimental settings. We also note that the unadjusted estimates for observational intensive margins, hovering between 0.25-0.32, are drastically larger than the corresponding \textit{adjusted} estimates, which cap off at 0.089 in Retail Online sales.

\begin{table}[H]
\centering
\resizebox{\textwidth}{!}{%
\begin{tabular}{lcccc}
\toprule
\multicolumn{5}{l}{\textbf{Panel A. Observational and data-fusion baselines} (IR [95\% CI])} \\
\midrule
\textbf{Estimator} & \textbf{First-Party} & \textbf{Retail Online} & \textbf{Retail In-store} & \textbf{B2B} \\
\midrule
Obs.\ intensive margin only   & 0.279 [0.269, 0.290]  & 0.285 [0.279, 0.290]  & 0.251 [0.245, 0.256]  & 0.324 [0.319, 0.329] \\
Unadj.\ fusion (memoryless)   & 0.379 [0.343, 0.414]  & 0.383 [0.349, 0.417]  & 0.354 [0.318, 0.389]  & 0.417 [0.385, 0.449] \\
Unadj.\ fusion (time-het.)    & 0.090 [0.051, 0.129]  & 0.232 [0.201, 0.264]  & 0.091 [0.054, 0.129]  & 0.315 [0.287, 0.343] \\
\textbf{Proposed --- IR}      & \textbf{-0.199} [-0.544, 0.146] & \textbf{0.204} [-0.006, 0.415] & \textbf{-0.049} [-0.265, 0.168] & \textbf{0.221} [0.035, 0.406] \\
\quad Proposed --- intensive share & 0.004 [-0.024, 0.032] & 0.089 [0.077, 0.102] & 0.060 [0.040, 0.081] & 0.071 [0.057, 0.084] \\
\quad Proposed --- extensive share & -0.203 [-0.546, 0.140] & 0.115 [-0.096, 0.325] & -0.109 [-0.324, 0.106] & 0.150 [-0.035, 0.335] \\
\midrule
\multicolumn{5}{l}{\textbf{Panel B. Experiment-only baselines} ( IR [95\% CI])} \\
\midrule
\textbf{Estimator} & \multicolumn{4}{c}{\textbf{All channels}} \\
\midrule
Experimental IR                & \multicolumn{4}{c}{0.076 [-0.010, 0.163]} \\
Experimental extensive margin  & \multicolumn{4}{c}{0.138 [0.090, 0.185]} \\
Experimental intensive margin  & \multicolumn{4}{c}{-0.071 [-0.152, 0.010]} \\
\bottomrule
\end{tabular}%
}
\caption{Naive estimates for both the IR and intensive and extensive shares alongside our proposed estimates. Panel A shows observational baselines, which generally overestimate incrementality and intensive margins. Panel B shows experimental estimates. We note that the experimental estimate for the IR does yield valid inference, but only for incrementality under $\P_{\exp}$.}
\label{tab:allunits_baselines_ext}
\end{table}

\begin{figure}[H]
\centering
\includegraphics[width=\textwidth]{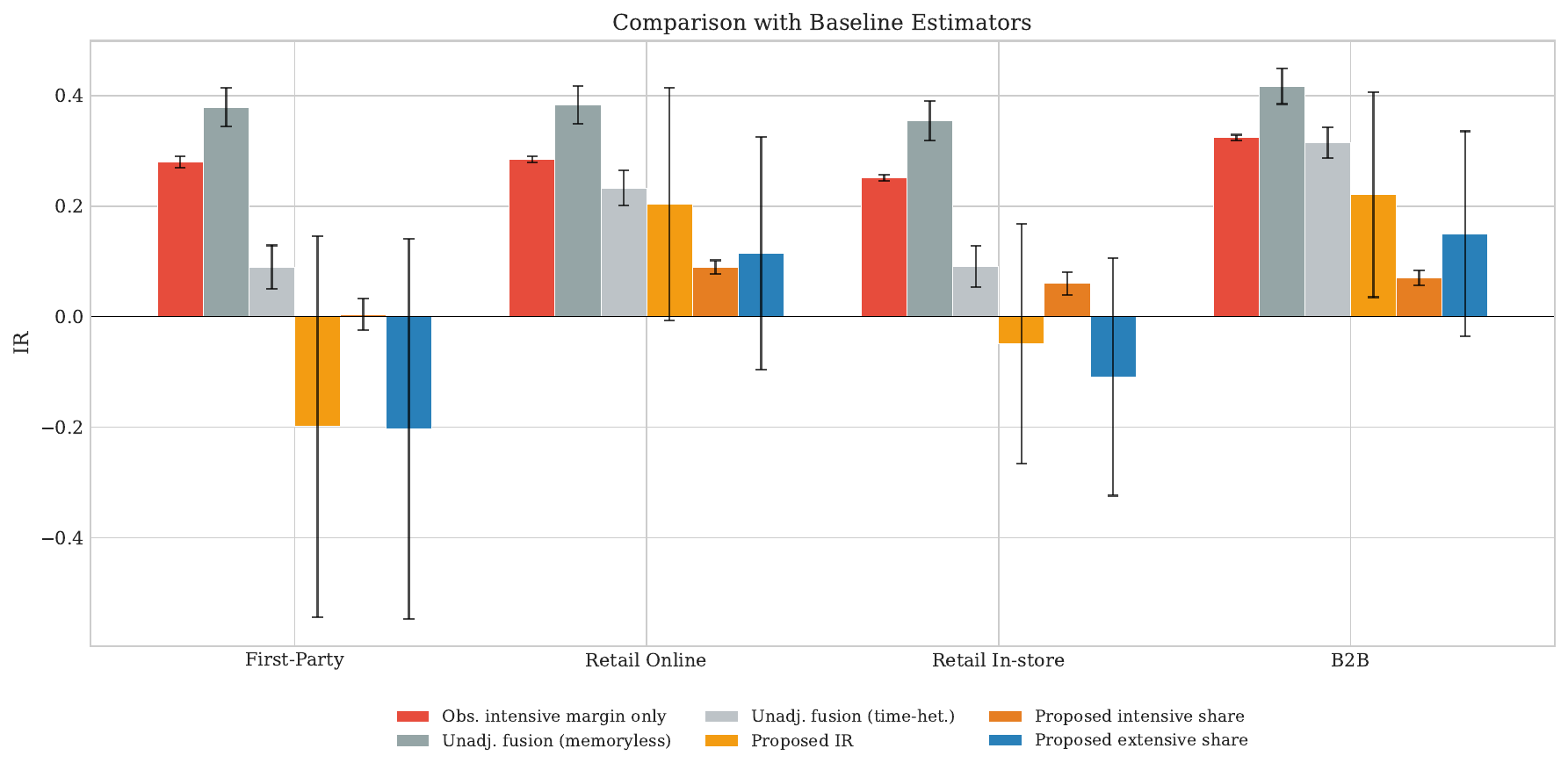}
\caption{A visualization of the observational baselines from Table~\ref{tab:allunits_baselines_ext} alongside 95\% confidence intervals.}
\label{fig:aucomp_baselines}
\end{figure}

\subsection{Incrementality Amongst Self-Gifters}
\label{sec:case_study:self_gift}

We conclude our empirical analysis by discussing how self-gifting impacts incrementality at Airbnb. Recall that, in our analysis, a \textit{self-gifter} is defined to be a customer who purchases their own gift cards to use towards a booking. This self-purchase can be motivated by perceived discounts or benefits (e.g., the ability to redeem credit card points for gift cards via certain B2B platforms), but does not strictly need to be. 
The potential contribution of self-gifters to the incrementality of a gift card program is not a priori obvious. On one hand, a firm may expect self-gifters to be non-incremental. This could be because they are likely to cannibalize out of pocket spend, finding discounted gift cards through third-party channels and redeeming them for pre-planned trips. On the other hand, one could expect these units to be highly incremental, only purchasing accommodations through Airbnb over a competitor because there were discounted gift cards available.
Based on the results from the previous section, one might speculate that the latter interpretation is correct. In particular, we have seen the highest incrementality in channels such as Retail Online and B2B, in which it is easiest for customers to leverage promotions or redeem loyalty points to obtain gift cards.

There is thus a natural question we aim to answer: is self-gifting actually incremental for businesses such as Airbnb?
We aim to answer this question in a more principled manner in this section. In particular, we show yearly estimates for both the IC and IR restricted to units who are suspected of self-gifting, with monthly estimates being presented in Appendix~\ref{app:case_study:monthly}. As before, we also interpret our findings in terms of intensive and extensive margins.

In general, it is impossible to tell if any given customer at Airbnb obtained their gift card through self-gifting. However, amongst units who book with a gift card, there exists a very natural proxy for self-gifting: the number of days between when a gift card was purchased and redeemed.
Throughout the sequel, we define any unit whose date of gift card \textit{redemption} occurs within a week of purchase to be a self-gifter. 
The intuition behind this definition is simple: self-gifters almost surely have pre-existing travel intent, and thus will likely purchase their gift cards in a short window prior to booking.
Likewise, when a unit receives a gift card through a true-gifting event (e.g., for a birthday or holiday), it is likely that either said gift card was purchased in advance (say, during holiday shopping) or that the giftee needs more than a week to plan their trip. 
While the proxy is not bullet-proof and there almost certainly exist some units who redeem a self-gifted card that is more than a week old (such as customers who self-purchase when they see a good discount), we see that the channels we earlier hypothesized offered the highest discounts in fact exhibit the highest rates of self-gifting (between 30-50\% depending on the month) under our definitions --- see Table~\ref{tab:sg_prevalence_yearly} below. 

\begin{table}[H]
\centering
\begin{tabular}{lcccc}
\toprule
\textbf{Channel} & \textbf{Min} & \textbf{Mean} & \textbf{Median} & \textbf{Max} \\
\midrule
First-Party        &  6.9 & 11.6 & 10.4 & 23.0 \\
Retail Online      & 25.4 & 45.8 & 49.3 & 56.1 \\
Retail In-store    & 13.7 & 20.4 & 20.4 & 24.4 \\
B2B                & 42.8 & 50.2 & 51.0 & 55.0 \\
\midrule
Cross-channel mean & 22.3 & 32.0 & 32.8 & 39.2 \\
\bottomrule
\end{tabular}
\caption{Yearly prevalence of suspected self-gifting by channel in percentage points. Additional aggregate statistics on monthly prevalence rates are displayed as well.}
\label{tab:sg_prevalence_yearly}
\end{table}

We now discuss our findings. We display our yearly estimates for incrementality amongst suspected self-gifters in Table~\ref{tab:sg_headline}. We see that estimates for both the IR and IC are generally much higher for self-gifters, with IR estimates being 2 and 1.5 times larger in Retail Online and B2B respectively. The IR estimates range from 0.342 in B2B to 0.504 in Retail In-Store. In other words, between a third and a half of the GBV generated by self-gifters in the considered channels is attributable to the existence of the gift card program itself.

One interesting finding is that the standard errors (and hence the confidence intervals) for our estimates are significantly smaller when restricted to suspected self-gifters. At first, this may appear counterintuitive: we are constructing our estimates with \textit{fewer} samples, and hence should have less confidence. However, this reasoning ignores the fact that we weight counterfactual outcomes for booking units by the inverse hazard rate $1/q_0(\tau, X)$. The hazard function generally \textit{decreases} as $\tau$ increases (i.e., units have a smaller per-month probability of booking), which ultimately reduces standard error despite the smaller sample size. 
Another interesting finding is that B2B, which offered the greatest estimate for the IR when all units were considered, actually exhibits the lowest IR when estimates are restricted to self-gifted units. 
Further, Retail In-Store is actually the \textit{most incremental} for self-gifters, despite having a near neutral estimate for the IR when the entire population is considered. 
This inversion is likely due to the sheer difference in suspected self-gifting prevalence between these two channels: around 50\% of units in B2B are considered to be self-gifting under our proxy, whereas only around 20\% of units self-gift for Retail In-Store.

\begin{table}[H]
\centering
\begin{tabular}{lcccc}
\toprule
& \multicolumn{2}{c}{\textbf{All units}} & \multicolumn{2}{c}{\textbf{Self-gifters only ($<7$d)}} \\
\cmidrule(lr){2-3}\cmidrule(lr){4-5}
\textbf{Channel} & \textbf{IR} & \textbf{95\% CI} & \textbf{IR} & \textbf{95\% CI} \\
\midrule
First-Party      & $-0.199$ & $[-0.544,\ 0.146]$ & $\mathbf{0.367}$ & $[0.253,\ 0.481]$ \\
Retail Online    & $0.204$  & $[-0.006,\ 0.415]$ & $\mathbf{0.408}$ & $[0.250,\ 0.565]$ \\
Retail In-store  & $-0.049$ & $[-0.265,\ 0.168]$ & $\mathbf{0.504}$ & $[0.354,\ 0.654]$ \\
B2B              & $0.221$  & $[0.035,\ 0.406]$  & $\mathbf{0.342}$ & $[0.207,\ 0.478]$ \\
\midrule
\textbf{Channel} & \textbf{IC} & \textbf{95\% CI} & \textbf{IC} & \textbf{95\% CI} \\
\midrule
First-Party      & $-0.676$ & $[-1.856,\ 0.505]$ & $\mathbf{0.934}$ & $[0.640,\ 1.227]$ \\
Retail Online    & $0.428$  & $[-0.019,\ 0.875]$ & $\mathbf{0.719}$ & $[0.441,\ 0.997]$ \\
Retail In-store  & $-0.131$ & $[-0.746,\ 0.485]$ & $\mathbf{0.899}$ & $[0.631,\ 1.167]$ \\
B2B              & $0.545$  & $[0.083,\ 1.008]$  & $\mathbf{0.798}$ & $[0.482,\ 1.114]$ \\
\bottomrule
\end{tabular}
\caption{Comparing yearly estimates of the IR and IC for all channels for self-gifters to estimates for all units. Both point estimates and marginally valid 95\% confidence intervals are shown.}
\label{tab:sg_headline}
\end{table}

Another interesting finding regarding our incrementality estimates is that both the intensive and extensive shares of the IR are much higher for self-gifters. Estimates for intensive margins increase by roughly 0.18 for First-Party and Retail In-Store, but increase to a lesser extent for Retail Online and B2B. However, what is especially interesting is that the estimates for extensive margins jump \textit{dramatically} when we restrict our attention to self-gifters. In particular, these margins are around 0.25-0.26 in Third-Party channels, indicating that there is (roughly) a 25\% increase in a unit's conditional probability of booking. We interpret this finding as follows. In general, when either all units or just self-gifting units are considered, gift cards appear to have a moderate effect on increasing a customer's out-of-pocket spend conditional on booking. However, gift cards seem to have a \textit{drastic positive effect} on a unit's conditional booking probability within the first month of receipt. For self-gifters, this might be a result of the discounts available from many Third-Party channels causing them to select Airbnb over an outside option. Further, if a unit observes that a discounted card is available through B2B or Retail Online channels, there could be some sort of advertising effect, either introducing or reminding said units about the services offered by Airbnb.

\begin{figure}[H]
\centering
\begin{subfigure}{0.46\textwidth}
  \centering\includegraphics[width=\textwidth]{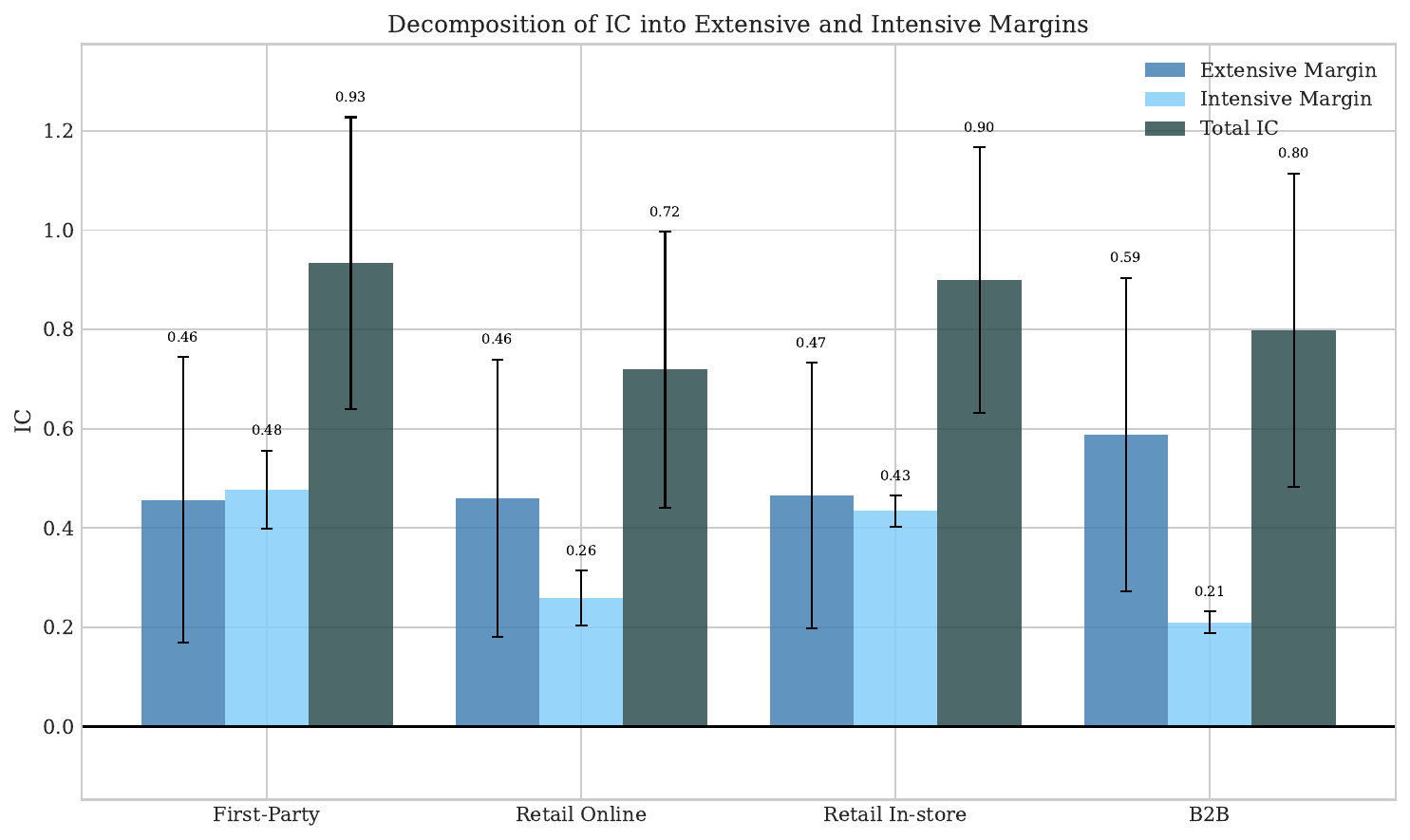}
  \caption{Decomposed IC.}
\end{subfigure}\hfill
\begin{subfigure}{0.53\textwidth}
  \centering\includegraphics[width=\textwidth]{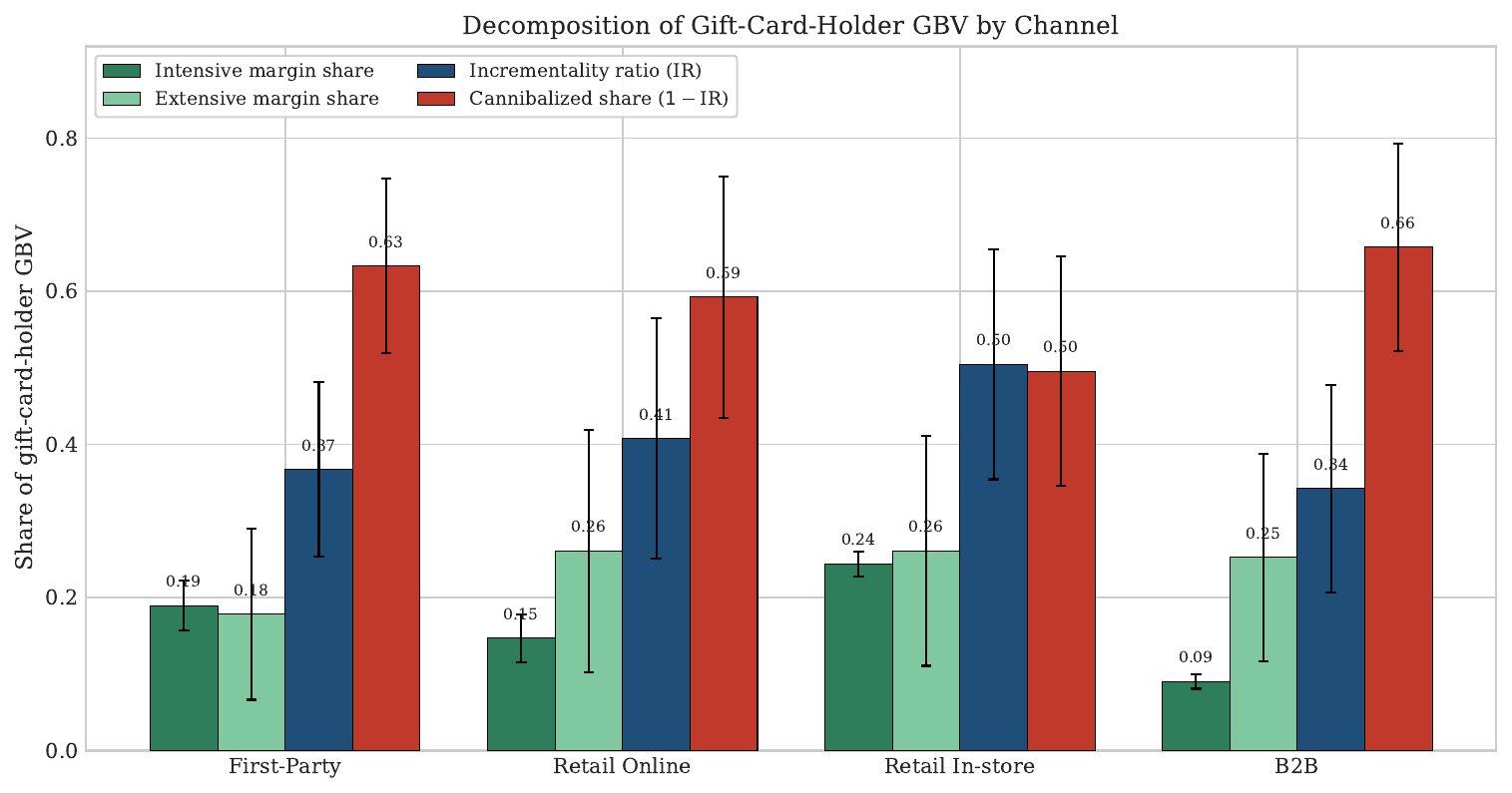}
  \caption{Decomposed IR.}
\end{subfigure}
\caption{Decompositions of the IC and IR, respectively, restricted to suspected self-gifters. Shown are both point estimates and  95\%
confidence intervals.}
\label{fig:sg_decomp}
\end{figure}

\begin{table}[H]
\centering
\resizebox{\textwidth}{!}{%
\begin{tabular}{lcccc}
\toprule
& \multicolumn{2}{c}{\textbf{Intensive-Margin Share}} & \multicolumn{2}{c}{\textbf{Extensive-Margin Share}} \\
\cmidrule(lr){2-3}\cmidrule(lr){4-5}
\textbf{Channel} & \textbf{All units} & \textbf{Self-gifters only} & \textbf{All units} & \textbf{Self-gifters only} \\
\midrule
First-Party     & 0.004 [-0.024, 0.032] & 0.189 [0.157, 0.222] & -0.203 [-0.546, 0.140] & 0.178 [0.066, 0.289] \\
Retail Online   & 0.089 [0.077, 0.102]  & 0.147 [0.116, 0.178] & 0.115 [-0.096, 0.325]  & 0.261 [0.103, 0.419] \\
Retail In-store & 0.060 [0.040, 0.081]  & 0.244 [0.228, 0.260] & -0.109 [-0.324, 0.106] & 0.261 [0.111, 0.411] \\
B2B             & 0.071 [0.057, 0.084]  & 0.090 [0.081, 0.099] & 0.150 [-0.035, 0.335]  & 0.252 [0.117, 0.388] \\
\bottomrule
\end{tabular}%
}
\caption{A comparison of intensive and extensive shares of the IR when computed on all units and suspected self-gifters, respectively. Again, point estimates are displayed alongside 95\% confidence
intervals.}
\label{tab:ir_shares}
\end{table}

\section{Qualitative Discussion of Findings}
\label{sec:discuss}

We now provide a brief, qualitative interpretation of our empirical findings from the previous section.
Appendix~\ref{app:discuss} provides an extended discussion of the behavioral mechanisms underlying our findings, the validity of our causal assumptions, and the relationship between incrementality and downstream profitability.

One important finding was that, although only B2B exhibited statistically significant incrementality, all third-party channels had small but positive intensive margins.  In words, we found that, conditional on a customer booking, the gift card program provides a small yet noticeable lift in terms of a customer's spend.
The possibility of such a lift on a customer's spend has been noted in existing works on gifting. For instance, \citet{norvell2017gift} note the possibility of a business having `` customers who would have visited
the brand anyway without the card, but end up spending more money
because of it.'' While their survey results regarding the gift card program at a national casual dining chain find that only 10\% of customers fall into this positive intensive margin category, these customers were highly incremental, spending approximately \$38.00 with a gift card where they would have spent only \$28.00 otherwise. 
Likewise, the qualitative literature on mental accounting \citep{thaler1985mental, prelec1998red, white2006format} notes that customers may end up spending more due to viewing the received gift as potential savings or may feel a lower ``pain of paying'' when leveraging gift card money. 
Our findings and methodology provide rigorous support for the insights of the aforementioned works, at least when restricted to third-party channels of gift card distribution. We also further speculate that, given a larger experimental sample, we would find both Retail Online and Retail In-Store to be positively incremental. Since intensive margins were identifiable from observational data alone, our method was generally able to produce small confidence intervals. However, extensive margins (and hence the IR and IC) depend on the experimental sample for identification, and hence the quality of our estimates for these parameters is fundamentally limited by the experimental sample size. This is further exacerbated by the fact that we divide by the conditional booking hazard $q_0(\tau, x)$, which can become quite small for some units booking in later months (e.g., for units with $\tau = 11$ or $\tau = 12$).

Our most surprising finding was that incrementality estimates (both the IC and IR) were highest when we restricted our estimation to suspected self-gifters (see Section~\ref{sec:case_study:self_gift}). We saw that all channels, even including First-Party and Retail In-Store, were highly incremental, with IR estimates indicating that one third to one half of GBV among such units was attributable to the existence of the gift card program. Not only did self-gifters genuinely spend more (i.e., had higher intensive margins), but they also exhibited a significantly higher extensive margin, or lift in booking rate. 
For these units, it is not generally reasonable to assume that the lift in booking probability comes solely from some sort of an advertising effect. In fact, many such units likely possess preexisting travel intent and then search for gift cards with effective discounts, e.g., those offered through rewards programs or those purchasable at businesses offering cash back.\footnote{We note that such advertising effect pathways are possible in self-gifting scenarios. For instance, a unit could come across a discounted gift card at a store, which thus may induce a desire to travel.}
In the absence of an advertising effect, one might expect an economically rational individual to have zero extensive margins --- the value on a gift card is generally fungible with raw cash that could be put towards a trip, and hence booking behavior should not change. 
However, this reasoning fails to capture implicit or explicit discounts that are applicable to gift cards obtained through third-party providers. We provide a toy behavioral economic model in Appendix~\ref{app:discuss:econ_model} in which positive extensive margins are realizable even \textit{without} any intensive margins. This model is not assumed throughout the paper, and is meant to merely be illustrative as to how such phenomena may arise in practice. 
Further, extensive margins may be positive even without an advertising effect if some individuals do not possess an instrument (such as a credit or debit card) that can be used to  make an online purchase. For these units, gift cards may be the only means of booking travel with Airbnb.
In sum, our findings indicate that self-gifters are \textit{highly-incremental}  for a business's gift card program, offering incremental visits/purchases and thus high extensive margins. 

The incrementality associated with effective gift card discounts and self-gifting also helps us interpret why First-Party generally exhibits the lowest incrementality. In this channel, there were no incentives (e.g., cash back programs, discounts, or reward points) for self-gifting in the study period. Thus, any lift to extensive margins should only come solely from true-gifting and the corresponding advertising effect.
Naturally, a business would desire for their first-party channels to be the most incremental, as these channels tend to exhibit lower cost structure.\footnote{As \citet{norvell2017gift} note, third-party channels generally acquire gift cards at lower than market rates. Sometimes, these channels obtain gift cards at 10-15\% discounts.}
Perhaps one means by which a business could bolster the extensive margins in their first-party offerings would be to occasionally offer discounted gift cards or to offer gift cards to users who spend over a certain threshold.
In other words, the business could attempt to improve incrementality in this channel by offering comparable value propositions and customer benefits to third-party channels.

Finally, we emphasize two qualifications to the discussion of our empirical findings. First, the causal interpretation of our estimates naturally depends on the assumptions outlined in Section~\ref{sec:model}, namely conditional exogeneity and transferability, being satisfied. 
In Appendix~\ref{app:discuss:assumptions}, we discuss the primary threats to these assumptions in the context of our application, including latent differences in self-gifting behavior and travel intent, inherent limitations in feature construction, and differences between the experimental and observational populations. Second, our analysis measures the \textit{incrementality} of gift card programs rather than their ultimate \textit{profitability}. Translating the IC and IR into profits additionally requires channel-specific information on quantities such as distribution costs, gift card discounts, transaction fees, and breakage. We discuss this distinction further in Appendix~\ref{app:discuss:profit}.

\section{Conclusion}
\label{sec:conclusion}

In this paper, we developed a formal causal framework for measuring the incrementality of a business's gift card program. We proposed two scale-free quantities for measuring the incrementality of a gift card program: the \textit{incrementality ratio}, which quantified the fraction of the program's incremental revenue attributable to the existence of the gift card program, and the \textit{incrementality coefficient}, which captured the average number of incremental dollars generated for each one dollar of gift card value redeemed. 
While neither quantity is identified in terms of the observational data a business would have access to (as it is generally not possible to know which units possess gift cards prior to redemption), we showed that identification becomes possible when the business has access to a mostly-unrelated, small-scale experiment.
These two data sources only had to be linked through a weak \textit{transferability} assumption, which posits that the \textit{ratio between certain conditional booking probabilities} is invariant between the two populations. We then derived a natural cross-fitting estimator for the aforementioned effects and showed how a learner may use flexible, ML estimators for estimating unknown nuisances.

We then applied our estimation/identification framework to a mix of observational and experimental booking data at Airbnb. In particular, we measured the incrementality of their gift card program across four large North American channels of distribution. We found that third-party channels (B2B, Retail In-Store, and Retail Online) generally exhibited higher incrementality than First-Party, with the incrementality in B2B being statistically significant.
Because our method also provided a natural decomposition into intensive and extensive margins, we were able to see that all third-party channels offered moderately positive yet significant boosts to intensive margins.  Further, we found the unexpected result that self-gifters, or units likely to have purchased gift cards for themselves, are actually highly incremental, with a large share of their incrementality contributing to extensive margin shares.

We conclude by briefly discussing future research directions that we believe to be of high impact to literature on gifting. One practically-relevant direction would be to develop policy learning algorithms for optimally assigning customers to promotional gift cards. Similarly, one could aim to understand the \textit{effect of targeting},  or the gap between the value of the optimal gift card assignment policy and a baseline policy, e.g., the one that assigns all units to the ``control'' status of not receiving a gift card. 
Lastly, one may want to develop confounding-robust identification strategies for the estimands considered in this paper, such as difference-in-differences approaches that leverage parallel trends or more general panel data methods.
Each of these approaches would naturally need to handle the  partial-observation problem encountered throughout this paper, and thus may require nontrivial identification arguments.

\paragraph{AI Usage Statement}
The authors used generative AI tools to assist with searching for related work, checking theorem statements and identifying minor errors in proofs, writing code (including code used to produce figures and tables), and copyediting. The authors reviewed and verified all AI-assisted output and remain fully responsible for the content of this manuscript.

\bibliography{bib}
\bibliographystyle{plainnat}

\appendix

\section{Causal Interpretation of Intensive and Extensive Margins}
\label{app:margins}

\newcommand{\DE}{\mathrm{DE}}
\newcommand{\IE}{\mathrm{IE}}

In this appendix, we formally connect the intensive and extensive margins $\Delta_{0, \intm}$ and $\Delta_{0, \extm}$ to the natural \textit{direct} and \textit{indirect} effects that would arise in a formal causal mediation analysis. We show this equivalence under the assumption of \textit{conditional exogeneity of the booking event}, which is slightly stronger than the assumption of conditional independence of the potential outcomes $Y(0)$ and the treatment $G$. 

To formally define the aforementioned direct and indirect effects, and to posit our assumptions, we need to introduce a broader set of potential outcomes. First, we define \textit{the potential booking decision} $B(g)$ and the indicator of a customer's booking decision had they been experimentally assigned to a gift card receipt status $g \in \{0, 1\}$. Further, we extend our indexing for \textit{potential outcomes}, letting $Y(g,b)$ denote the realized spend amount if we were to intervene and force a user to have gift card status $g \in \{0, 1\}$ and booking status $b \in \{0, 1\}$. We use the short-hand notation $Y(g)=Y(g, B(g))$, which corresponds to the potential outcome when we only intervene on the gift card receipt. We make the following consistency assumption. 

\begin{ass}
\label{ass:consistency_extended_potential_outcomes}
These potential outcomes are consistent: $Y = Y(G, B)$ and $B= B(G)$.%
\end{ass}

Using our notation, we can decompose the incremental revenue into natural direct and indirect effects. Recall that the incremental revenue was defined as
\[
\Delta_0 = \E_{\obs}[G(Y - Y(0))] = E_{\obs}\left[G\big(Y(1, B(1)) - Y(0, B(0))\big)\right],
\]
where the second equality follows from the assumption of consistency. The direct and indirect effects are now obtained by subtracting and adding $\E_{\obs}[G Y(0, B(1))]$:
\begin{align*}
\Delta_{0, \DE} &= \E_{\obs}\left[G\big(Y(1, B(1)) - Y(0, B(1))\big)\right]\\
\Delta_{0, \IE} &= \E_{\obs}\left[G\big(Y(0, B(1)) - Y(0, B(0))\big)\right].
\end{align*}
In words, the direct effect captures the expected change in spend amongst gift card recipients were their booking decisions to remain fixed but their gift card receipt status to be changed from $G = 1$ to $G = 0$. Intuitively, this quantity naturally corresponds to the intensive margin, as it isolates the effect of gift card receipt on spend keeping booking decisions frozen. Likewise, the indirect effect captures the residual change in spend due to a customer's booking decision changing when they are experimentally assigned to $G = 0$. Very heuristically, this can be viewed as measuring a weighted change in a customer's booking probability between when they are assigned a gift card and when they are not, i.e., something like an extensive margin. We now make these analogies between direct/indirect effects and intensive/extensive margins more rigorous.

\begin{lemma}
\label{lem:mediation}
Suppose Assumptions~\ref{ass:exogeneity}, \ref{ass:positivity}, \ref{ass:partial-obs}, \ref{ass:transfer}, \ref{ass:cross_positivity}, and \ref{ass:consistency_extended_potential_outcomes} hold. As before, suppose that the potential outcome for spend is zero under the no-booking intervention, i.e., $Y(g, 0)=0$ almost surely. Lastly, suppose that the following conditional mean condition holds\footnote{We note that this is implied by the more substantive assumption that $Y(g, b) \independent (B, G) \mid X$ for all $g, b \in \{0, 1\}$ under $\P_{\obs}$.}:
\begin{align}
\label{eq:mean-swap}
\E_{\obs}(Y(0, 1) \mid X = x, S = 1) = \E_{\obs}(Y(0, 1) \mid X = x, D = 1) \quad \P_{\obs}\text{-almost surely.}
\end{align}
Then:
\begin{align*}
\Delta_{0, \DE} &= \E_{\obs}\left[S\left\{Y - g_0(X)\right\}\right] \equiv \Delta_{0, \intm}\\
\Delta_{0, \IE} &= \E_{\obs}\left[S g_0(X)\left\{1 - \eta_0(X)\right\}\right]\\
&=\E_{\obs}\left[\pi_0(X)g_0(X) \big(q_{\obs}(X) - p_{\obs}(X)\big)\right] \equiv \Delta_{0, \extm}.
\end{align*}
\end{lemma}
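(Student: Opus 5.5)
Since $\Delta_{0,\DE}+\Delta_{0,\IE}=\Delta_0=\Delta_{0,\intm}+\Delta_{0,\extm}$ (the first by construction, the second by Theorem~\ref{thm:identification} and~\eqref{eq:decomp_weighted}), it suffices to prove the direct-effect identity. The indirect-effect identity then follows by subtraction, and the final equality for $\Delta_{0,\extm}$ was already derived in Section~\ref{sec:model:margins}. For robustness, I will also verify $\Delta_{0,\IE}$ directly by the analogous computation.

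\textbf{Direct effect.} Write $\E_{\obs}[G\,Y(1,B(1))]=\E_{\obs}[GY]=\E_{\obs}[SY]$, using consistency and $Y=0$ when $B=0$, exactly as in Step~1 of Theorem~\ref{thm:identification}. For the second term, consistency gives $G\,Y(0,B(1))=G\,Y(0,B)$ on $\{G=1\}$. Since $Y(0,0)=0$, this equals $G B\, Y(0,1)=S\,Y(0,1)$. Conditioning on $X$ and $S$ then gives
\[
\E_{\obs}[S\,Y(0,1)]=\E_{\obs}\big[S\,\E_{\obs}(Y(0,1)\mid X,S=1)\big]=\E_{\obs}\big[S\,\E_{\obs}(Y(0,1)\mid X,D=1)\big]
\]
by the mean-swap condition~\eqref{eq:mean-swap}. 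This step is well defined on the relevant support by Assumption~\ref{ass:cross_positivity}. On $\{D=1\}$ we have $G=0$ and $B=1$, so consistency gives $Y(0,1)=Y$, and hence $\E_{\obs}(Y(0,1)\mid X,D=1)=g_0(X)$. Therefore $\Delta_{0,\DE}=\E_{\obs}[S\{Y-g_0(X)\}]=\Delta_{0,\intm}$.

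\textbf{Indirect effect, direct route.} From the above, $\E_{\obs}[G\,Y(0,B(1))]=\E_{\obs}[S g_0(X)]$. It remains to show $\E_{\obs}[G\,Y(0,B(0))]=\E_{\obs}[G\,Y(0)]=\E_{\obs}[S g_0(X)\eta_0(X)]$. By Assumption~\ref{ass:exogeneity}, $\E_{\obs}[G\,Y(0)]=\E_{\obs}[\pi_0(X)\mu_0(X)]$, and Step~2 of the proof of Theorem~\ref{thm:identification}, with transferability, turns this into $\E_{\obs}[S g_0(X)p_0(X)/q_0(X)]$. Subtracting the two pieces gives $\E_{\obs}[S g_0(X)\{1-\eta_0(X)\}]$. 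Finally, rewriting $\E_{\obs}(S\mid X)=\pi_0(X)q_{\obs}(X)$ and $\eta_0=p_{\obs}/q_{\obs}$ yields the $\pi_0 g_0(q_{\obs}-p_{\obs})$ form.

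\textbf{Main obstacle.} The delicate step is justifying $G\,Y(0,B(1))=S\,Y(0,1)$. This needs consistency $B=B(G)$, so that $B(1)=B$ on $G=1$, together with $Y(0,0)=0$. One must also ensure the conditional expectation swap is taken over a set where $\P_{\obs}(D=1\mid X)>0$, which is exactly what Assumption~\ref{ass:cross_positivity} provides. Beyond that, the argument is bookkeeping with iterated expectations.
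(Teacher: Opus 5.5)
Your proposal is correct and follows essentially the same route as the paper. Both reduce the lemma to the identity $\E_{\obs}[G\,Y(0,B(1))]=\E_{\obs}[S\,g_0(X)]$, obtained from $B=B(G)$, $Y(0,0)=0$, the tower rule, the mean-swap condition~\eqref{eq:mean-swap} and consistency on $\{D=1\}$, and both handle $\E_{\obs}[G\,Y(0)]$ through the identification argument of Theorem~\ref{thm:identification}; you simply spell out that last step, and the subtraction shortcut, which the paper leaves implicit.
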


\begin{proof}
As noted above, by the definition of $Y(0)$ and from consistency (Assumption~\ref{ass:consistency_extended_potential_outcomes}), we have the following equivalences:
\begin{align*}
\E_{\obs}[G Y(1, B(1))] = \E_{\obs}[G Y] \quad \text{and} \quad \E_{\obs}[G Y(0, B(0))] = \E_{\obs}[G Y(0)]
\end{align*}
To show the equivalence, all that remains is to show the following:
\[
\E_{\obs}\left[G \cdot Y(0, B(1))\right] = \E_{\obs}\left[S \cdot g_0(X)\right].
\]
First, note that by booking consistency and the fact $Y(g, 0)=0$ almost surely, we have
\begin{align*}
\E_{\obs}\left[G \cdot Y(0, B(1))\right] &= \sum_{b \in \{0, 1\}}\E_{\obs}\left[G \1\{B = b\} Y(0, b)\right] = \E_{\obs}[S Y(0, 1)].
\end{align*}
Using this identity and recalling the definition $r_0(x) := \P_{\obs}(S = 1 \mid X = x)$, we see that we have by the tower rule for conditional expectations that
\begin{align*}
\E_{\obs}\left[S \cdot Y(0, 1)\right] &= \E_{\obs}\left[\E_{\obs}\left(S \cdot Y(0, 1) \mid X\right)\right] \\
&= \E_{\obs}\left[r_0(X) \cdot \E_{\obs}\left(Y(0, 1) \mid X, S = 1\right)\right] \\
&= \E_{\obs}\left[r_0(X) \cdot \E_{\obs}\left(Y(0, 1) \mid G = 0, B = 1, X\right)\right] \\
&= \E_{\obs}\left[r_0(X) \cdot \E_{\obs}(Y \mid G = 0, B = 1, X)\right] \\
&= \E_{\obs}\left[r_0(X) \cdot g_0(X)\right],
\end{align*}
where the second and third equalities follow from our exchangeable mean condition (Equation~\eqref{eq:mean-swap}) and the final equality follows again from consistency. Observing the identity:
\[
\E_{\obs}\left[r_0(X) \cdot g_0(X)\right] = \E_{\obs}\left[S \cdot g_0(X)\right]
\]
finishes the proof.
\end{proof}

\section{Type-Heterogeneous Identification}
\label{app:attribute-type-proof}

In this appendix, we prove Theorem~\ref{thm:ident-type} from Section~\ref{sec:model:heterogeneity}. The proof is largely the same as the proof of Theorem~\ref{thm:identification}, modulo the fact that we must take care to handle a customer's gift card type.

\begin{proof}[Proof of Theorem~\ref{thm:ident-type}]
As before, we just need to argue that both terms $\E_{\obs}[GY]$ and $\E_{\obs}[\pi_0(X)\mu_0(X)]$ in the definition of the incremental revenue $\Delta_0$ are identified.

\textit{Step 1: Identifying the treated outcome $\E_{\obs}[GY]$.} Again, although we do not observe $G$ directly, we can recover this term by exploiting the zero-spend assumption. A customer contributes to $GY$ only if they both received a gift card \textit{and} spent a positive amount. However, spending a positive amount requires making a purchase ($B = 1$). Thus $GY = G \cdot B \cdot Y = \sum_{\tau\in\calJ} S_\tau \cdot Y$, and thus we have
\[
\E_{\obs}[GY] = \E_{\obs}\left[Y\sum_{\tau\in\calJ} S_\tau\right].
\]

\textit{Step 2: Identifying the counterfactual spend $\E_{\obs}[\pi_0(X)\mu_0(X)]$.} The final component is the expected counterfactual spend that gift card recipients would have generated without the program. The outcome regression $\mu_0(x) = \E_{\obs}(Y \mid G = 0, X = x)$ can again be written as
\[
\mu_0(x) = \underbrace{\E_{\obs}(Y \mid G = 0, B = 1, X = x)}_{= \, g_0(x)} \times \P_{\obs}(B = 1 \mid G = 0, X = x).
\]
Further, we can write $\pi_0(X)$ under type-heterogeneity as
\begin{align*}
\pi_0(X) := \P_{\obs}(G = 1 \mid X)
&= \sum_{\tau\in\calJ} \P_{\obs}(G_\tau = 1 \mid X) \\
&= \sum_{\tau\in\calJ} \frac{\P_{\obs}(B = 1, G_\tau = 1 \mid X)}{\P_{\obs}(B = 1 \mid G_\tau = 1, X)} \\
&= \sum_{\tau\in\calJ} \frac{\P_{\obs}(S_\tau = 1 \mid X)}{\P_{\obs}(B = 1 \mid G_\tau = 1, X)},
\end{align*}
where the second equality follows from the fact that $G  = \sum_{\tau \in \calJ}G_\tau$ and the third equality follows from Bayes theorem. 
Putting these two equalities together, we obtain identification as follows:
\begin{align*}
\E_{\obs}[\pi_0(X)\mu_0(X)] &= \E_{\obs}\left[\sum_{\tau \in \calJ} \frac{g_0(X)\P_{\obs}(B = 1 \mid G = 0, X)\P_{\obs}(S_\tau = 1 \mid X)}{\P_{\obs}(B = 1 \mid G_\tau = 1, X)}\right] \\
&= \E_{\obs}\left[\sum_{\tau \in \calJ} \frac{g_0(X)\P_{\exp}(B = 1 \mid G = 0, X)\P_{\obs}(S_\tau = 1 \mid X)}{\P_{\exp}(B_t = 1 \mid B_{<t} = 0, G_a = 1,  X)}\right] \\
&= \E_{\obs}\left[g_0(X)p_0(X)\sum_{\tau \in \calJ} \frac{\P_{\obs}(S_\tau = 1 \mid X)}{q_0(\tau, X)}\right] \\
&= \E_{\obs}\left[g_0(X) \, p_0(X) \, \sum_{\tau\in\calJ} \frac{S_\tau}{q_0(\tau, X)}\right].
\end{align*}
This completes the identification of $\Delta_0$. The identification for $\rho_0$ and $\kappa_0$ now follow immediately, since we have already argued that the denominator $\E_{\obs}[G \cdot Y]$ is identified from data and the identification of $\E_{\obs}[G \cdot A]$ is exactly analogous. 
\end{proof}

\section{Derivation of Neyman Orthogonal Moments}
\label{sec:neyman_construction}

\newcommand{\tr}{\mathrm{tr}}
\newcommand{\targ}{\mathrm{targ}}

The goal of this appendix is to derive and prove the convergence of a cross-fit estimator for the incremental revenue $\Delta_0 := \E_{\obs}\left[G \big(Y - Y(0)\big)\right]$. To this end, we first derive a \textit{Neyman orthogonal} population moment/estimating equation for $\Delta_0$. Then, we leverage this moment to construct the estimator outlined in Algorithm~\ref{alg:dml}.
In particular, since this estimator is based on a Neyman orthogonal score, we will be able to prove convergence under generally mild assumptions on the convergence of nuisance estimates. 
Finally, we will show convergence of the estimators for the incrementality coefficient $\kappa_0$ and the incrementality ratio $\rho_0$ using the delta method.

\subsection{A General Recipe for Neyman Orthogonal Moments}

Suppose a learner is interested in estimating the value of a population moment $\theta_0 = M^{\base}(f_0)$, where $M(f) := \E_{\targ}\left[m(Z; f)\right]$. Here, $m$ is a sample moment depending on  a random variable $Z \sim \P_{\targ}$ and an unknown, true vector of nuisance functions $f_0(X) = (f_{0, 1}(X), \dots, f_{0, p})$, where we assume  $X$ consists of a subset of the coordinates of $Z$. For simplicity, we will assume that the learner knows the moment $M(f)$ exactly for each $f$, but that the true \textit{nuisance function} $f_0(x)$ is unknown.\footnote{In practice and in the following subsections on estimation, only the sample moment $m(z; f)$ is known.  However, we aim to isolate the impact of unknown nuisances in this subsection to provide intuition for our debiasing corrections.} Instead, all the learner knows is that $f_0$ satisfies the conditional moment restriction
\[
0 = \E_{\tr,i}\left[Q_i(C_i - f_{0, i}(X)) \mid X\right] \quad \text{almost surely},
\]
where $\P_{\tr, i}$ denotes some \textit{training distribution} that may generally be different from the target distribution $\P_{\targ}$, $C$ is some real-valued random variable, and $Q \in \{0, 1\}$ is a binary random variable. 

Given this setup, the natural approach of the learner would be as follows: first, fit an estimate $\wh{f}$ of $f_0$ using the moment restriction, then produce an estimate $\wh{\theta} = M(\wh{f})$.
However, without modification, the population moment $M(f)$ may be very sensitive to errors in nuisance estimation, i.e., $M(\wh{f})$ may be very far from $M(f_0)$ even when an estimate $\wh{f}$ of the true nuisance vector $f_0$ is fairly accurate. This can be seen from a heuristic first-order Taylor expansion:
\begin{equation}
\label{eq:heuristic_taylor}
M(\wh{f}) - M(f_0) \approx D_g M(f_0)(\wh{f} - f_0) + \frac{1}{2}D_{f}^2 M(f_0)(\wh{f} - f_0) + \Rem(\wh{f}, f_0).
\end{equation}
Here, we have defined the \textit{Gateaux derivative} of $M(f)$ at $f_0$ in the direction $\nu : \calX \rightarrow \R^p$ as the following directional derivative:
\[
D_f M(f_0)(\nu) := \frac{d}{d t}\E_{\targ}\left[m(Z; f_0 + t \nu)\right]\Big|_{t = 0} 
\]
The second Gateaux derivative/Hessian is analogously defined, being given by
\[
D_f^2 M(f_0)(\nu) := \frac{\partial^2}{\partial t^2}\E_{\targ}\left[m(Z; f_0 + t\nu)\right]\Big|_{t = 0,}.
\]

Rephrasing things, under the expansion outlined in Equation~\eqref{eq:heuristic_taylor}, the best a learner can generically hope for is that the error scales \textit{linearly} in the nuisance error.
Thus, to make estimation practical, a learner should aim to derive a \textit{Neyman orthogonal} population moment, or one that is insensitive to first-order errors in nuisance estimation. One natural approach to improve the performance/convergence of the moment would be to add a \textit{correction} or \textit{de-biasing} term that weights the residuals for the regression $\wh{f} = (\wh{f}_1, \dots, \wh{f}_p)$ according to some functions $\alpha(X) = (\alpha_1(X),  \dots, \alpha_p(X))$, i.e., define 
\[
M_{\corr}(f, \alpha) := \E_{\targ}\left[m(Z; f)\right] + \sum_{j = 1}^p \E_{\tr, j}\left[\alpha_j(X)Q_j\left\{C_j  - f_j(X)\right\}\right].
\]
Noting that $\E_{\tr, j}\left[\alpha_j(X)Q_j\left\{C_j - f_j(X)\right\}\right] = 0$ at $f \equiv f_0$ for \textit{any} $\alpha_j(X)$ (which follows from the tower rule), we see that we still have $\theta_0 = M_{\corr}(f_0, \alpha)$ for any $\alpha$.

To eliminate first-order nuisance estimation errors in estimating $f_0$, we should aim to select optimal weights $\alpha_{0, j}(X)$ that ensure:
\[
D_{f}M_{\corr}(f_0, \alpha_0)(\nu) = 0 
\]
where $\nu : \calX \rightarrow \R^p$ represents a perturbation function. Under standard regularity conditions, the \textit{Riesz representation theorem}~\citep{lax2014functional} yields the existence of a function $\beta_{0, j} : \calX \rightarrow \R$ such that
\[
D_{f_j}\E_{\targ}\left[m(Z; f_0)\right](\nu_{j}) = \E_{\targ}\left[\beta_{0, j}(X)\nu_{j}(X)\right]
\]
We can use this representer $\beta_{0, j}(X)$ to derive the optimal weighting function for the $j$th residual, $\alpha_{0, j}(X)$. Note that we need the following condition to hold for all perturbations:
\[
D_{f_j}\E_{\targ}\left[m(Z; f_0)\right](\nu_j) = \E_{\tr, j}\left[\alpha_{0, j}(X)Q_j\nu_j(X)\right].
\]
If we suppose that the density ratio $\omega_{0, j}(X) := \frac{d\P_{\targ}}{d\P_{\tr, j}}(X)$ exists for each $j \in [p]$, then it is clear that we need to take $\alpha_{0, j}(X)$ to be
\[
\alpha_{0, j}(X) := \frac{\beta_{0, j}(X)\omega_{0, j}(X)}{\P_{\tr, j}(Q_j = 1 \mid X)}.
\]

In this paper, we are precisely in a special case of the framework outlined above. In the next subsection, we derive both the appropriate residuals and de-biasing weights in a \textit{nonparametric setting}, in which we place no restrictions on the functional forms of the outcome regression $g_0(X)$ or the conditional booking probabilities $p_0(X)$ and $q_0(\tau, X)$. We then consider a \textit{semiparametric setting} in the sequel, where we still permit $g_0$ to be an arbitrary function, but restrict $p_0$ and $q_0$ to be logistic regressions in a known basis expansion. 

\subsection{Nonparametric Orthogonal Scores for $\Delta_0$}

We now apply this recipe to the base moment for the incremental revenue $\Delta_0$. Recall that we could write 
\[
\Delta_0 = \E_{\obs}\left[m(Z; g_0, p_0, q_0)\right], \quad \text{where}\quad m(Z; g, p, q) := Y\sum_{\tau \in \calJ}S_\tau - g(X)p(X)\sum_{\tau \in \calJ}\frac{S_\tau}{q(\tau, X)}.
\]
Here, there are two types of nuisances, all of which are specified as regressions. We detail the derivation of their optimal weights/corresponding correction terms in sequence.

\paragraph{Correction for the Organic Spend Function $g_0(X)$.} 
First, recall that the organic spend function $g_0 : \calX \rightarrow \R$ was defined as $g_0(x) := \E_{\obs}\left(Y \mid G = 0, B = 1, X = x\right)$. Equivalently, this can be cast as the unique zero of the conditional moment restriction
\[
0 = \E_{\obs}\left(D\left\{Y - g(X)\right\} \mid X\right) \quad \text{where } D = B \cdot (1 - G).
\]
Thus, we are precisely in the setting described above, with $Q_g := D$ and $C_g := Y$. We can further directly compute the corresponding Riesz representer $\beta_{0, g}(X)$. Direct computation yields that the representer is defined as
\[
\beta_{0, g}(X) := - p_0(X)\sum_{\tau \in \calJ}\frac{r_0(\tau, X)}{q_0(\tau, X)},
\]
where we have defined $r_0(\tau, x) := \P_{\obs}(S_\tau = 1 \mid X = x)$.
Thus, since the nuisance is defined on the observational population, if we let $\ell_0(x) := \P_{\obs}(D = 1 \mid X = x)$, the optimal choice of weighting can be seen to be
\begin{align}
\label{eq:alpha_g_app}
\alpha_{0, g}(X) &:= -\frac{p_0(X)}{\ell_0(X)}\sum_{\tau \in \calJ} \frac{r_0(\tau, X)}{q_0(\tau, X)}.
\end{align}

\paragraph{Correction for Booking Probabilities $p_0$ and $q_0$.}
Both booking propensities $p_0$ and $q_0$ are defined as conditional probabilities (and hence, regressions) on the \textit{experimental population}. More precisely, the conditional booking probability \textit{without a gift card} is given by $p_0(x) := \P_{\exp}(B_1 = 1 \mid G = 0, X = x)$ and the booking hazard \textit{with a gift card} is given as $q_0(\tau, x) := \P_{\exp}\left(B_t = 1 \mid B_{< t} = 0, G_a = 1, X = x\right)$. Hence, these nuisances satisfy the following experimental moment restrictions:
\[
0 = \E_{\exp}\Big((1 - G)\left\{B_1 - p_0(X)\right\} \mid X\Big) \quad \text{and} \quad  0 = \E_{\exp}\Big(G_a (1 - B_{< t}) \left\{B_t - q_0(\tau, X)\right\} \mid X\Big).
\]
Thus, in the notation of the previous subsection, we have $Q_p = 1 - G$, $C_p = B_1$, $Q_{q, \tau} := G_a(1 - B_{< t})$, and $C_{q, \tau} = B_t$. The observational representers $\beta_{0, p}(X)$ and $\beta_{0, \tau}(X)$ for these nuisances can readily be computed, and are given as follows
\begin{align*}
\beta_{0, p}(X) &=  -g_0(X)\sum_{\tau \in \calJ}\frac{r_0(\tau, X)}{q_0(\tau, X)} \\
\beta_{0, \tau}(X) &= \frac{g_0(X)p_0(X)r_{0}(\tau, X)}{q_0(\tau, X)^2}.
\end{align*}
Now, to ensure the optimal weights $\alpha_{0, g}(X)$, $\alpha_{0, p}(X)$, and $\alpha_{0, \tau}(X)$ are well-defined, we need the following assumption, which ensures the existence of a density ratio between the observational and experimental covariate distributions.

\begin{ass}
\label{ass:overlap}
Let $P_X$ denote the marginal distribution of $X$ for $P \in \{\P_{\obs}, \P_{\exp}\}$. We assume absolute continuity: $\P_{\obs, X} \ll \P_{\exp, X}$, ensuring the density ratio $\omega_0(X) := \frac{d\P_{\obs}}{d\P_{\exp}}(X)$ is well-defined.
\end{ass}

Under this assumption, the optimal weights are given as follows:
\begin{align*}
\alpha_{0, p}(X) &:= - \frac{\omega_0(X)g_0(X)}{1 - e_0(X)}\sum_{\tau \in \calJ}\frac{r_0(\tau, X)}{q_0(\tau, X)} \\
\alpha_{0, \tau}(X) &:= \frac{\omega_0(X)g_0(X)p_0(X)r_{0, \tau}(X)}{\lambda_0(\tau, X)q_0(\tau, X)^2},
\end{align*}
where we have defined the experimental propensity as $e_0(x) := \P_{\exp}(G = 1 \mid X = x)$ and the experimental probability of not booking in the first $t - 1$ months with a gift card of type $a$ as $\lambda_0(\tau, x) := \P_{\exp}(B_{< t} = 0, G_a = 1 \mid X = x)$.

\paragraph{Putting Things Together.}

Combining the above derivations, we arrive at the nonparametric, Neyman orthogonal moment for $\Delta_0$:
\begin{align*}
M_{\corr}(g, p, q, \alpha) &:= \E_{\obs}\left[\psi_{\obs}(W; g, p, q, \alpha_g)\right] + \E_{\exp}\left[\alpha_p(X)(1 - G)\left\{B_1 - p(X)\right\}\right] \\
&\qquad + \E_{\exp}\left[\sum_{\tau \in \calJ}\alpha_{\tau}(X) G_a(1 - B_{< t})\left\{B_t - q(\tau, X)\right\}\right],
\end{align*}
where we have defined:
\[
\psi_{\obs}(W; g, p, q, \alpha_g) := m(W; g, p, q) + \alpha_g(X)D\left\{Y - g(X)\right\}.
\]

\subsection{Semiparametric Corrections under Logistic Booking Propensity}

To avoid estimating the density ratio $\Delta_0(X)$, one may assume the conditional booking propensities follow a logistic regression model with corresponding feature embeddings $\phi_p(X)$ and $\big(\phi_\tau(X) : \tau \in \calJ\big)$ (Assumption~\ref{ass:logistic}). By standard results on M-estimation, the true parameters $\beta_{0, p}$ and $(\beta_{0, \tau} : \tau \in \calJ)$ can be written as the solution of the following experimental moments:
\begin{align*}
\E\left[s_p(Z; \beta_{0, p})\right] = 0 & &\text{where }& s_p(Z; \beta_{p}) := (1 - G)\left\{B_1 - \sigma(\phi_p(X)^\top \beta_p)\right\}\phi_p(X) \\
\forall \tau \in \calJ,\;\; \E\left[s_\tau(Z; \beta_{0, \tau})\right] = 0 & &\text{where }& s_\tau(Z; \beta_\tau) := G_a(1 - B_{< t})\left\{B_t - \sigma(\phi_\tau(X)^\top\beta_\tau)\right\}\phi_\tau(X).
\end{align*}
Note that the zeros of these moments will not generally be unique. Indeed, we characterize the set of zeros in Lemma~\ref{lem:logistic_solutions} in Appendix~\ref{app:estimation_proofs}, showing that the set of all solutions forms an affine subspace of $\R^{d_p}$ and $\R^{d_\tau}$, respectively. We thus canonically refer to $\beta_{0, p}$ and $\beta_{0, \tau}$ as the minimal norm solutions, which are guaranteed to be unique. We generally let the stacked version of these moments be denoted by $s(Z; \beta) = \left(s_p(Z; \beta_p), \left\{s_\tau(Z; \beta_\tau) : \tau \in \calJ\right\}\right)$, where  $\beta = (\beta_p, \{\beta_\tau : \tau \in \calJ\})$ and $\beta_0$ denotes the minimum norm solution to the stacked estimating equations. 

In this parametric setting, instead of viewing the base moment $m(W; g, p ,q)$ as depending on nonparametric conditional booking probabilities $p(x)$ and $q(\tau, x)$, we can view them as depending on $\beta$ instead. We can simply write $p(x) := \sigma(\phi_p(x)^\top \beta_p)$ and $q(\tau, x) := \sigma(\phi_\tau(x)^\top \beta_\tau)$ for notational convenience in the sequel, defining $p_0(x)$ and $q_0(\tau, x)$ similarly in terms of $\beta_{0,p}$ and $\beta_{0, \tau}$, respectively. In this setting, instead of reweighting the experimental residuals by some function $\alpha(X)$, it suffices to consider simply a weight matrix $A \in R^{1 \times d}$, where again $d = d_p + \sum_{\tau \in \calJ} d_\tau$. In particular, we would consider the corrected moment
\[
M_{\corr}(g, \beta, \alpha_g, A) := \E_{\obs}[\psi_{\obs}(w; g, \beta, \alpha_g)] + A\E_{\exp}[s(Z; \beta)].
\]
The ideal matrix $A_0$ should be selected so as to ensure that the 
\[
D_\beta \E_{\obs}\left[\psi_{\obs}(W; g_0, \beta_0, \alpha_{0, g})\right] \equiv D_\beta \E_{\obs}\left[m(W; g_0, \beta_0)\right] = - A_0D_\beta\E_{\exp}[s(Z; \beta_0)],
\]
where ``$\equiv$'' denotes that the first two terms are equivalent, as the nonparametric correction $\alpha_g(X)D\left\{Y - g(X)\right\}$ does not depend directly on $\beta$. Here, $D_\beta$ denotes the Jacobian with respect to $\beta$.\footnote{Using the same notation as for the Gateaux derivative is for convenience. For a function $F : \R^d \rightarrow \R^{d'}$, we correspondingly write the application of the Jacobian to a vector $v \in \R^d$ at $v_0$ as $D_\beta F(v_0) (v)$.
} 
Such an $A_0$ will exist so long $D_\beta \E_{\obs}\left[m(W; g_0, \beta_0)\right]$ falls in row space of $D_\beta\E_{\exp}[s(Z; \beta_0)]$. This will always be the case if $D_\beta\E_{\exp}[s(Z; \beta_0)]$ is invertible. This may not be true if the solution to the moment $0 = \E_{\exp}[s(Z;\beta)]$ is non-unique, and hence we must explicitly assume it.
\begin{ass}
\label{ass:row_space}
The Jacobian $D_\beta\E_{\obs}\left[m(W; g_0, \beta_0)\right]$ lies in the row space of $D_\beta \E_{\exp}[s(Z; \beta_0)]$.
\end{ass}

We note that this assumption is \textit{strictly weaker} than Assumption~\ref{ass:overlap}, which we confirm in the following lemma.

\begin{lemma}
\label{lem:strong-to-weak}
Recall the ``selection variables'' $\big\{Q_\nu : \nu \in \calJ \cup \{p\}\big\}$ defined above and suppose that Assumption~\ref{ass:overlap} holds and that $\P_{\exp}(Q_\nu = 1 \mid X) > 0$ and $v_{\exp, \nu}(X) > 0$ almost surely for each $\nu \in \calJ \cup \{p\}$ and that $\Var_{\exp}(B_1 = 1 \mid G = 0, X) >0$ and $\Var_{\exp}(B_t = 1 \mid B_{< t} = 0, G_a = 1, X) >0$. Then, Assumption~\ref{ass:row_space} holds.
\end{lemma}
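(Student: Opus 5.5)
The plan is to exploit the block structure of the two Jacobians and to characterize the row space of each experimental block as the orthogonal complement of its null space. Assumption~\ref{ass:overlap} then transfers null-space membership from $\P_{\exp}$ to $\P_{\obs}$.

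\textbf{Step 1: the experimental Jacobian.} Since $s_p$ depends only on $\beta_p$ and each $s_\tau$ depends only on $\beta_\tau$, the matrix $H := D_\beta \E_{\exp}[s(Z;\beta_0)]$ is block diagonal, with blocks $H_\nu$ for $\nu \in \calJ \cup \{p\}$. Hence it suffices to prove row-space inclusion block by block. Differentiating under the expectation and conditioning on $X$, I expect
\[
H_\nu = -\E_{\exp}\left[\P_{\exp}(Q_\nu = 1 \mid X)\, v_{\exp,\nu}(X)\, \phi_\nu(X)\phi_\nu(X)^\top\right],
\]
where $v_{\exp,\nu} = \sigma'(\phi_\nu^\top \beta_{0,\nu})$ is the conditional Bernoulli variance. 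This variance is positive by the stated variance hypotheses. Each $H_\nu$ is therefore symmetric and negative semidefinite, so its row space is $\ker(H_\nu)^\perp$. Moreover, if $u \in \ker(H_\nu)$, then $u^\top H_\nu u = 0$. Because the weight $\P_{\exp}(Q_\nu=1\mid X)\,v_{\exp,\nu}(X)$ is strictly positive almost surely, this forces $\phi_\nu(X)^\top u = 0$ $\P_{\exp}$-almost surely. Conversely, any such $u$ lies in the kernel. So $\ker(H_\nu) = \{u : \phi_\nu^\top u = 0 \ \P_{\exp}\text{-a.s.}\}$.

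\textbf{Step 2: the observational Jacobian.} Differentiating $\E_{\obs}[m(W;g_0,\beta)]$ in $\beta_\nu$ and using the Riesz representers $\beta_{0,p}(X)$ and $\beta_{0,\tau}(X)$ computed above (through the chain rule $\partial_{\beta_\nu}\sigma(\phi_\nu^\top\beta_\nu) = \sigma'(\cdot)\phi_\nu$), I obtain the block
\[
J_\nu = \E_{\obs}\left[\beta_{0,\nu}(X)\, v_{\exp,\nu}(X)\, \phi_\nu(X)^\top\right].
\]

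\textbf{Step 3: inclusion.} Take $u \in \ker(H_\nu)$. Then $\phi_\nu(X)^\top u = 0$ on a set of full $\P_{\exp,X}$-measure. By absolute continuity (Assumption~\ref{ass:overlap}), this set also has full $\P_{\obs,X}$-measure, so $J_\nu u = \E_{\obs}[\beta_{0,\nu} v_{\exp,\nu}\,\phi_\nu^\top u] = 0$. Thus $J_\nu^\top \in \ker(H_\nu)^\perp = \mathrm{row}(H_\nu)$, and stacking the blocks gives Assumption~\ref{ass:row_space}.

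\textbf{Main obstacle.} The delicate part is Steps 1 and 2 themselves: justifying differentiation under the integral sign, and ensuring that $J_\nu$ and $H_\nu$ are finite. This requires integrability of $\|\phi_\nu\|^2$ under $\P_{\exp}$ and of $|\beta_{0,\nu}|\,\|\phi_\nu\|$ under $\P_{\obs}$. I would handle this with a dominated-convergence argument, using $\sigma' \le 1/4$ and implicitly assuming the moments appearing in the displayed Jacobians are finite, as is standard. The positivity hypotheses on $\P_{\exp}(Q_\nu=1\mid X)$ and $v_{\exp,\nu}$ are needed only to pass from $u^\top H_\nu u = 0$ to almost-sure vanishing of $\phi_\nu^\top u$.
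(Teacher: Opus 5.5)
Your proposal is correct and follows essentially the same argument as the paper. You characterize the kernel of $H_\nu$ through the quadratic form $u^\top H_\nu u = 0$ together with the almost-sure positivity of $\P_{\exp}(Q_\nu = 1 \mid X)\, v_{\exp,\nu}(X)$, then use Assumption~\ref{ass:overlap} to carry $\phi_\nu(X)^\top u = 0$ from $\P_{\exp}$ to $\P_{\obs}$, which gives $J_\nu u = 0$. The differences are only cosmetic: you invoke absolute continuity directly where the paper rewrites the expectation using the density ratio $\omega_0$, and you make the block-diagonal reduction explicit.
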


\begin{proof}
Define, for any $\nu \in \calJ \cup \{p\}$, the vector $\xi_\nu$ as
\[
\xi_\nu := D_{\beta_\nu}\E_{\obs}\left[m(W; g_0, \beta_0)\right] = \E_{\obs}\left[\partial_\nu m(X) v_{\exp, \nu}(X) \phi_\nu(X) \right],
\]
where 
\begin{align*}
v_{\exp, p}(x) &:= p_0(x)\left\{1 - p_0(x)\right\} = \Var_{\exp}(B_1 = 1\mid G = 0, X = x) \\
v_{\exp, \tau}(x) &:= q_{0}(\tau, x)\left\{1 - q_0(\tau, x)\right\} = \Var_{\exp}(B_t \mid B_{< t} = 0, G_a = 1, X = x).
\end{align*}
and
\[
\partial_\nu m(X) := \begin{cases}
 -g_0(X)\sum_{\tau \in \calJ}\frac{r_0(\tau, X)}{q_0(\tau, X)} &\text{if } \nu = p \\
 \frac{g_0(X)p_0(X)r_0(\tau, X)}{q_0(\tau, X)^2} &\text{for } \tau \in \calJ.
\end{cases}
\]
We argue that, for each $\nu \in \calJ \cup \{p\}$, $\xi_\nu$ is orthogonal to the null space of the Hessian $H_\nu$ given by
\[
H_\nu := - D_{\beta}\E_{\exp}\left[s_\nu(Z; \beta_{0, \nu})\right] = \E_{\exp}\left[Q_\nu v_{\exp, \nu}(X) \phi_\nu(X)\phi_\nu(X)^\top\right].
\]
To see this, note that for any $\beta_\nu$ in the nullspace of $H_\nu$, we have 
\[
0 = \beta_\nu^\top H \beta_\nu = \E_{\exp}\left[Q_\nu v_{\exp, \nu}(X) \left(\phi_\nu(X)^\top \beta_\nu\right)^2\right] = \E_{\exp}\left[\P_{\exp}(Q_\nu = 1 \mid X)v_{\exp, \nu}(X)\left(\phi_\nu(X)^\top \beta_\nu\right)^2\right].
\]
Under the non-negativity assumptions made on the experimental conditional selection probabilities and experimental conditional variances, it follows that we must have $\phi_\nu(X)^\top \beta_\nu = 0$ almost surely under $\P_{\exp}$. Thus, we have that
\begin{align*}
\beta_\nu^\top \xi_\nu &= \E_{\obs}\left[\partial_\nu m(X) v_{\exp, \nu}(X)\phi_\nu(X)^\top \beta_\nu\right] \\ 
&= \E_{\exp}\left[\omega_0(X)\partial_\nu m(X) v_{\exp, \nu}(X) \phi_\nu(X)^\top \beta_\nu\right] \\
&= 0,
\end{align*}
which shows the result. 

\end{proof}

When Assumption~\ref{ass:row_space} holds, such a matrix $A_0$ exists. If $D_\beta \E_{\exp}[s(Z; \beta_0)]$ is invertible, $A_0$ will be unique. Otherwise, if said Hessian is singular, there will be infinitely many choices for $A_0$. We can always canonically pick $A_0$ to be defined in terms of the Moore-Penrose pseudo-inverse, i.e.,
\[
A_0 = - D_\beta\E_{\obs}\left[m(W; g_0, \beta_0)\right] \left(D_\beta \E[s(Z;\beta_0)]\right)^+.
\]

We can thus derive a Neyman orthogonal moment that contains both nonparametric debiasing terms (for the counterfactual spend function $g_0$) and parametric ones as well (for the conditional probabilities parameterized by $\beta$). In sum, we arrive at a score that can be written as
\[
M_{\corr}(g, \beta, \alpha_g, A) := \E_{\obs}\left[\psi_{\obs}(W; g, \beta, \alpha)\right] + \E_{\exp}\left[\psi_{\exp}(Z; \beta, A)\right],
\]
where we let $p(x)$ and $q(\tau, x)$ be naturally defined as above in terms of $\beta_p$ and $\beta_\tau$, and where we have set $\psi_{\exp}(Z; \beta, A) := A s(Z; \beta)$.
It is straightforward to check that $\Delta_0 = M_{\corr}(g_0, \beta_0, \alpha_{0, g}, A_0)$, as desired.

\begin{remark}
\label{rem:pooled-beta}
The preceding construction allows each  conditional hazard rate
$q_0(\tau,\cdot)$ to have its own parameter $\beta_\tau$. In our empirical application, due to limited data, we instead make a stronger assumption: we assume that while there are still type-specific feature embeddings $\phi_\tau(X)$ for each $\tau \in \calT$, there is a shared slope $\beta_{0, q}$ such that
\[
q_0(\tau,x)
=
\sigma\left(\phi_\tau(x)^\top\beta_{0, q}\right),
\qquad \tau\in\calT.
\]
In our application in Section~\ref{sec:case_study}, the $\tau$th embedding was taken to be $\phi_\tau(x) := (x, e_\tau)$, where $e_\tau \in \R^{12}$ with the $\tau$th standard basis vector. We note that we did not restrict the parameter $\beta_{0,p}$ indexing $p_0$ in our application.

Using notation akin to that in the preceding lemma, we define  
\[
\xi_\tau
:=
D_{\beta_\tau}\E_{\obs}\left[m(W; g_0, \beta_0, \alpha_{0, g})\right]
\quad\text{and}\quad
H_\tau
:=
-
D_{\beta_\tau}
\E_{\exp}\!\left[s_\tau(Z;\beta_{\tau,0})\right].
\]
Under the pooled model, the chain rule yields that the Jacobian of the base moment and the (negative) experimental score with respect to
$\beta_{q,0}$ are respectively:
\[
\xi_q
=
\sum_{\tau\in\calT}\xi_\tau,
\qquad
H_q
=
\sum_{\tau\in\calT}H_\tau.
\]
Consequently, it suffices to impose the aggregate row-space condition $\xi_q\in\operatorname{Range}(H_q)$
or, equivalently, $\operatorname{Null}(H_q)
\subseteq
\operatorname{Null}(\xi_q)$.
Writing
\[
s_q(Z;\beta_q)
:=
\sum_{\tau\in\calT}s_\tau(Z;\beta_q),
\]
a valid correction under the pooled conditional hazard model is $\psi_{\exp}^{\text{pool}}(Z; \beta, A) := A_p s_p(Z; \beta_p) + A_q s_q(Z;\beta_q)$, where $A_{q, 0} :=\xi_qH_q^+$, and again $H_q^+$ denotes the Moore--Penrose pseudoinverse.

The coordinate-wise construction in the preceding discussion is stronger:
it requires
\[
\xi_\tau\in\operatorname{Range}(H_\tau)
\qquad\text{for every }\tau\in\calT.
\]
When these coordinate-wise conditions hold, the resulting moment is
orthogonal with respect to arbitrary type-specific perturbations and
therefore also with respect to the common perturbations allowed by the
pooled model. However, the converse need not hold in general. 

In our empirical application, we leverage this pooled model, since under correct specification, setting $\wh{\beta}_q$ to be a zero for $\frac{1}{N}\sum_{j = 1}^N s_q(Z_j; \wh{\beta}_q) = 0$ yields a maximum likelihood estimate for $\beta_{0, q}$~\citep{allison1982discrete}.
\end{remark}

\section{Proofs of Asymptotic Linearity and Normality}
\label{app:estimation_proofs}
In this appendix, we prove the asymptotic linearity (and consequently, normality) of our estimator outlined in Algorithm~\ref{alg:dml}. We roughly follow the strategy leveraged in the proof of Lemma 1 in \citet{chernozhukov2023automatic}. In this work, the authors provide a general asymptotic linearity result for linear functionals of regression under covariate shifts. We cannot directly apply the aforementioned lemma for two reasons. First, our estimand is not a linear functional of a regression, which is the focus of the noted work. Further, since we assume the conditional probabilities are logistic regressions, and since we fit estimates on the entire experimental samples, we need to adjust the empirical process arguments from \citet{chernozhukov2023automatic}.

Before stating our theorem, we define some notation. First, we respectively define the observational and experimental empirical distributions as
\[
\P_n := \frac{1}{n}\sum_{i = 1}^n \delta_{W_i} \quad \text{and} \quad \Q_N := \frac{1}{N}\sum_{j = 1}^N \delta_{Z_j},
\]
where $\delta_w$ (and $\delta_z$) denotes the probability measure that puts all mass on $w$ (respectively, $z$). We further define the scaled empirical process as
\[
\G_n := \left(\P_n - \E_{\obs, W}\right) \quad \text{and} \quad \Hh_N := \left(\Q_N - \E_{\exp, Z}\right),
\]
where for a function $f$ and a random variable $U$ that is independent of $Z$, $\E_{\obs, W}[f(Z, U)]$ denotes the expectation just over $W$. We define $\E_{\exp, Z}$ similarly. 

In showing the convergence of our estimator, we assume that there exists a solution to the population moment
\begin{equation}
\label{eq:logistic_moment}
0 = \E_{\exp}[s_\nu(Z; \beta_\nu^\dagger)] \quad \text{for all } \nu \in \calJ \cup \{p\}.
\end{equation}
We do not assume that the solution is unique for any such $\nu$. We can, however, simply characterize the space of solutions and note that any two solutions yield precisely the same predictions and influence functions. The following lemma captures precisely this.

\begin{lemma}
\label{lem:logistic_solutions}
Define the set of all solutions of Equation~\eqref{eq:logistic_moment} to be
\[
\calB_\nu := \left\{ \beta_\nu \in \R^{d_\nu} : \E_{\exp}[s_\nu(Z; \beta_\nu)] = 0\right\}.
\]
Further, define the nullspace associated with the $\nu$th feature embedding $\phi_\nu(X)$ as
\[
\calN_\nu := \left\{v \in \R^{d_\nu} : Q_\nu\phi_\nu(X)^\top v = 0 \;\; \P_{\exp}-\text{almost surely}\right\}.
\]
Then, letting $\beta_\nu^\dagger \in \calB_\nu$ be arbitrary, we have the equivalence
\[
\calB_\nu = \beta_\nu^\dagger + \calN_\nu.
\]
\end{lemma}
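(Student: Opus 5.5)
We prove the two inclusions separately. For both directions we use that the logistic score can be rewritten via the tower rule. Conditioning on $X$ and $Q_\nu$, and using Assumption~\ref{ass:logistic} together with the defining conditional moment restriction of the $\nu$th nuisance (so that $\E_{\exp}(C_\nu \mid Q_\nu = 1, X) = \sigma(\phi_\nu(X)^\top \beta_\nu^\dagger)$ on $\{Q_\nu=1\}$ whenever $\beta_\nu^\dagger$ is a true parameter), we get
\[
\E_{\exp}[s_\nu(Z;\beta_\nu)] = \E_{\exp}\left[Q_\nu\left\{\sigma(\phi_\nu(X)^\top\beta_\nu^\dagger) - \sigma(\phi_\nu(X)^\top\beta_\nu)\right\}\phi_\nu(X)\right].
\]
An arbitrary element of $\calB_\nu$ need not be a priori a true parameter. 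We therefore avoid that step and work instead with the difference of scores at two zeros, which is cleaner.

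\textit{Inclusion $\beta_\nu^\dagger + \calN_\nu \subseteq \calB_\nu$.} Take $v \in \calN_\nu$. Then $Q_\nu \phi_\nu(X)^\top(\beta_\nu^\dagger + v) = Q_\nu\phi_\nu(X)^\top\beta_\nu^\dagger$ almost surely. Since $s_\nu$ depends on $\beta_\nu$ only through $Q_\nu\,\sigma(\phi_\nu(X)^\top\beta_\nu)$ (the factor $Q_\nu$ multiplies everything), it follows that $s_\nu(Z;\beta_\nu^\dagger+v) = s_\nu(Z;\beta_\nu^\dagger)$ almost surely. Hence the expectation is zero.

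\textit{Inclusion $\calB_\nu \subseteq \beta_\nu^\dagger + \calN_\nu$.} This is the main step. Let $\beta_\nu \in \calB_\nu$ and set $v := \beta_\nu - \beta_\nu^\dagger$. Subtracting the two zero moments, the $C_\nu$ terms cancel:
\[
0 = v^\top\left(\E_{\exp}[s_\nu(Z;\beta_\nu^\dagger)] - \E_{\exp}[s_\nu(Z;\beta_\nu)]\right) = \E_{\exp}\left[Q_\nu\left\{\sigma(u_\nu) - \sigma(u_\nu^\dagger)\right\}(u_\nu - u_\nu^\dagger)\right],
\]
where $u_\nu := \phi_\nu(X)^\top\beta_\nu$ and $u_\nu^\dagger := \phi_\nu(X)^\top\beta_\nu^\dagger$. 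Because $\sigma$ is strictly increasing, the integrand $Q_\nu(\sigma(a)-\sigma(b))(a-b)$ is nonnegative, and it vanishes only when $Q_\nu = 0$ or $a = b$. A nonnegative variable with zero mean is zero almost surely, so $Q_\nu\,\phi_\nu(X)^\top v = 0$ almost surely, i.e.\ $v \in \calN_\nu$. This is essentially the strict convexity of the logistic loss along the directions it sees.

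The only care needed is integrability, so that both expectations are finite and the difference can be taken. We will assume $\E_{\exp}\|\phi_\nu(X)\|$ is finite, which is implicit in the score being defined. Since $\sigma$ is bounded and $|Q_\nu|, |C_\nu| \le 1$, this suffices.

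As a remark following the proof, the argument also shows that all solutions produce identical fitted values $Q_\nu\sigma(\phi_\nu(X)^\top\beta_\nu)$ and identical scores $s_\nu$. So any $\beta_\nu \in \calB_\nu$ (in particular the minimal-norm $\beta_{0,\nu}$) gives the same predictions and influence functions on the selected subpopulation, which is the claim made before the lemma.
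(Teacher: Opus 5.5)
Your proof is correct and follows the paper's argument. For the forward inclusion, the score is unchanged almost surely whenever $Q_\nu\phi_\nu(X)^\top v = 0$; for the reverse inclusion, you pair the difference of the two zero moments with $\beta_\nu - \beta_\nu^\dagger$ and use strict monotonicity of $\sigma$, so a nonnegative integrand with zero mean must vanish (the opening tower-rule detour is unnecessary, as you yourself note, but harmless).
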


\begin{proof}
First, we argue that $\beta_\nu^\dagger + \calN_\nu \subset \calB_\nu$. To see this, let $v \in \calN_\nu$ be arbitrary. We have 
\begin{align*}
\E_{\exp}\left[s_\nu(Z; \beta_\nu^\dagger + v)\right] &= \E_{\exp}\left[Q_\nu\left\{C_\nu - \sigma\big(\phi_\nu(X)^\top (\beta_\nu^\dagger + v)\big)\right\}\phi_\nu(X)\right] \\
&= \E_{\exp}\left[Q_\nu\left\{C_\nu - \sigma\big(\phi_\nu(X)^\top \beta_\nu^\dagger\big)\right\}\phi_\nu(X)\right] \\
&= \E_{\exp}\left[s_\nu(Z; \beta_\nu^\dagger)\right] = 0.
\end{align*}

Now, we prove the reverse inclusion. First, note that because the logistic function $\sigma(u)$ is strictly increasing and differentiable, we have by a first-order Taylor expansion that
\[
(\sigma(u) - \sigma(v))(u - v) = \sigma'(\xi)(u - v)^2 \geq 0 \quad \forall u, v \in \R,
\]
where $\xi \in [u, v]$ with equality if and only if $u = v$.
Thus, letting $\beta_\nu \in \calB_\nu$ be arbitrary, we have 
\begin{align*}
0 &= (\beta_\nu - \beta_\nu^\dagger)^\top\E_{\exp}\left[s_\nu(Z; \beta_\nu) - s_\nu(Z; \beta_\nu^\dagger)\right] \\
&= \E_{\exp}\left[Q_\nu\left\{\sigma\big(\phi_\nu(X)^\top \beta_\nu^\dagger\big) - \sigma\big(\phi_\nu(X)^\top\beta_\nu\big)\right\}\phi_\nu(X)^\top\left\{\beta_\nu^\dagger - \beta_\nu\right\}\right].
\end{align*}
Thus, for the difference of moments to be zero, we must have $Q_\nu\phi_\nu(X)^\top(\beta_\nu^\dagger - \beta_\nu) = 0$ $\P_{\exp}$-almost surely. In other words, we have $\beta_\nu \in \beta_\nu^\dagger + \calN_\nu$, so we have proved the result. 
\end{proof}

It is thus immediate from the above characterization that any two solutions to the noted moment yield precisely the same predictions. Furthermore, any derivatives of $s_\nu$ with respect to $\beta_\nu$ only depend on $\beta_\nu$ through the associated predictions. To ease notation, we let $\calB$ denote the space of solutions for the entire vector $\beta = (\beta_\nu : \nu \in \calJ \cup \{p\})$ where $\beta_\nu \in \calB_\nu$ for each $\nu$. Equivalently, we can write $\calB$ as
\[
\calB = \prod_{\nu \in \calJ \cup \{p\}} \calB_\nu \quad \text{or} \quad \calB = \beta^\dagger + \calN,
\]
where $\beta^\dagger$ is an arbitrary solution and $\calN = \bigoplus_{\nu \in \calJ \cup \{p\}}\calN_\nu$. 

Since the entire solution space is affine, there is a unique, minimal norm solution, which we denote as $\beta_0$. This solution is given by the projection of any solution $\beta \in \calB$ onto the orthogonal complement of the nullspace $\calN$, i.e., the set $\calN^\bot := \{w \in \R^{d} : v^\top w = 0 \;\forall v \in \calN\}$. Equivalently, the $\nu$th sub-vector of $\beta_0$ is the projection of $\beta_\nu$ onto $\calN_\nu^\bot$ for each $\nu \in \calJ \cup \{p\}$. We do not assume in the sequel that a learner's estimate $\wh{\beta}$ converges to some fixed solution in $\calB$. Instead, we just assume that $\wh{\beta}$ converges in Minkowski distance to the set $\calB$ (or, equivalently, that $\wh{\beta}_\nu$ converges to $\calB_\nu$). In other words, letting $\Pi_{\calN^\bot}$ denote the orthogonal projection onto $\calN^\bot$ ($\Pi_{\calN^{\bot}_\nu}$ the projection onto $\calN_\nu^\bot$), we assume that $\wh{\beta}^\bot := \Pi_{\calN^\bot}\wh{\beta}$ converges to $\beta_0$ (or $\wh{\beta}_\nu^\bot := \Pi_{\calN^{\bot}_\nu} \wh{\beta}_\nu$ converges to $\beta_{0,\nu}$). 

Lastly, we define the observational and experimental influence functions as 
\[
\varphi_{\obs}(W) := \psi_{\obs}(W; g_0, \alpha_{0, g}, \beta_0) - \Delta_0 \quad \text{and} \quad \varphi_{\exp}(Z) := \psi_{\exp}(Z; \beta_0, A_0).
\]

With this discussion and notation, we can now state our main result of this appendix. To ease notation and simplify analysis, we state the following result assuming that the nonparametric nuisances are trained on a sample that is independent of the observational samples $W_1, \dots, W_n$. In this case, the estimate $\wh{\Delta}$ can be written  as
\begin{equation}
\label{eq:est_sample_split}
\wh{\Delta} := \frac{1}{n}\sum_{i = 1}^n \psi_{\obs}(W_i; \wh{g}, \wh{\alpha}, \wh{\beta}) + \frac{1}{N}\sum_{j = 1}^N \psi_{\exp}(Z_j; \wh{\beta}, \wh{A}).
\end{equation}
We note that, although our theorem is focused on the sample splitting estimator, the asymptotic linearity (and hence, normality) result applies equally well to the cross-fit estimator outlined in Algorithm~\ref{alg:dml}.

\begin{theorem}
\label{thm:asymptotic_linearity}
Suppose that $W_1, \dots, W_n \sim_{i.i.d.} \P_{\obs}$ and $Z_1, \dots, Z_N \sim_{i.i.d.} \P_{\exp}$ are mutually independent, and suppose all (type-heterogeneous) Assumptions in Section~\ref{sec:model} hold. Further, suppose Assumption~\ref{ass:logistic} holds and that $T := \min\{n, N\} \rightarrow \infty$. Also suppose the following:
\begin{enumerate}
    \item \textbf{Null-Space Containment.} The experimental null-space $\calN_\nu$ of $\phi_\nu(X)$ is contained in the observational null-space, i.e.,
    \[
    v \in \calN_\nu\;\; \Longrightarrow \;\;\phi_\nu(X)^\top v = 0 \;\; \;\;\P_{\obs}-\text{almost surely.}
    \]
    \item \textbf{Nuisance Estimation Rates. }The nuisance estimates $\wh{\beta}_\nu$ for $\nu \in \calJ \cup \{p\}$, $\wh{A}$, $\wh{g}$, and $\wh{\alpha}$ are such that, for all $\nu, \tau \in \calJ \cup \{p\}$ the errors
    \begin{gather*}
    \|\wh{\beta}_\nu^\bot - \beta_{0, \nu}\|_2\times \|\wh{\beta}_\tau^\bot - \beta_{0, \tau}\|_2, \;\; \|\wh{\beta}_\nu^\bot - \beta_{0, \nu}\| \times \left\|\wh{A} - A_0\right\|_2,\\
    \left\|\wh{g} - g_0\right\|_{L^2(\P_{\obs})}\times \left\|\wh{\alpha} - \alpha_{0, g}\right\|_{L^2(\P_{\obs})}, \;\; \left\|\wh{g} - g_0\right\|_{L^2(\P_{\obs})} \times \|\wh{\beta}_\nu^\bot - \beta_{0, \nu}\|_2,
    \end{gather*}
    are all $o_\P(T^{-1/2})$, where we define $T := \min\{n, N\}$.
    \item \textbf{Observational Nuisance Independence.} The observational estimates $\wh{g}$ and $\wh{\alpha}$ are independent of $W_1, \dots, W_n$.
    \item \textbf{Nuisance Consistency.} All nuisances are consistent in $\ell_2$/$L^2$ norms, respectively, i.e.,
    \[
    \left\|\wh{g} - g_0\right\|_{L^2(\P_{\obs})}, \;\; \left\|\wh{\alpha} - \alpha_{0, g}\right\|_{L^2(\P_{\obs})} = o_\P(1),
    \]
    and for all $\nu \in \calJ \cup \{p\}$
    \[
    \|\wh{\beta}_\nu^\bot - \beta_{0, \nu}\|_2, \;\; \left\|\wh{A} - A_0\right\|_2 = o_\P(1).
    \]
    \item \textbf{Boundedness.} We assume that there is an absolute constant $M > 0$ such that all nuisances and nuisance estimates are bounded in $L^\infty$ or $\ell_2$ norm by $M$. Further, we assume that $\|Y\|_{L^\infty(\P_{\obs})} \leq M$ and $\|\phi_\nu\|_{L^\infty(\P_{b})} \leq M$ for $b \in \{\exp, \obs\}$ and $\nu \in \calJ \cup \{p\}$.
\end{enumerate}
Then, the estimate $\wh{\Delta}$ given in Equation~\eqref{eq:est_sample_split} is asymptotically linear, i.e., as $T \rightarrow \infty$ we have
\[
\wh{\Delta} - \Delta_0 = \frac{1}{n}\sum_{i = 1}^n\varphi_{\obs}(W_i) + \frac{1}{N}\sum_{j = 1}^N \varphi_{\exp}(Z_j) + o_\P\left(T^{-1/2}\right).
\]
Further, suppose $\sigma_{\obs}^2 := \Var_{\obs}\left[\psi_{\obs}(W; g_0, \alpha_{0, g}, \beta_0)\right] > 0$ and $\sigma_{\exp}^2 :=  \Var_{\exp}\left[\psi_{\exp}(Z; \beta_0, A_0)\right] > 0$, and that $\wh{\Sigma}$ satisfies $\frac{\wh{\Sigma}}{\Sigma_0} - 1 = o_\P(1)$, where $\Sigma_0 := \sqrt{\frac{\sigma_{\obs}^2}{n} + \frac{\sigma_{\exp}^2}{N}}$. Then, we have
\[
\wh{\Sigma}^{-1}\left(\wh{\Delta} - \Delta_0\right) \Rightarrow \calN(0, 1).
\]

\end{theorem}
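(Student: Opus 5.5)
We would follow the standard DML decomposition adapted to two samples, as in Lemma 1 of \citet{chernozhukov2023automatic}. Write $\wh{\Delta} - \Delta_0 = (\P_n - \E_{\obs})\psi_{\obs}(\cdot;\wh{g},\wh{\alpha},\wh{\beta}) + (\Q_N - \E_{\exp})\psi_{\exp}(\cdot;\wh{\beta},\wh{A}) + \{M_{\corr}(\wh{g},\wh{\beta},\wh{\alpha},\wh{A}) - M_{\corr}(g_0,\beta_0,\alpha_{0,g},A_0)\}$, using $\Delta_0 = M_{\corr}(g_0,\beta_0,\alpha_{0,g},A_0)$. Each empirical-process term is then split into its value at the true nuisances, which gives the stated linear terms $\frac1n\sum\varphi_{\obs}(W_i)$ and $\frac1N\sum\varphi_{\exp}(Z_j)$, plus a remainder. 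The population bias term is handled separately.

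A preliminary step uses Lemma~\ref{lem:logistic_solutions} together with null-space containment (condition 1). Since $\phi_\nu(X)^\top v=0$ holds $\P_{\exp}$-a.s. on $\{Q_\nu=1\}$ and $\P_{\obs}$-a.s., all of $m$, $s_\nu$, and their $\beta$-derivatives are unchanged if $\wh{\beta}$ is replaced by $\wh{\beta}^\bot$. Strictly, for $s_\nu$ the identity $Q_\nu\phi_\nu^\top v=0$ suffices; for $m$ we use condition 1. We may therefore work with $\wh{\beta}^\bot \to \beta_0$ throughout.

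\textbf{Bias term.} We expand $M_{\corr}$ around the truth.
\begin{itemize}
\item In the $g$-direction, $M_{\corr}$ is linear in $g$ given $\alpha,\beta$. By the choice of $\alpha_{0,g}$ as Riesz weight, the first-order term vanishes. What remains is the cross term $\E_{\obs}[(\wh{\alpha}-\alpha_{0,g})D(g_0-\wh{g})]$ together with the cross terms between $\wh{g}-g_0$ and $\wh{\beta}-\beta_0$, which arise because $m$ is bilinear in $g$ and $p/q$.
\item In the $\beta$-direction, $A_0$ cancels the Jacobian by construction (Assumption~\ref{ass:row_space}). A second-order Taylor expansion in $\beta$ then leaves terms of order $\|\wh{\beta}^\bot-\beta_0\|^2$, plus $(\wh{A}-A_0)\E_{\exp}[s(Z;\wh{\beta})]$. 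The latter is $O(\|\wh{A}-A_0\|\,\|\wh{\beta}^\bot-\beta_0\|)$ because $\E_{\exp}[s(Z;\beta_0)]=0$ and $s$ is Lipschitz in $\beta$.
\end{itemize}
Boundedness (condition 5), together with $q_0$ and $\sigma$-values bounded away from zero on bounded feature sets, controls the Hessians uniformly. Condition 2 then makes every product $o_\P(T^{-1/2})$.

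\textbf{Empirical-process remainders.} For the observational part, condition 3 gives independence of $(\wh{g},\wh{\alpha})$ from the sample. Conditional on these, $\G_n[\psi_{\obs}(\wh{g},\wh{\alpha},\beta)-\psi_{\obs}(g_0,\alpha_{0,g},\beta)]$ has mean zero and variance $O(n^{-1}(\|\wh{g}-g_0\|^2+\|\wh{\alpha}-\alpha_{0,g}\|^2))=o_\P(n^{-1})$ by Chebyshev and condition 4.

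The dependence on $\wh{\beta}$ is the main obstacle, since $\wh{\beta}$ is fit on the full experimental sample and appears in both sums. We expect to address it by exploiting the finite-dimensional parametric structure. Both $\beta\mapsto\psi_{\obs}(W;g,\beta,\alpha)$ and $\beta\mapsto s(Z;\beta)$ are smooth with uniformly bounded derivatives on an $\ell_2$-ball of radius $M$, so they form Lipschitz-in-parameter classes. Such classes are Donsker with bracketing entropy polynomial in $1/\epsilon$, so stochastic equicontinuity yields $\sup_{\|\beta-\beta_0\|\le\delta_n}|\G_n(f_\beta-f_{\beta_0})| = o_\P(n^{-1/2})$ for $\delta_n\to0$. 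The same bound holds for $\Hh_N$ with rate $o_\P(N^{-1/2})$.

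For the experimental term, we write $\wh{A}s(\cdot;\wh{\beta})-A_0 s(\cdot;\beta_0) = (\wh{A}-A_0)s(\cdot;\wh{\beta}) + A_0\{s(\cdot;\wh{\beta})-s(\cdot;\beta_0)\}$. The first piece satisfies $\|\wh{A}-A_0\|\,\sup_\beta|\Hh_N s_\beta| = o_\P(1)O_\P(N^{-1/2})$ by the Donsker property. The second piece is handled by equicontinuity. For the observational term, the cross-dependence on $(\wh{g},\wh{\alpha})$ and $\wh{\beta}$ is handled by conditioning on $(\wh{g},\wh{\alpha})$, which are bounded by $M$, so the class indexed by $\beta$ stays uniformly Lipschitz. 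Since both $n,N\ge T$, all remainders are $o_\P(T^{-1/2})$, which establishes asymptotic linearity.

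\textbf{Normality.} The two linear terms are independent, centred, bounded sums with variance $\Sigma_0^2=\sigma_{\obs}^2/n+\sigma_{\exp}^2/N\ge c/T$. Hence $\Sigma_0^{-1}\cdot o_\P(T^{-1/2})=o_\P(1)$. Applying Lindeberg--Feller to the triangular array of the combined standardized sum, which is valid under boundedness and positive variances however $n/N$ behaves, gives $\Sigma_0^{-1}(\wh{\Delta}-\Delta_0)\Rightarrow\calN(0,1)$. Slutsky's lemma with $\wh{\Sigma}/\Sigma_0\to1$ completes the proof.
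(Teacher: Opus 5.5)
Your proposal is correct and follows essentially the paper's proof: the same two-sample bias-plus-empirical-process decomposition, Neyman orthogonality leaving only second-order products in the bias term, the split $(\wh A - A_0)\Hh_N s(\cdot;\wh\beta) + A_0\Hh_N\{s(\cdot;\wh\beta) - s(\cdot;\beta_0)\}$ controlled by the Donsker property and equicontinuity, and Lindeberg--Feller plus Slutsky. Your explicit reduction to $\wh\beta^\bot$ and the bound $\Sigma_0 \gtrsim T^{-1/2}$ only spell out steps the paper leaves implicit, and the one real difference is the observational remainder: there $\wh\beta$ is not an obstacle, since it depends only on the experimental sample, which is independent of $W_1,\dots,W_n$, so the paper conditions on $(\wh g,\wh\alpha,\wh\beta)$ jointly and applies Chebyshev, which makes your uniform-in-$\beta$ equicontinuity step valid but unnecessary.
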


\begin{proof}
To ease notation, we let $\wh{\eta} = (\wh{g}, \wh{\alpha})$.
Using a similar expansion to the one leveraged in the proof of Lemma 1 in the supplement of \citet{chernozhukov2023automatic}, we get the following decomposition:
\begin{align*}
\wh{\Delta} - \Delta_0 = \left(\P_n - \E_{\obs}\right) \psi_{\obs}(W; \eta_0, \beta_0) + \Q_N \psi_{\exp}(Z; \beta_0, A_0) + R_1 + R_2,
\end{align*}
where we have defined
\begin{align*}
R_{1, \obs} &:= \E_{\obs, W}\psi_{\obs}(W; \wh{\eta}, \wh{\beta})  - \E_{\obs}\psi_{\obs}(W; \eta_0, \beta_0) \\
R_{1, \exp} &:= \E_{\exp, Z}\psi_{\exp}(Z; \wh{\beta}, \wh{A}) - \E_{\exp}\psi_{\exp}(Z; \beta_0, A_0).\\
R_{2, \obs} &:= \G_n\left\{\psi_{\obs}(W; \wh{\eta}, \wh{\beta}) - \psi_{\obs}(W; \eta_0, \beta_0)\right\} \\
R_{2, \exp} &:= \Hh_{N}\left\{\psi_{\exp}(Z; \wh{\beta}, \wh{A}) - \psi_{\exp}(z; \beta_0,  A_0)\right\}
\end{align*}
and $R_b := R_{b, \obs} + R_{b, \exp}$ for $b \in \{1, 2\}$. Thus, to show asymptotic linearity, we just need to argue that $R_1, R_2 = o_\P(T^{-1/2})$

\paragraph{Controlling $R_1$.} Observe that, performing a second order functional Taylor expansion, we get
\begin{align*}
R_{1,\obs} &= D_\eta \E_{\obs, W}\psi_{\obs}(W; \eta_0, \beta_0)(\wh{\eta} - \eta_0) + D_\beta\E_{\obs, W}\psi_{\obs}(W; \eta_0, \beta_0)(\wh{\beta} - \beta_0)  \\
&+\frac{1}{2}D_{(\eta, \beta)}^2 \E_{\obs, W}\psi_{\obs}(W; \wt{\eta}, \wt{\beta})\big((\wh{\eta}, \wh{\beta}) - (\eta_0, \beta_0)\big),
\end{align*}
where $\wt{\eta} \in [\eta_0, \wh{\eta}]$ and $\wt{\beta} \in [\beta_0, \wh{\beta}]$.\footnote{Here, the inclusion in the interval means $\wt{\eta}$ is a convex combination of $\eta_0$ and $\wh{\eta}$.}
Likewise, a similar decomposition on $R_{1, \exp}$ yields
\begin{align*}
R_{1, \exp} &= D_{\beta}\E_{\exp, Z}\psi_{\exp}(Z; \beta_0, A_0)(\wh{\beta} - \beta_0) + D_{A}\E_{\exp, Z}\psi_{\exp}(Z; \beta_0, A_0)(\wh{A} - A_0)  \\
&+\frac{1}{2}D_{(\beta, A)}^2 \E_{\exp, Z}\psi_{\exp}(Z; \wb{\beta}, \wb{A})\big((\wh{\beta}, \wh{A}) - (\beta_0, A_0)\big),
\end{align*}
where now $\wb{\beta} \in [\beta_0, \wh{\beta}]$ and $\wb{A} \in [A_0, \wh{A}]$.

Now, observe that by Neyman orthogonality (shown in Appendix~\ref{sec:neyman_construction}), we know that:
\begin{gather*}
D_\eta \E_{\obs, W}\psi_{\obs}(W; \eta_0, \beta_0)(\wh{\eta} - \eta_0) = 0,\\
D_{A}\E_{\exp, Z}\psi_{\exp}(Z; \beta_0, A_0)(\wh{A} - A_0) = 0, \;\;\text{and} \\
D_\beta\E_{\obs, W}\psi_{\obs}(W; \eta_0, \beta_0)(\wh{\beta} - \beta_0) = - D_{\beta}\E_{\exp, Z}\psi_{\exp}(Z; \beta_0, A_0)(\wh{\beta} - \beta_0).
\end{gather*}
Further, we can bound second Gateaux derivatives as well. In particular, we observe that the following second/cross Gateaux derivatives are identically zero for the observational score:
\begin{gather*}
D_{g}^2 \E_{\obs, W}\psi_{\obs}(W; \wt{\eta}, \wt{\beta})(\wh{g} - g_0) = 0 \qquad D_\alpha^2 \E_{\obs, W}\psi_{\obs}(W; \wt{\eta}, \wt{\beta})(\wh{\alpha} - \alpha_{0, g}) = 0 \\
\forall \nu, \tau \in \calJ, \nu \neq \tau, \;\; D_{\beta_\tau, \beta_\nu} \E_{\obs, W}\psi_{\obs}(W; \wt{\eta}, \wt{\beta})(\wh{\beta}_\tau - \beta_{0, \tau}, \wh{\beta}_\nu - \beta_{0, \nu}) = 0.
\end{gather*}
Now, define $\wt{p}(x) := \sigma\big(\phi_p(X)^\top \wt{\beta}_p\big)$ and $\wt{q}(\tau, x) := \sigma\big(\phi_\tau(X)^\top \wt{\beta}_\tau\big)$ for each $\tau \in \calJ$. The observational Hessian blocks that do not vanish are given by:
\begin{align*}
&D_{\alpha, g}\E_{\obs, W}\psi_{\obs}(W; \wt{\eta}, \wt{\beta})(\wh{\eta} - \eta_0) = \E_{\obs, W}\left[D(\wh{\alpha} - \alpha_{0, g})(X)(\wh{g} - g_0)(X)\right] \\
&D_{\beta_p}^2\E_{\obs, W}\psi_{\obs}(W; \wt{\eta}, \wt{\beta})(\wh{\beta}_p - \beta_{p, 0}) \\
&\qquad = -\E_{\obs}\left[\wt{g}(X)\wt{p}(X)(1 - \wt{p}(X))(1 - 2\wt{p}(X))\left\{\sum_{\tau \in \calJ}\frac{S_\tau}{\wt{q}(\tau, X)}\right\}\left(\phi_p(X)^\top(\wh{\beta}_p - \beta_{p, 0})\right)^2\right] \\
&D_{\beta_p, \beta_\tau}\E_{\obs, W}\psi_{\obs}(W; \wt{\eta}, \wt{\beta})(\wh{\beta}_p - \beta_{p, 0}, \wh{\beta}_\tau - \beta_{0, \tau}) \\
&\qquad = \E_{\obs, W}\left[\wt{g}(X)\wt{p}(X)(1 - \wt{p}(X))S_\tau \frac{1 - \wt{q}(\tau, X)}{\wt{q}(\tau, X)}\left(\phi_p(X)^\top(\wh{\beta}_p - \beta_{p, 0})\right)\left(\phi_\tau(X)^\top(\wh{\beta}_\tau - \beta_{0, \tau})\right)\right] \\
&D_{\beta_\tau}^2 \E_{\obs, W}\psi_{\obs}(W; \wt{\eta}, \wt{\beta})(\wh{\beta}_\tau - \beta_{0, \tau}) \\
&\qquad = -\E_{\obs, W}\left[\wt{g}(X)\wt{p}(X)S_\tau\frac{1 - \wt{q}(\tau, X)}{\wt{q}(\tau, X)}\left(\phi_\tau(X)^\top(\wh{\beta}_\tau - \beta_{0, \tau})\right)^2\right] \\
&D_{\beta_p, g}\E_{\obs, W}\psi_{\obs}(W; \wt{\eta}, \wt{\beta})(\wh{\beta}_p - \beta_{p, 0}, \wh{g} - g_0) \\
&\qquad = - \E_{\obs, W}\left[(\wh{g} - g_0)(X)\wt{p}(X)\left\{1 - \wt{p}(X)\right\}\sum_{\tau \in \calJ}\frac{S_\tau}{\wt{q}(\tau, X)}\varphi_p(X)^\top(\wh{\beta}_p - \beta_{p, 0})\right] \\
&D_{\beta_\tau, g}\E_{\obs, W}\psi_{\obs}(W; \wt{\eta}, \wt{\beta})(\wh{\beta}_\tau - \beta_{\tau,0}, \wh{g} - g_0) \\
&\qquad = \E_{\obs, W}\left[(\wh{g} - g_0)(X)\wt{p}(X)S_\tau \frac{1 - \wt{q}(\tau, X)}{\wt{q}(\tau, X)}\phi_\tau(X)^\top(\wh{\beta}_\tau - \beta_{0, \tau})\right]
\end{align*}
Now leveraging the fact that (1) the coordinate embeddings $\phi_\nu(X)$ for $\nu \in \{p\} \cup \calJ$, nuisances, and nuisance estimates are assumed to be almost surely bounded and (2) the assumption that the nullspace of $\phi_\nu(X)$ contains $\calN_\nu$ for each $\nu \in \calJ \cup \{p\}$, we have that
\begin{align*}
\left|D_{\alpha, g}\E_{\obs, W}\psi_{\obs}(W; \wt{\eta}, \wt{\beta})(\wh{\eta} - \eta_0)\right| &\lesssim  \left\|(\wh{\alpha} - \alpha_{0, g})(\wh{g} - g_0)\right\|_{L^2(\P_{\obs})} \\
\left|D_{\beta_p}^2\E_{\obs, W}\psi_{\obs}(W; \wt{\eta}, \wt{\beta})(\wh{\beta}_p - \beta_{p, 0})\right| &\lesssim \|\wh{\beta}_p^\bot - \beta_{p, 0}\|_2^2\\
\left|D_{\beta_p, \beta_\tau}\E_{\obs, W}\psi_{\obs}(W; \wt{\eta}, \wt{\beta})(\wh{\beta}_p - \beta_{p, 0}, \wh{\beta}_\tau - \beta_{0, \tau})\right| &\lesssim  \|\wh{\beta}_p^\bot - \beta_{p, 0}\|_2\times \|\wh{\beta}_\tau - \beta_{0, \tau}\|_2\\
\left|D_{\beta_\tau}^2 \E_{\obs, W}\psi_{\obs}(W; \wt{\eta}, \wt{\beta})(\wh{\beta}_\tau - \beta_{0, \tau})\right| &\lesssim \|\wh{\beta}_\tau^\bot - \beta_{0, \tau}\|_2^2 \\
\left|D_{\beta_p, g}\E_{\obs, W}\psi_{\obs}(W; \wt{\eta}, \wt{\beta})(\wh{\beta}_p - \beta_{p, 0}, \wh{g} - g_0)\right| &\lesssim  \left\|\wh{g} - g_0\right\|_{L^2(\P_{\obs})} \left\|\wh{\beta}_p^\bot - \beta_{p, 0}\right\|_2 \\
|D_{\beta_\tau, g}\E_{\obs, W}\psi_{\obs}(W; \wt{\eta}, \wt{\beta})(\wh{\beta}_\tau - \beta_{\tau,0}, \wh{g} - g_0)| &\lesssim \left\|\wh{g} - g_0\right\|_{L^2(\P_{\obs})} \times \|\wh{\beta}_\tau^\bot - \beta_{0, \tau}\|_2.
\end{align*}

We can repeat the same exercise for the experimental score. In particular, the following blocks of the experimental second Gateaux derivative exactly vanish:
\begin{gather*}
D_A^2\E_{\exp, Z}\psi_{\exp}(Z; \wb{\beta}, \wb{A})(\wh{A} - A_0) = 0 \\
\forall \nu, \tau \in \calJ \cup \{p \},\;\;D_{\beta_\nu, \beta_\tau}\E_{\exp, Z}\psi_{\exp}(Z; \wb{\beta}, \wb{A})(\wh{\beta}_\nu - \beta_{0, \nu}, \wh{\beta}_\tau - \beta_{0, \tau}) = 0.
\end{gather*}

We now can compute the non-zero elements of the second Gateaux derivative. To simplify computation, note for any $A \in \R^{1 \times d}$ with $d = \sum_{\nu \in \calJ \cup \{p\}}d_\nu$ that we can write $A = \left(A_\nu : \nu \in \calJ \cup \{p\}\right)$, where $A_\nu \in \R^{1 \times d_\nu}$. Using this, we have
\[
A s(Z; \beta) = A_p s_p(Z; \beta_p) + \sum_{\tau \in \calJ}A_\tau s_\tau(Z; \beta_\tau) \;\; \forall \beta \in \R^d.
\]
Using this notation, and letting $\wb{q}(\nu, x) := \sigma(\phi_\nu(X)^\top \wt{\beta}_\nu)$ for any such $\nu$, we can express the second Gateaux derivatives as
\begin{align*}
&D_{\beta,A}\E_{\exp, Z}\psi_{\exp}(Z; \wb{\beta}, \wb{A})(\wh{\beta} - \beta_0, \wh{A} - A_0) \\
&\qquad = -\sum_{\nu \in \calJ \cup \{p\}}\E_{\exp, Z}\left[Q_\nu\wb{q}(\nu; X)(1 - \wb{q}(\nu;X))(\wh{A}_\nu - A_{\nu, 0})\phi_\nu(X)\phi_\nu(X)^\top(\wh{\beta}_\nu - \beta_{0, \nu})\right] \\
&D_{\beta_\nu}^2 \E_{\exp, Z}\psi_{\exp}(Z; \wb{\beta}, \wb{A})(\wh{\beta}_\nu - \beta_{0, \nu}) \\
&\qquad = -\E_{\exp, Z}\left[Q_\nu \wb{q}(\nu, X)\left\{1 - \wb{q}(\nu, X)\right\}\left\{1 - 2\wb{q}(\nu, X)\right\}\wb{A}_\nu\phi_\nu(X)\left\{\phi_\nu(X)^\top (\wh{\beta}_\nu - \beta_{0, \nu})\right\}^2\right].
\end{align*}
Leveraging again the (1) almost sure boundedness of feature embeddings, nuisances, and nuisance estimates and (2) the fact that $Q_\nu \phi_\nu(X)^\top \wh{\beta}_\nu = Q_\nu \phi_\nu(X)^\top \wh{\beta}_\nu^\top$ almost surely for all $\nu$, we obtain the following upper bounds.
\begin{align*}
\left|D_{\beta,A}\E_{\exp, Z}\psi_{\exp}(Z; \wb{\beta}, \wb{A})(\wh{\beta} - \beta_0, \wh{A} - A_0)\right| &\lesssim  \left\|\wh{A} - A_0\right\|_2\left\|\wh{\beta}^\bot - \beta_0\right\|_2 \\
\left|D_{\beta_\nu}^2 \E_{\exp, Z}\psi_{\exp}(Z; \wb{\beta}, \wb{A})(\wh{\beta}_\nu - \beta_{0, \nu})\right| &\lesssim \left\|\wh{\beta}_\nu^\bot - \beta_{0, \nu}\right\|_2^2,
\end{align*}
all of which are $o_\P(T^{-1/2})$ under our nuisance estimation rate assumptions. 

Thus, putting everything together and noting that $\|\wh{\beta}_\nu^\bot - \beta_{0, \nu}\|_2 \leq \|\wh{\beta}^\bot - \beta_0\|_2$ for any $\nu \in \calJ \cup \{p\}$, we have that 
\begin{align*}
\left|R_1\right| &\leq \left|D^2_{(\eta, \beta)}
\E_{\obs, W}\psi_{\obs}(W; \wt{\eta}, \wt{\beta})\big((\wh{\eta}, \wh{\beta}) - (\eta_0, \beta_0)\big)\right| +  \\
&\qquad \qquad + \left|D^2_{(A, \beta)}\E_{\exp, Z}\psi_{\exp}(Z; \wb{\beta}, \wb{A})\big((\wh{\beta}, \wh{A}) - (\beta_0, A_0)\big)\right| \\
&\lesssim \left\|\wh{\alpha} - \alpha_{0, g}\right\|_{L^2(\P_{\obs})} \times \left\|\wh{g} - g_0\right\|_{L^2(\P_{\obs})} + \left\|\wh{\beta}^\bot - \beta_0\right\|_2^2 \\
&\qquad \qquad + \left\|\wh{A} - A_0\right\|_2 \times \left\|\wh{\beta}^\bot - \beta_0\right\|_2 + \left\|\wh{g} - g_0\right\|_{L^2(\P_{\obs})} \times \left\|\wh{\beta}^\bot - \beta_0\right\|_2\\
&= o_\P(T^{-1/2}),
\end{align*}
where the final equality follows from our nuisance estimation rate assumption.

\paragraph{Controlling $R_2$.}

We now argue that $R_{2, \obs} = o_\P(n^{-1/2})$ and $R_{2, \exp} = o_\P(N^{-1/2})$, starting with the former term. Since the nuisance estimates $\wh{\beta}$ and $\wh{\eta} = (\wh{g}, \wh{\alpha})$ are assumed independent of the samples $W_1, \dots, W_n$, we can control $R_{2, \obs}$ with standard machinery. In particular, defining $\wh{\zeta} = (\wh{\eta}, \wh{\beta})$,  $\zeta_0 = (\eta_0, \beta_0)$, and $\Delta(W; \wh{\zeta}, \zeta_0) = \psi_{\obs}(W; \wh{\zeta}) - \psi_{\obs}(W; \zeta_0)$, if we can show that 
\begin{align*}
\Var_{\obs, W}\left[\sqrt{n}\G_n\Delta(W; \wh{\zeta}, \zeta_0)\right] = o_\P(1)
\end{align*}
then applying Chebyshev's/Markov's inequality conditionally in the same manner as \citet{chernozhukov2023automatic} will yield $\sqrt{n}\G_n \Delta(W; \wh{\zeta}, \zeta_0) = o_\P(1)$, i.e., $R_{2, \obs} = o_\P(n^{-1/2})$.

To see the desired convergence of the above conditional variance, let $\wb{\Delta} := \E_{\obs, W} \Delta(W; \wh{\zeta}, \zeta_0)$ and note that we have
\begin{align*}
\Var_{\obs, W}\left[\sqrt{n}\G_n \Delta(W; \wh{\zeta}, \zeta_0)\right] &= \E_{\obs, W}\left[\left(\sqrt{n}\G_n\Delta(W; \wh{\zeta}, \zeta_0)\right)^2\right] \\
&= \frac{1}{n} \E_{\obs, W}\left[\sum_{i, j = 1}^n \left\{\Delta(W_i; \wh{\zeta}, \zeta_0) - \wb{\Delta}\right\}\left\{\Delta(W_j; \wh{\zeta}, \zeta_0) - \wb{\Delta}\right\}\right] \\
&= \frac{1}{n} \E_{\obs, W}\left[\sum_{i = 1}^n \left\{\Delta(W; \wh{\zeta}, \zeta_0) - \wb{\Delta}\right\}^2\right] \\
&= \E_{\obs, W}\left[\left(\Delta(W; \wh{\zeta}, \zeta_0) - \wb{\Delta}\right)^2\right] \\
&\leq \E_{\obs, W}\left[\left(\psi_{\obs}(W; \wh{\zeta}) - \psi_{\obs}(W; \zeta_0)\right)^2\right].
\end{align*}

We can show that the last term in the previous display converges in probability to zero by leveraging nuisance consistency alongside our boundedness assumption. In more detail, the parallelogram inequality yields 
\begin{align*}
&\E_{\obs, W}\left[\left(\psi_{\obs}(W; \wh{\zeta}) - \psi_{\obs}(W; \zeta_0)\right)^2\right]  \\
&\qquad \lesssim \E_{\obs, W}\left[\left(\wh{g}(X)\wh{p}(X)\sum_{\tau \in \calJ}\frac{S_\tau}{\wh{q}(\tau, X)} - g_0(X)p_0(X)\sum_{\tau \in \calJ} \frac{S_\tau}{q_0(\tau, X)}\right)^2\right] \\
&\qquad \qquad + \E_{\obs, W}\left[\left(\wh{\alpha}(X)D\left\{Y - \wh{g}(X)\right\} - \alpha_{0, g}(X)D\left\{Y - g_0(X)\right\}\right)^2\right].
\end{align*}
First, note that $\ell_2$-consistency for $\wh{\beta}$ alongside the fact that $\phi_\nu(X)$ is $\P_{\obs}$-almost surely bounded by $M$ implies that the random functions $\wh{p}(x) := \sigma(\wh{\phi}_p(x)^\top \wh{\beta}_p)$ and $\wh{q}(\tau, x) := \sigma(\wh{\phi}_\tau(x)^\top \wh{\beta}_\tau)$ are (i) $L^2(\P_{\obs})$-consistent and (ii) are strictly bounded between zero and one. Since we have also assumed that all nonparametric nuisances, nonparametric nuisance estimates, and the outcomes $Y$ are also bounded by $M$, consistency of $\wh{\alpha}$ and $\wh{g}$ alongside the noted consistency of $\wh{p}$ and $\wh{q}(\tau, \cdot)$ implies that
\begin{gather*}
\left\|\wh{g}\wh{p}\sum_{\tau}\frac{S_\tau}{\wh{q}} - g_0p_0\sum_{\tau}\frac{S_\tau}{q_0}\right\|_{L^2(\P_{\obs})} = o_\P(1) \\
\Big\|\wh{\alpha}D\{Y - \wh{g}\} - \alpha_{0, g}D\{Y - g_0\}\Big\|_{L^2(\P_{\obs})} = o_\P(1),
\end{gather*}
which thus shows the desired convergence result for the variance.

We now argue that $R_{2, \exp} = o_\P(N^{-1/2})$. Because $\wh{\beta}$ is fit on the experimental samples, arguing that $\R_{2, \exp} = o_\P(N^{-1/2})$ takes a bit more care. First, observe that we can write 
\[
R_{2, \exp} = \underbrace{(\wh{A} - A_0)\Hh_Ns(Z; \wh{\beta})}_{=: T_1} + \underbrace{A_0 \Hh_N\left\{s(Z; \wh{\beta}) - s(Z; \beta_0)\right\}}_{=: T_2},
\]
which follows from adding and subtracting $A_0\Hh_N s(Z; \wh{\beta})$.
Next, observe that by Example 19.7 of \citet{van2000asymptotic}, because $\phi_\nu(X), \beta_\nu,$ and $\wh{\beta}_\nu$ are assumed bounded for each $\nu \in \calJ \cup \{p\}$ and because, for each such $\nu$,\footnote{This follows from the fact that $\sigma'(u) \leq \frac{1}{4}$ for all $u \in \R$.}
\[
\left\|s(z; \beta) - s(z; \beta')\right\|_2 \leq \sup_{t \in \R}\sigma'(t) \max_{\nu \in \calJ \cup \{p\}}\left\|\phi_\nu(x)\right\|_2\times \left\|\beta - \beta'\right\|_2 \leq \frac{1}{4}M\left\|\beta - \beta'\right\|_2,
\]
it follows that the class of functions
\[
\calS := \left\{s(\cdot; \beta) : \calZ \rightarrow \R^d : \left\|\beta\right\|_2 \leq M\right\}
\]
is $\P_{\exp}$-Donsker.
We now use this information to control the growth of $T_1$ and $T_2$.
First, we control $T_1$. Since the class $\calS$ is $\P_{\exp}$-Donsker and since we have assumed $\left\|\wh{A} - A_0\right\|_2 = o_\P(1)$, we have 
\begin{align*}
\sqrt{N}|T_1| &= \left|(\wh{A} - A_0)\sqrt{N}\Hh_Ns(Z; \wh{\beta})\right| \\
&\leq \left\|\wh{A} - A_0\right\|_2 \times \sup_{s \in \calS} \left\|\sqrt{N}\Hh_Ns(Z)\right\| \\
&= o_\P(1) \times O_\P(1) = o_\P(1),
\end{align*}
and hence $T_1 = o_\P(N^{-1/2})$. 

Next, note that the consistency assumption $\|\wh{\beta} - \beta_0\|_2 = o_\P(1)$ implies that $\E_{\exp, Z}\left[\left(s(Z; \wh{\beta}) - s(Z; \beta_0)\right)^2\right] = o_\P(1)$. Consequently, Lemma 19.24 of \citet{van2000asymptotic} implies that
\[
\sqrt{N}\Hh_N\left\{s(Z; \wh{\beta}) - s(Z; \beta_0)\right\} = o_\P(1).
\]
This in turn yields that
\begin{align*}
\sqrt{N}|T_2| &= \left|A_0 \sqrt{N}\Hh_N\left\{s(Z; \wh{\beta}) - s(Z; \beta_0)\right\}\right| \\
&\leq \left\|A_0\right\|_2 \left\|\sqrt{N}\Hh_N\left\{s(Z; \wh{\beta}) - s(Z; \beta_0)\right\}\right\|_2 \\
&= O(1) \cdot o_\P(1) = o_\P(1).
\end{align*}
Thus, we have shown that $R_{2, \exp} = o_\P(N^{-1/2})$, which completes the proof of asymptotic linearity.

\paragraph{Arguing Asymptotic Normality}

Now, to prove asymptotic normality, we leverage the Lindeberg-Feller central limit theorem for triangular arrays (Theorem 3.4.2 of \citep{durrett2019probability}). Let $(n_k)_{k \geq 1}$ and $(N_k)_{k \geq 1}$ be sequences of sample sizes such that the sequence $(T_k)_{k \geq 1}$ defined as $T_k := \min\{n_k, N_k\}$ satisfies $T_k \xrightarrow[k \rightarrow \infty]{} \infty$, and let $W_1^k, \dots, W_{n_k}^k \sim \P_{\obs}$ and $Z_1^{k}, \dots, Z_{N_k}^k \sim \P_{\exp}$ be random variables that are all mutually independent. Further, define $m_k := n_k + N_k$. For each $k \geq 1$, define the random variables $U_{k, 1}, \dots, U_{k, m_k}$ by
\[
U_{k, t} := \begin{cases}
\frac{1}{n_k}\Sigma_{k,0}^{-1}\varphi_{\obs}(W_t^{k})  &\text{for } t \in \{1, \dots, n_k\}, \\
\frac{1}{N_k}\Sigma_{0, k}^{-1}\varphi_{\exp}(Z^k_{t - n_k}) &\text{for } t \in \{n_k + 1, \dots, n_k + N_k\},
\end{cases}
\]
where we have defined the oracle standard error as 
\[
\Sigma_{0, k} := \sqrt{\frac{\sigma_{\obs}^2}{n_k} + \frac{\sigma_{\exp}^2}{N_k}}
\]
for each $k \geq 1$.

Clearly, the random variables $U_{k, 1}, \dots, U_{k, m_k}$ are all mutually independent for each fixed $k \geq 1$. Further, we have by construction that
\[
\sum_{t = 1}^{m_k}\Var\left[U_{k, t}\right] = \frac{\sigma_{\obs}^2}{n\Sigma_{0, k}^2} + \frac{\sigma_{\exp}^2}{N\Sigma_{0, k}^2} = 1.
\]
Lastly, under our boundedness assumption, it is straightforward to see that, for any $\epsilon > 0$, there is some sufficiently large $K_\epsilon$ such that $|U_{k, t}| \leq \epsilon$ for all $k \geq K_\epsilon$. Consequently, we have
\[
\lim_{k \rightarrow \infty} \sum_{t = 1}^{m_k}\E\left[|U_{k, t}|^2\1\big\{|U_{k, t}| > \epsilon\big\}\right] = 0,
\]
where here $\E[\cdots]$ denotes the expectation over the randomness in all samples. Thus, the Lindeberg condition is satisfied, and we have the convergence:
\[
\frac{1}{\Sigma_{0, k}}\left\{\wh{\Delta} - \Delta_0\right\} \Rightarrow \calN(0, 1).
\]
Lastly, we see that we have 
\begin{align*}
\frac{1}{\Sigma_{0, k}}\left\{\wh{\Delta} - \Delta_0\right\} &= \frac{1}{\wh{\Sigma}}\left(\frac{\wh{\Sigma}}{\Sigma_{0, k}} - 1\right)\left\{\wh{\Delta} - \Delta_0\right\}  + \frac{1}{\wh{\Sigma}}\left\{\wh{\Delta} - \Delta_0\right\} \\
&= \frac{1}{\wh{\Sigma}}\left\{\wh{\Delta} - \Delta_0\right\} + o_\P(1),
\end{align*}
and consequently $\frac{1}{\wh{\Sigma}}\left\{\wh{\Delta} - \Delta_0\right\} \Rightarrow \calN(0, 1)$, which proves asymptotic normality of the studentized estimate.
\end{proof}
\subsection{Asymptotic Linearity and Normality for Estimates of $\kappa_0$ and $\rho_0$.}

In the main body of the paper, we primarily report two scale-free estimands: the \textit{incrementality coefficient (IC)} $\kappa_0 := \Delta_0/\E_{\obs}[G A]$ and the \textit{incrementality ratio (IR)} $\rho_0 := \Delta_0/\E_{\obs}[G Y]$. We provide a corollary of Theorem~\ref{thm:asymptotic_linearity} that more or less leverages the delta method to show the asymptotic normality of the studentized estimates of these quantities.

Before we cite and prove the corollary, we provide some notation. First, for $r \in \{\kappa, \rho\}$, we define the influence functions
\begin{align*}
\varphi_{\obs, r}(W) &:= \frac{1}{\mu_0}\left\{\psi_{\obs}(W; \eta_0, \beta_0) - \Delta_0\right\} - \frac{\Delta_0}{\mu_0^2}\left\{H_r(W) - \mu_0\right\} \\
\varphi_{\exp, r}(Z) &:= \frac{1}{\mu_0} \psi_{\exp}(Z; \beta_0, A_0),
\end{align*}
where we recall that $H_\kappa(W) := S \cdot A$ and $H_\rho(W) := S \cdot Y$ and have set $\mu_0 := \E_{\obs}[H_r(W)]$. Further, we define the observational and experimental standard errors respectively as 
\[
\Sigma_{r, \obs} := \sqrt{\frac{\Var[\varphi_{\obs, r}(W)]}{n}} \quad \text{and} \quad \Sigma_{r, \exp} := \sqrt{\frac{\Var[\varphi_{\exp, r}(Z)]}{N}},
\]
and further define the joint standard error as
\[
\Sigma_{r} := \sqrt{\frac{\Var[\varphi_{\obs, r}(W)]}{n} + \frac{\Var[\varphi_{\exp, r}(Z)]}{N}} \equiv \sqrt{\Sigma_{r, \obs}^2 + \Sigma_{r, \exp}^2}.
\]
Lastly, we define the estimate for $r_0 \in \{\kappa_0, \rho_0\}$ as $\wh{r} = \frac{\wh{\Delta}}{\P_n H_r(W)}$, where $\wh{\Delta}$ is as defined in Algorithm~\ref{alg:dml}.

\begin{corollary}
\label{cor:ic_ir_convergence}
Let $r \in \{\kappa, \rho\}$, and suppose the identifying assumptions in Section~\ref{sec:model} hold. Then, the estimator $\wh{r}$ outlined above is asymptotically linear, taking the form
\[
\wh{r} - r_0 = \frac{1}{n}\sum_{i = 1}^n \varphi_{\obs, r}(W_i) + \frac{1}{N}\sum_{j = 1}^N \varphi_{\exp, r}(Z_j) + o_\P(T^{-1/2}),
\]
where again $T := \min\{n, N\}$.
Consequently, if $\wh{\Sigma}_r$ is any estimate of $\Sigma_r$ satisfying the consistency relation
\[
\frac{\wh{\Sigma}_r}{\Sigma_r} - 1 = o_\P(1),
\]
then we have asymptotic normality of the studentized statistic:
\[
\frac{\wh{r} - r_0}{\wh{\Sigma}_r} \Rightarrow \calN(0, 1).
\]
\end{corollary}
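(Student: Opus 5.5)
The plan is a delta-method argument built on the asymptotic linearity in Theorem~\ref{thm:asymptotic_linearity}, applied to the ratio map $(x,y)\mapsto x/y$ at $(\Delta_0,\mu_0)$, where $\mu_0 := \E_{\obs}[H_r(W)] > 0$. The assumptions of Theorem~\ref{thm:asymptotic_linearity} are taken in force, together with boundedness of $A$ (for $r=\kappa$) and $\mu_0>0$, since the corollary implicitly relies on them. Write $\wh{\mu} := \P_n H_r(W)$. Because $H_r$ is a fixed bounded function of $W$, the ordinary CLT gives
\[
\wh{\mu}-\mu_0 = (\P_n - \E_{\obs})H_r = O_\P(n^{-1/2}),
\]
and Theorem~\ref{thm:asymptotic_linearity} gives
\[
\wh{\Delta}-\Delta_0 = \P_n\varphi_{\obs} + \Q_N\varphi_{\exp} + o_\P(T^{-1/2}) = O_\P(T^{-1/2}).
\]

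Next I would linearize the ratio using the exact algebraic identity
\[
\wh{r}-r_0 = \frac{\wh{\Delta}-\Delta_0}{\wh{\mu}} - \frac{\Delta_0(\wh{\mu}-\mu_0)}{\wh{\mu}\,\mu_0}.
\]
Replacing $\wh{\mu}$ by $\mu_0$ in both denominators costs terms of order $|\wh{\mu}-\mu_0|\cdot(|\wh{\Delta}-\Delta_0| + |\wh{\mu}-\mu_0|)/\wh{\mu}\mu_0$. This is $O_\P(n^{-1/2})\,O_\P(T^{-1/2}) = o_\P(T^{-1/2})$, since $\wh{\mu}\to\mu_0>0$ in probability. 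After the replacement,
\[
\wh{r}-r_0 = \frac{1}{\mu_0}\left(\P_n\varphi_{\obs}+\Q_N\varphi_{\exp}\right) - \frac{\Delta_0}{\mu_0^2}\,\P_n(H_r-\mu_0) + o_\P(T^{-1/2}),
\]
where the $o_\P(T^{-1/2})$ remainder from Theorem~\ref{thm:asymptotic_linearity}, divided by $\mu_0$, stays $o_\P(T^{-1/2})$. Grouping the observational terms gives exactly $\P_n\varphi_{\obs,r} + \Q_N\varphi_{\exp,r}$, which is the claimed expansion. The cross-fit version is handled identically, as noted after Theorem~\ref{thm:asymptotic_linearity}.

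For normality I would reuse the Lindeberg--Feller triangular-array argument from the proof of Theorem~\ref{thm:asymptotic_linearity}, now with summands $\varphi_{\obs,r}(W_i)/(n\Sigma_r)$ and $\varphi_{\exp,r}(Z_j)/(N\Sigma_r)$. These summands are bounded, so the Lindeberg condition holds. One point needs care. The remainder is $o_\P(T^{-1/2})$, but dividing it by $\Sigma_r$ should still give $o_\P(1)$. This requires $\Sigma_r \gtrsim T^{-1/2}$, which holds provided $\Var[\varphi_{\exp,r}]>0$ (inherited from $\sigma_{\exp}^2>0$) and $\Var[\varphi_{\obs,r}]>0$. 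Then $\Sigma_r^2 \ge c/\min(n,N) = c/T$.

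I expect this to be the main obstacle. The observational variance of the combined influence function $\varphi_{\obs,r}$ could vanish even though $\sigma_{\obs}^2>0$. The experimental term alone already suffices, because $\Sigma_r \ge \Sigma_{r,\exp} \asymp N^{-1/2}$. However, if $N \gg n$ we would need the observational piece too. So I would state nondegeneracy of both variances as an assumption, mirroring Theorem~\ref{thm:asymptotic_linearity}. With that in place, Slutsky's lemma combined with $\wh{\Sigma}_r/\Sigma_r \to 1$ yields the studentized convergence $(\wh{r}-r_0)/\wh{\Sigma}_r \Rightarrow \calN(0,1)$, exactly as in the last display of the proof of Theorem~\ref{thm:asymptotic_linearity}.
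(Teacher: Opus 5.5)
Your proposal is correct and follows the paper's route: a delta-method linearization of $\wh{\Delta}/\wh{\mu}$ around $(\Delta_0,\mu_0)$, where you use the exact ratio identity and the paper uses a second-order Taylor expansion with a mean-value Hessian remainder, followed by a CLT and Slutsky's lemma. Two differences are minor. Your Lindeberg--Feller step replaces the paper's separate CLTs combined via independence, which is inessential. Your explicit requirement that both $\Var[\varphi_{\obs,r}]$ and $\Var[\varphi_{\exp,r}]$ be positive, so that $\Sigma_r \gtrsim T^{-1/2}$ absorbs the $o_\P(T^{-1/2})$ remainder, makes precise a point the paper's proof uses only implicitly.
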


\begin{proof}
Define $\wh{\mu} := \P_n H_r(W)$, and observe that the following hold:
\begin{gather*}
\wh{\mu} - \mu_0 = \P_n\left\{H_r(W) - \mu_0\right\} + o_\P(n^{-1/2}) \\
\wh{\Delta} - \Delta_0 = \P_n\left\{\psi_{\obs}(W; \eta_0, \beta_0)- \Delta_0\right\} + \Q_N\psi_{\exp} + o_\P(T^{-1/2}). 
\end{gather*}
Note that $\wh{r} = f(\wh{\Delta}, \wh{\mu})$ and $r_0 = f(\Delta_0, \mu_0)$ where $f(\Delta, \mu) = \frac{\Delta}{\mu}$. Further, we have $\nabla f(\Delta, \mu) = (1/\mu, -\Delta/\mu^2)$. Thus, a second order Taylor expansion with mean value theorem remainder yields
\begin{align*}
\wh{r} - r_0 &= \frac{1}{\mu_0}\left\{\wh{\Delta} - \Delta_0\right\} - \frac{\Delta_0}{\mu_0^2}\left\{\wh{\mu} - \mu_0\right\} + \frac{1}{2}\begin{pmatrix} \wh{\Delta} - \Delta_0 \\ \wh{\mu} - \mu_0\end{pmatrix}^\top \nabla^2 f(\wt{\Delta}, \wt{\mu})\begin{pmatrix} \wh{\Delta} - \Delta_0 \\ \wh{\mu} - \mu_0\end{pmatrix},
\end{align*}
for some $\wt{\Delta} \in [\Delta_0, \wh{\Delta}]$ and $\wt{\mu} \in [\mu_0, \wh{\mu}]$. Here, the Hessian is given by 
\[
\nabla^2 f(\Delta, \mu) = \begin{pmatrix}
0 & -\mu^{-2} \\
-\mu^{-2} & 2\Delta\mu^{-3}
\end{pmatrix}.
\]
Since $\mu_0 > 0$ and $\wh{\mu} \xrightarrow[]{\P} \mu_0$, it thus follows that 
\begin{align*}
\wh{r} - r_0 &= \frac{1}{\mu_0}\left\{\wh{\Delta} - \Delta_0\right\} - \frac{\Delta_0}{\mu_0^2}\left\{\wh{\mu} - \mu_0\right\} + O_\P\left(|\wh{\Delta} - \Delta_0||\wh{\mu} - \mu_0| + |\wh{\mu} - \mu_0|^2\right) \\
&= \P_n \varphi_{\obs, r}(W) + \Q_N \varphi_{\exp, r}(Z) + o_\P(T^{-1/2}),
\end{align*}
which follows by the definitions of $\varphi_{\obs, r}$, $\varphi_{\exp, r}$ and because $|\wh{\mu} - \mu_0| = O_\P(n^{-1/2})$ and $|\wh{\Delta} - \Delta_0| = O_\P(T^{-1/2})$. Hence, asymptotic linearity holds.  

Now, to obtain asymptotic normality, note that the central limit theorem implies that
\[
U_n := \frac{1}{\Sigma_{\obs, r}}\P_n \varphi_{\obs, r}(W) \Rightarrow U \quad \text{and} \quad V_N := \frac{1}{\Sigma_{\exp, r}}\Q_N \varphi_{\exp, r}(Z) \Rightarrow V,
\]
where $U, V \sim \calN(0, 1)$ are standard normal random variables. Further, because the observational samples $W_1, \dots, W_n$ and the experimental samples $Z_1, \dots, Z_N$ are independent, it actually follows that $(U_n, V_N) \Rightarrow \calN(0, I_2)$. Consequently, we have
\[
\frac{\P_n \varphi_{\obs, r}(W) + \Q_N \varphi_{\exp, r}(Z)}{\Sigma_r} = \frac{\Sigma_{r, \obs} U_n + \Sigma_{r, \exp}V_N}{\Sigma_r} \Rightarrow \calN(0, 1).
\]
Asymptotic normality of the studentized statistic now follows from the assumption that the standard error estimate $\wh{\Sigma}_r$ is consistent in the noted sense.

\end{proof}

\subsection{Inference for the Average Monthly Estimand}
\label{sec:neyman_construction:time_avg}

In our empirical results, we estimate and report the target parameter $r\in \{ \Delta, \kappa, \rho\}$ separately for each target period (in this case, month) $t \in \calT := \{1, \dots, t_{\max}\}$ and also report our estimates for the time-averaged estimand:
\[
\bar{r}_0 := \frac{1}{T_{\max}} \sum_{t=1}^{T_{\max}} r_{0,t},
\]
where $r_{0,t}$ denotes the true value of the parameter for month $t$. The natural estimator is $\wh{\bar{r}} := \frac{1}{T_{\max}} \sum_{t=1}^{T_{\max}} \wh{r}_t$, where each $\wh{r}_t$ is produced by Algorithm~\ref{alg:dml} applied to the period $t$ data. In this subsection, we derive the influence function of $\wh{\bar{r}}$ and present its asymptotic normality.

\paragraph{Data structure and notation.}
Throughout, we operate under the assumption that the observational units are independent across periods, whereas the experimental units are the same. This, roughly, reflects the actual setting of our application: there is a single experiment whose data is shared across months, and the units considered within any month are subsampled from a much larger overpopulation, and hence should be (roughly) independent. In the sequel, for each $t \in \calT$, we let $o_t$ denote the set of observational sample unit indices for period $t$, $n_t := |o_t|$, let $o := \bigcup_{t=1}^{T_{\max}} o_t$, and set $n := |o| = n_1 + \cdots + n_{t_{\max}}$. For $t \in \calT$ and $i \in o_t$, we let $W_{i, t}$ denote the partial observation for unit $i$ in period $t$. Further, we let $T_t := \min\{n_t, N\}$ (where again $N$ denotes the experimental sample size), we let $\varphi_{r,\obs}^{(t)}$ denote the period $t$ influence functions, and let $\varphi_{r, \exp}^{(t)}$ denote the experimental influence function.\footnote{Note that although we leverage the same sample $Z_1, \dots, Z_N$ for each period $t \in \calT$, for $r \in \{\rho, \kappa\}$, the experimental influence function naturally depends on the distribution of $W_t$. Hence, we must index the experimental influence function by the period.} Further, we set $T_{\min} := \min_{t \in \calT} T_t$ to be the minimum sample size throughout the $t$ periods. 

\paragraph{Asymptotic linearity of the average.}
From Corollary~\ref{cor:ic_ir_convergence}, the month-specific estimator satisfies:
\[
\wh{r}_t - r_{0,t} = \frac{1}{n_t}\sum_{i \in o_t} \varphi_{r,\obs}^{(t)}(W_{i,t}) + \frac{1}{N}\sum_{j = 1}^N \varphi_{r,\exp}^{(t)}(Z_{j}) + o_\P(T_t^{-1/2}).
\]
Using this expansion, we can reason about the asymptotic linearity of the aggregated estimator. In particular, we have
\begin{align*}
\wh{\bar{r}} - \bar{r}_0 &= \frac{1}{t_{\max}} \sum_{t=1}^{t_{\max}} \left(\frac{1}{n_t}\sum_{i \in o_t} \varphi_{r, \obs}^{(t)}(W_{i,t}) + \frac{1}{N}\sum_{j = 1}^N \varphi_{r, \exp}^{(t)}(Z_j)\right) + o_\P(T_{\min}^{-1/2}) \\
&= \frac{1}{t_{\max}}\sum_{t = 1}^{t_{\max}}\left(\frac{1}{n_t}\sum_{i \in o_t} \varphi_{r, \obs}^{(t)}(W_{i, t}) \right) + \frac{1}{N}\sum_{j = 1}^N \frac{1}{t_{\max}}\sum_{t = 1}^{t_{\max}}\varphi_{r, \exp}^{(t)}(Z_j) + o_\P(T_{\min}^{-1/2})\\
&= \frac{1}{n}\sum_{i \in o} \frac{n}{t_{\max}n_{t(i)}}\varphi_{r, \obs}^{(t(i))}(W_{i, t(i)}) + \frac{1}{N}\sum_{j = 1}^N \wb{\varphi}_{r, \exp}(Z_j)  + o_\P(T_{\min}^{-1/2}),
\end{align*}
where $t(i)$ denotes the unique period unit $i$ belongs to and we have defined the average experimental influence function $\wb{\varphi}_{r, \exp}$ as
\[
\wb{\varphi}_{r, \exp}(Z) := \frac{1}{t_{\max}}\sum_{t = 1}^{t_{\max}}\varphi_{r, \exp}^{(t)}(Z).
\]
Note that all observational units associated with period $t$ have precisely the same influence function, which is their original influence function appropriately reweighted to adjust for relative prevalence of period $t$ samples amongst the total aggregated population of $n$ units.

\paragraph{Standard error and asymptotic normality.}
Since all units are assumed to be independent of one another, and since observational units in the same period share the same influence function, the variance of the leading term can be computed as follows:
\begin{align*}
\Sigma_{\bar{r}}^2 &= \frac{1}{t_{\max}^2}\sum_{t = 1}^{t_{\max}}  \frac{1}{n_t}\Var_{\obs, t}\left[\varphi_{r, \obs}^{(t)}(W_t)\right] + \frac{1}{N}\Var_{\exp}\left[\wb{\varphi}_{r, \exp}(Z)\right],
\end{align*}
where $\Var_{\obs, t}$ denotes the distribution over the observations $W_t$ in period $t$. Let $\wh{\Var}_{\obs, t}$ denote the recommended estimator for the observational variance from Algorithm~\ref{alg:dml} for period $t$, and let $\wh{\Var}_{\exp}$ be defined as
\[
\wh{\Var}_{\exp} := \frac{1}{N}\sum_{j = 1}^N\left(\wh{\bar{\varphi}}_{r, \exp}(Z_j) - \frac{1}{N}\sum_{j = 1}^N \wh{\bar{\varphi}}_{r, \exp}(Z_j)\right)^2.
\]
Here, we have defined the plug-in influence function $\wh{\bar{\varphi}}_{r, \exp}$ naturally as
\[
\wh{\bar{\varphi}}_{r, \exp}(Z) := \begin{cases}
\frac{1}{t_{\max}}\sum_{t = 1}^{t_{\max}}\wh{A}^{(t)}s(Z; \wh{\beta}) &\text{if } r = \Delta \\
\frac{1}{t_{\max}}\sum_{t = 1}^{t_{\max}}\frac{\wh{A}^{(t)}s(Z; \wh{\beta})}{\wh{\mu}_r^{(t)}} &\text{if } r \in \{\kappa, \rho\},
\end{cases}
\]
where $\wh{A}^{(t)}$ and $\wh{\mu}_r^{(t)}$ denote estimates associated with period $t$.
The natural plug-in estimator for the standard error of the aggregated estimate is given by
\begin{align}
\label{eq:agg_std_error}
\wh{\Sigma}_{\bar{r}} &= \sqrt{\frac{1}{t_{\max}^2}\sum_{t = 1}^{t_{\max}}\frac{1}{n_t}\wh{\Var}_{\obs, t} + \frac{1}{N}\wh{\Var}_{\exp}}.
\end{align}

The convergence of the studentized statistic is guaranteed by Corollary~\ref{cor:avg_monthly} below. The proof follows readily using the same techniques as in the proofs of Theorem~\ref{thm:asymptotic_linearity} and Corollary~\ref{cor:ic_ir_convergence} above, and thus we do not prove the corollary.

\begin{corollary}[Asymptotic Normality of the Average Monthly Estimand]
\label{cor:avg_monthly}
Suppose the conditions of Theorems~\ref{thm:asymptotic_linearity} and Corollary~\ref{cor:ic_ir_convergence} hold when applied to each period $t \in \calT$. Further, suppose that $T_{\min} \rightarrow \infty$. Then, the average monthly estimator $\wh{\bar{r}}$ satisfies $\wh{\Sigma}_{\bar{r}}^{-1}(\wh{\bar{r}} - \bar{r}_0) \Rightarrow \calN(0, 1)$.
\end{corollary}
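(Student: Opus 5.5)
The proof combines the month-by-month expansions from Corollary~\ref{cor:ic_ir_convergence} into one asymptotically linear representation. I then apply a single Lindeberg--Feller argument to that representation and handle the studentization separately.

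\emph{Step 1: linear representation.} For each $t \in \calT$, Corollary~\ref{cor:ic_ir_convergence} (or Theorem~\ref{thm:asymptotic_linearity} when $r=\Delta$) gives the expansion of $\wh{r}_t - r_{0,t}$ with remainder $o_\P(T_t^{-1/2})$. Since $t_{\max}$ is fixed and $T_t \ge T_{\min}$, averaging these expansions gives the display preceding the corollary:
\[
\wh{\bar r}-\bar r_0=\frac{1}{t_{\max}}\sum_{t}\frac{1}{n_t}\sum_{i\in o_t}\varphi^{(t)}_{r,\obs}(W_{i,t})+\frac1N\sum_{j}\wb{\varphi}_{r,\exp}(Z_j)+o_\P(T_{\min}^{-1/2}).
\]

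\emph{Step 2: asymptotic normality of the leading term.} The leading term is a sum of independent, mean-zero summands. The summands are indexed by all observational units across periods, which are independent by the data-structure assumption, together with the $N$ shared experimental units. Each experimental unit enters only once, through $\wb{\varphi}_{r,\exp}$. The correlation across months arising from the shared experiment is therefore absorbed into this averaged influence function, and the variance of the sum is exactly $\Sigma_{\bar r}^2$.

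I would form the triangular array $U_{k,\cdot}$ as in the proof of Theorem~\ref{thm:asymptotic_linearity}, normalizing by $\Sigma_{\bar r}$. The boundedness assumption makes every influence function uniformly bounded. Each summand is therefore of order $\max\{(t_{\max}n_t)^{-1}, N^{-1}\}/\Sigma_{\bar r}$.

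To verify the Lindeberg condition, I need $\Sigma_{\bar r}$ to be of order at least $\max_t (n_t^{-1/2}/t_{\max}) \vee N^{-1/2}$. I would obtain this by assuming the per-period variances $\Var_{\obs,t}[\varphi^{(t)}_{r,\obs}]$ and $\Var_{\exp}[\wb{\varphi}_{r,\exp}]$ are bounded away from zero. This is the analogue of $\sigma^2_{\obs},\sigma^2_{\exp}>0$ in Theorem~\ref{thm:asymptotic_linearity}. Under that assumption, each normalized summand is $O(T_{\min}^{-1/2})\to 0$ uniformly, so the Lindeberg condition holds trivially. The same lower bound converts the $o_\P(T_{\min}^{-1/2})$ remainder into $o_\P(1)$ after dividing by $\Sigma_{\bar r}$.

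\emph{Step 3: consistency of the standard error.} It remains to show $\wh{\Sigma}_{\bar r}/\Sigma_{\bar r}\to 1$ in probability; Slutsky's lemma then finishes the proof. Because $t_{\max}$ is fixed, it suffices to show that each $\wh{\Var}_{\obs,t}$ and $\wh{\Var}_{\exp}$ is consistent for its population counterpart, since both are bounded away from zero.

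For the observational pieces, I would apply the same argument that handles $R_{2,\obs}$. Cross-fitting, independence, boundedness, and $L^2$-consistency of the nuisances imply that the plug-in $\wh{U}_i$ differ from $\varphi^{(t)}_{r,\obs}(W_{i,t})$ by $o_\P(1)$ in empirical $L^2$. A law of large numbers then applies.

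For the experimental piece, $\wh{\bar\varphi}_{r,\exp}$ is a finite average of $\wh{A}^{(t)}s(Z;\wh\beta)/\wh{\mu}^{(t)}_r$. I would use the Donsker class $\calS$ together with $\wh{A}^{(t)}\to A_0^{(t)}$, $\wh\beta^\perp\to\beta_0$, and $\wh\mu^{(t)}_r\to\mu^{(t)}_{0}$. These give uniform convergence of $\Q_N$ applied to the square and to the first moment of $\wh{\bar\varphi}_{r,\exp}$.

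I expect this last step to be the main obstacle. The estimator $\wh\beta$ is fit on the same experimental sample over which the variance is averaged, so I cannot condition the dependence away. The Glivenko--Cantelli property of $\{(As(\cdot;\beta))^2\}$ over bounded $A$ and $\beta$, which follows from the Lipschitz bound used for $\calS$, should handle it.
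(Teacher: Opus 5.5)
Your proposal is correct and takes the route the paper sketches. The paper itself states the averaged linear representation with $\wb{\varphi}_{r,\exp}$ and the variance $\Sigma_{\bar r}^2$, then defers to the techniques of Theorem~\ref{thm:asymptotic_linearity} and Corollary~\ref{cor:ic_ir_convergence} without a further proof.

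Your two additions are genuinely needed, not cosmetic:
\begin{itemize}
\item Positivity of $\Var_{\exp}[\wb{\varphi}_{r,\exp}]$ does not follow from the per-month conditions, because the monthly experimental influence functions can cancel in the average. If they do, the $o_\P(T_{\min}^{-1/2})$ remainder need not be negligible relative to $\Sigma_{\bar r}$.
\item Consistency of $\wh{\Sigma}_{\bar r}$ does not follow from the per-month standard-error hypotheses, because $\wh{\Var}_{\exp}$ involves cross-month covariances induced by the shared experiment. Proving it directly, as you do in Step 3, is therefore necessary.
\end{itemize}
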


\section{Relation to Data Fusion for Long-Term Effects via Surrogates}
\label{app:surrogates}

Our data fusion identification result shares conceptual similarities with the literature on long-term effects via surrogates~\citep{athey2025surrogate, prentice1989surrogate, price2018estimation}. Such settings are characterized by having access to an observational and an experimental dataset governed by the causal graphs depicted in Figure~\ref{fig:data_environments_surrogates}.
\begin{figure}[ht]
\centering
\begin{subfigure}[b]{0.45\textwidth}
\centering
\begin{tikzpicture}[node distance=1cm and 1.5cm, line width=0.7pt]
    \node[obs] (X) at (0, 2.5) {$X$};
    \node[lat] (G) at (-1.8, 0) {$G$};
    \node[obs] (B) at (0, 0) {$B$};
    \node[obs] (Y) at (1.8, 0) {$Y$};
    \draw[arr] (X) -- (G);
    \draw[arr] (X) -- (B);
    \draw[arr] (X) -- (Y);
    \draw[arr] (G) -- (B);
    \draw[arr] (B) -- (Y);
\end{tikzpicture}
\caption{Observational data containing short-term surrogate outcomes $B$ and long term outcome $Y$.}
\label{fig:dag_obs}
\end{subfigure}
\hfill
\begin{subfigure}[b]{0.45\textwidth}
\centering
\begin{tikzpicture}[node distance=1cm and 1.5cm, line width=0.7pt]
    \node[obs] (X) at (0, 2.5) {$X$};
    \node[obs] (G) at (-1.8, 0) {$G$};
    \node[obs] (B) at (0, 0) {$B$};
    \node[lat] (Y) at (1.8, 0) {$Y$};
    \draw[arr] (X) -- (G);
    \draw[arr] (X) -- (B);
    \draw[arr] (X) -- (Y);
    \draw[arr] (G) -- (B);
    \draw[arr] (B) -- (Y);
\end{tikzpicture}
\caption{Experimental data containing treatment $G$ and short-term surrogate outcomes $B$.}
\label{fig:dag_exp}
\end{subfigure}
\caption{Causal directed acyclic graphs (DAGs) assumed by the surrogate outcome framework. In the surrogate setting, the intermediate variable $B$ is fully observed in both datasets and serves as a mediator between treatment $G$ and long-term outcome $Y$. In our setting, $B$ is observed in both datasets but we do not have $Y \independent G \mid (X, B)$. Further, the treatment $G$ is only partially observed in the observational data.}
\label{fig:data_environments_surrogates}
\end{figure}
In both settings, experimental data is combined with observational data to identify a causal effect that neither dataset could identify alone. One might intuitively view the booking decision $B$ as a short-term ``surrogate'' and spend $Y$ as a long-term outcome. However, our setting departs from the standard surrogate framework in several crucial ways. First, $B$ is not a complete mediator (the Prentice criterion \cite{prentice1989surrogate} fails): the gift card $G$ directly affects the spend amount $Y$ even conditional on booking. In standard surrogate settings, identification typically fails if the surrogate is incomplete. We achieve identification despite this incompleteness because we have partial observability of the treatment in the observational data via the indicator $S = B \cdot G$, and because booking is a strict prerequisite for spending ($Y=0$ when $B=0$). 
Further, we are interested in understanding the effect of gift card receipt $G$ on the outcome $Y$ in the \textit{observational population} as opposed to the experimental one. Thus, our result provides a distinct identification pathway for settings with censored treatments, expanding the empirical paradigm of data fusion beyond traditional surrogate applications.

\section{Further Details on the Airbnb Case Study}
\label{app:case_study}

In this appendix, we provide additional details regarding the application of our methodology to data at Airbnb. In Section~\ref{app:case_study:estimation}, we provide implementation details for the estimators for the IR and IC.  In Section~\ref{app:case_study:features}, we describe the various features leveraged in estimation. Lastly, in Section~\ref{app:case_study:sensitivity}, we conduct a sensitivity analysis for our findings.

\subsection{Estimation Details}
\label{app:case_study:estimation}

\begin{table}[H]
\centering
\small
\setlength{\tabcolsep}{4pt}
\begin{tabular}{lrrrr}
\toprule
\textbf{Month} & \textbf{$D=1$} & \textbf{$S=1$} & \textbf{$B=0$} & \textbf{Total} \\
\midrule
2024-01 & 3,533,578 & 38,941 & 3,063,986 & 6,636,505 \\
2024-02 & 3,441,571 & 32,838 & 3,076,209 & 6,550,618 \\
2024-03 & 3,754,586 & 33,131 & 3,062,708 & 6,850,425 \\
2024-04 & 3,600,800 & 29,953 & 3,075,835 & 6,706,588 \\
2024-05 & 3,688,202 & 30,865 & 3,075,055 & 6,794,122 \\
2024-06 & 3,652,893 & 29,672 & 3,080,809 & 6,763,374 \\
2024-07 & 3,869,517 & 29,246 & 3,073,336 & 6,972,099 \\
2024-08 & 3,721,814 & 29,173 & 3,083,947 & 6,834,934 \\
2024-09 & 3,492,342 & 25,260 & 3,096,820 & 6,614,422 \\
2024-10 & 3,506,309 & 24,769 & 3,099,847 & 6,630,925 \\
2024-11 & 3,216,049 & 22,600 & 3,116,727 & 6,355,376 \\
2024-12 & 3,198,623 & 26,827 & 3,119,746 & 6,345,196 \\
\bottomrule
\end{tabular}
\caption{Sample sizes for the observational dataset by month and customer
behavior. $D=1$ denotes booking without a gift card, $S=1$ denotes booking
with a gift card across all distribution channels, and $B=0$ denotes no
booking. The counts represent subsampled populations, not raw
superpopulation sizes.}
\label{tab:obs_monthly_counts}
\end{table}

To apply our method, we use the semiparametric cross-fitting estimator described in Algorithm~\ref{alg:dml} in Section~\ref{sec:estimation} with $K = 5$ folds. We compute this estimator separately in each month and each channel, and perform separate nuisance model selection in each segment as well, which we detail below. We leverage stratified splitting in each segment on the observational samples in any month to ensure that the number of non-booking units ($B = 0$), units that book without a gift card ($D = 1$), and units that book with a gift card ($S = 1$) are approximately the same across each fold. We then aggregate the monthly estimates per channel to get a single number representing the incrementality of each channel. We describe the precise details on how monthly effects are combined into yearly ones by appropriately averaging influence functions in Appendix~\ref{sec:neyman_construction:time_avg}.

We fit all experimental nuisances once using the \texttt{LogisticRegression} class from the \texttt{sklearn} Python package~\citep{pedregosa2011scikit} without any regularization.
We simply take the feature embedding as $\phi_\nu(X) \equiv (1, X)$ for each $\nu \in \calT \cup \{p\}$. To fit an estimate $\wh{p}(X)$ of $p_0(X)$, we regress $B_1$ onto $X$ amongst ungifted units (i.e., those with $G = 0$). Given limited experimental samples, we strengthen Assumption~\ref{ass:logistic} to be that $q_0(\tau, x) = \sigma\big(X^\top \beta_0 + \gamma_{\tau, 0}\big)$ for each $\tau \in \calT$, i.e., that there is a shared slope vector for the hazards but month-specific intercepts. To fit estimates $\wh{q}(\tau, X) = \sigma\big(X^\top \wh{\beta} + \wh{\gamma}_\tau\big)$ of $q_0(\tau, X)$, we fit a joint logistic regression model with shared slope estimate $\wh{\beta}$ and per-month intercepts $\wh{\gamma}_\tau$ for $\tau \in \calT$ using the method of \citet{allison1982discrete}. Briefly, this method works by including $t$ entries for a customer who books for the first time in month $t \in \calT$, and augments the base features $X$ with ``dummy'' indicators attached to each month associated with the booker up to the month in which they booked.

We perform model selection for the nonparametric observational nuisances $\ell_0(x), g_0(x),$ and $r_0(x)$ using the \texttt{AutoML} class from the \texttt{flaml} library~\citep{wang2021flaml}. We select the $g_0$ model once per month amongst all units satisfying $D  = B(1 - G) = 1$ (i.e., those who book without a gift card). We select the models $r_0$ and $\ell_0$ once per each channel and month pair using all relevant data (i.e., all non-bookers and bookers without gift card for the month, plus all units who booked with a gift card from the desired channel).  We have the \texttt{AutoML} class select between the classes \texttt{LGBMClassifier} and \texttt{XGBClassifier} in the case of $r_0$ and $\ell_0$, and \texttt{LGBMRegressor} and \texttt{XGBRegressor} for $g_0$, all using the default hyperparameter grid associated with the classes in flaml~\citep{ke2017lightgbm, chen2016xgboost}. We allow \texttt{AutoML} ten minutes to select parameters for each model. Models with the selected hyperparameters are then fit using $K$-fold cross-fitting as noted above. We note that using the entire dataset to select just the model is justified theoretically by~\citep{van2006oracle} and practically by \citep{bach2024hyperparameter}.

Lastly, we note the exact sample sizes leveraged in estimation for each month in Table~\ref{tab:obs_monthly_counts} above. We again emphasize that the numbers displayed in the above table represent counts computed \textit{after subsampling}. We do not directly reveal raw counts nor subsampling rates in the paper.

\subsection{Monthly Estimates of Incrementality}
\label{app:case_study:monthly}

In the main empirical section of the paper, we presented and interpreted yearly estimates for the IR and IC across the four channels of interest. 
We computed these estimates by first producing monthly estimates of incrementality in each channel using the estimator outlined in Algorithm~\ref{alg:dml} and the surrounding text. Then, we aggregated these monthly estimates into yearly ones leveraging the machinery from Appendix~\ref{sec:neyman_construction:time_avg}. 
Our focus in this appendix is to interpret the constructed time series of incrementality estimates, both when all units are considered (Figure~\ref{fig:au_main}) as well as when attention is restricted to suspected self-gifters (Figure~\ref{fig:sg_main}).

\begin{figure}[htb]
\centering
\includegraphics[width=\textwidth]{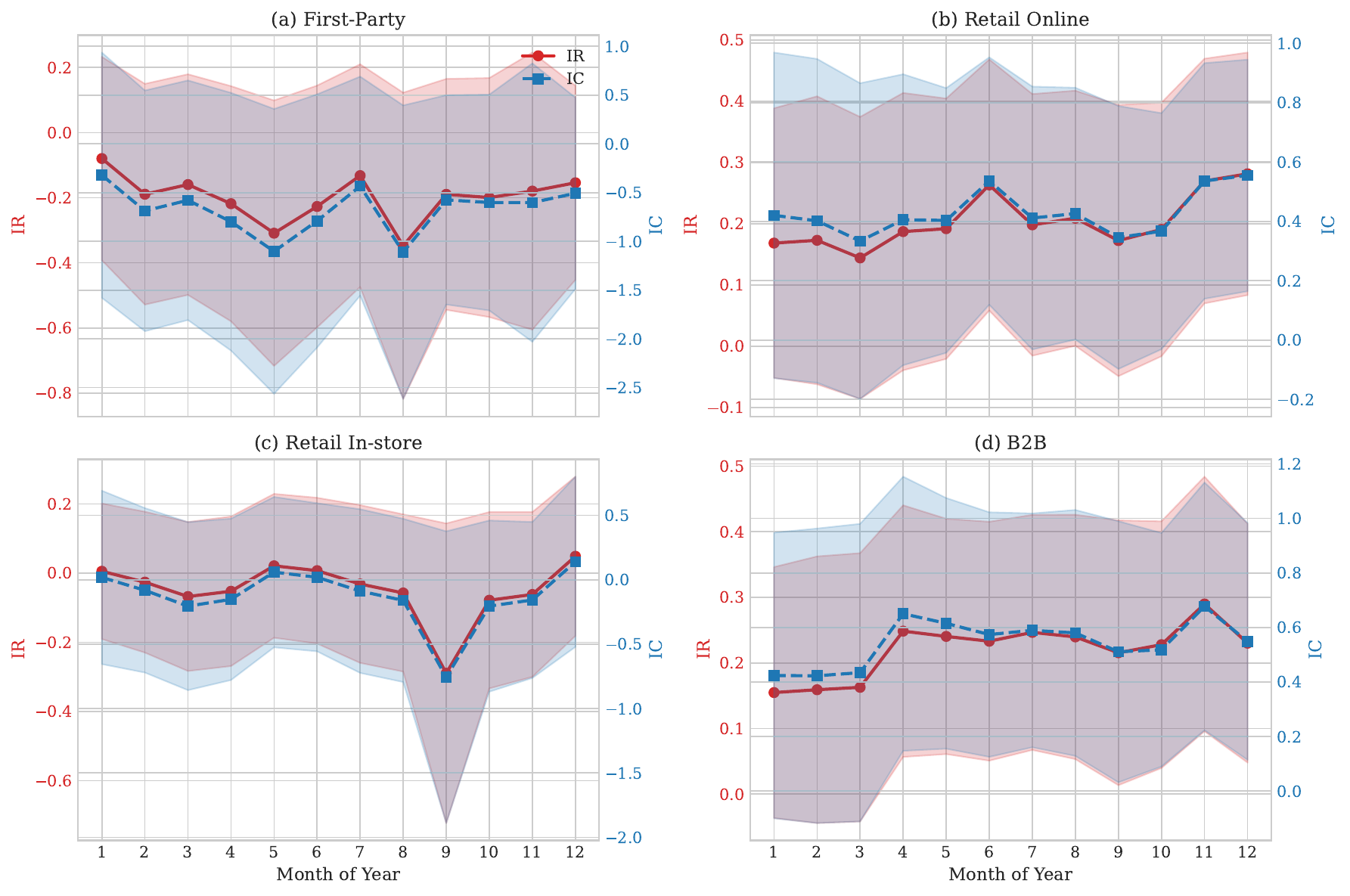}
\caption{Monthly estimates and marginally valid 95\% confidence intervals for the IR and IC broken down by channel.}
\label{fig:au_main}
\end{figure}

We first interpret the numbers shown in Figure~\ref{fig:au_main}. We note that our yearly findings for both First-Party and Retail In-Store from Section~\ref{sec:case_study:main_results} are reflected in the monthly incrementality estimates as well: there does not exist a month where point estimates for the IC or IR in First-Party are positive. While the point estimates for incrementality can be mildly positive in some months for Retail In-Store, the confidence intervals for incrementality always contain zero. 
Interestingly, for Retail Online and B2B, there are some months where the confidence intervals contain zero, and some months where they do not. In more detail, the gift card program is significantly incremental for three months in Retail Online and nine months in B2B. Further, these channels appear least incremental in the first three months of the year and most incremental in the final two. If we suspect that cards from these channels are regularly purchased by self-gifting units, this might reflect an increase in self-gifted bookings prior to the holiday season.

 \begin{figure}[H]
\centering
\includegraphics[width=\textwidth]{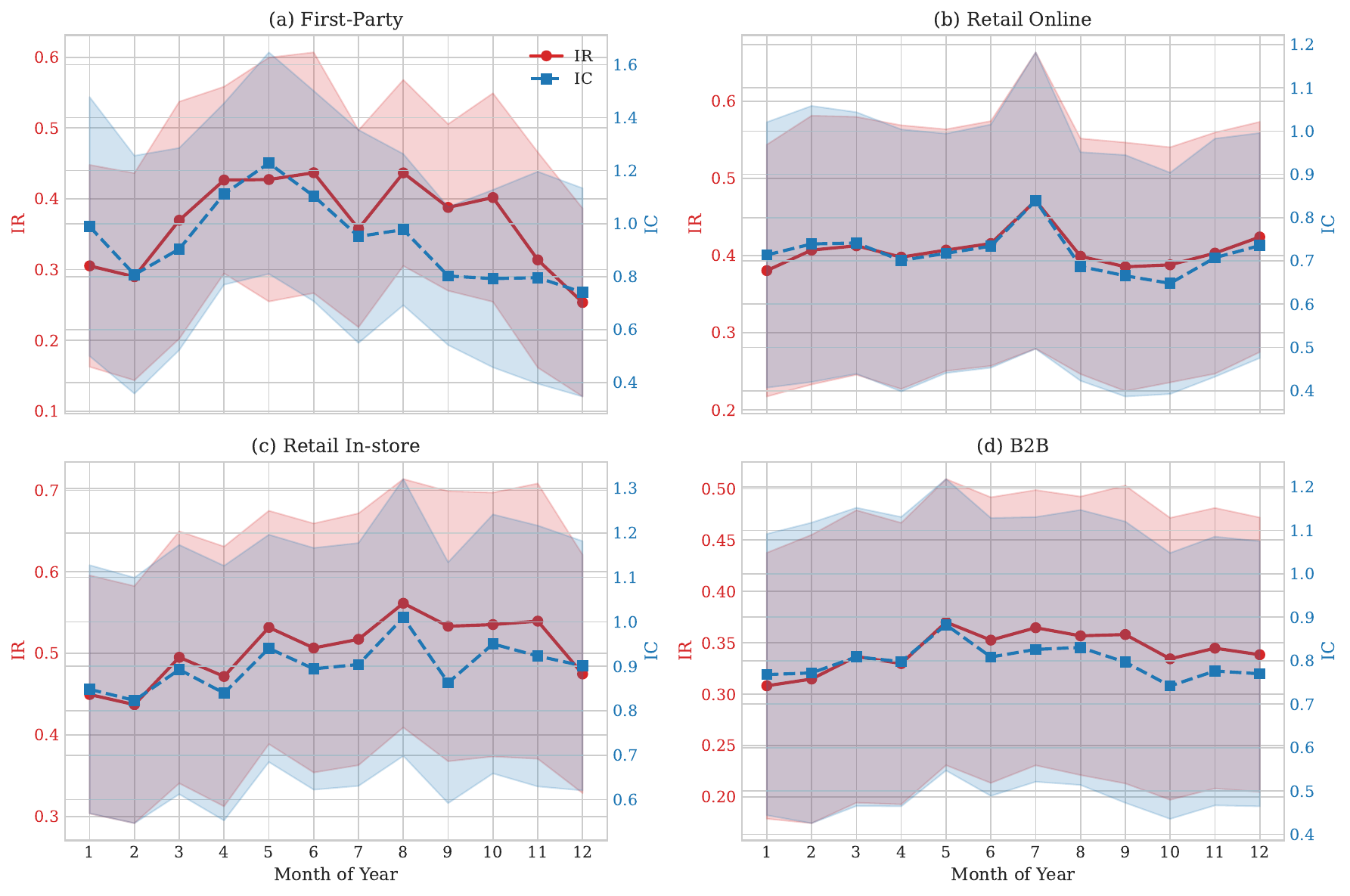}
\caption{Monthly estimates alongside 95\% confidence intervals for both the IR and IC by channel amongst suspected self-gifters.}
\label{fig:sg_main}
\end{figure}

We see that the time series estimates when restricted to self-gifting units largely support our findings from Section~\ref{sec:case_study:self_gift}. We see that all channels exhibit positive incrementality regardless of month, and that all confidence intervals comfortably exclude zero. This thus reaffirms our finding of significant incrementality amongst self-gifting units. 
We see that the point estimates for the IR and IC are largely stable over time for all channels except First-Party, in which there seem to be substantial dips in incrementality in the later months of the year (November, December) as well as in January and February. We do not attempt to identify the specific drivers of these fluctuations.

\subsection{Sensitivity to Transferability Violations}
\label{app:case_study:sensitivity}

We now assess the sensitivity of our estimates to violations of transferability using a marginal sensitivity model akin to those in \cite{rosenbaum2002observational, tan2006distributional, kallus2019interval, zhao2019sensitivity}. We introduce a sensitivity parameter $\Lambda \geq 1$ that bounds the \textit{multiplicative degree of violation of transferability} between the conditional booking odds ratio between the observational dataset and experiment. In more detail, defining the ratio 
\[
\eta_0(\tau, x) := \frac{\P_{\obs}(B = 1 \mid G = 0, X = x)}{\P_{\obs}(B = 1 \mid G_\tau = 1, X = x)},
\]
we are interested in weakening our assumption to that of 
\begin{equation}
\label{eq:sensitivity_bd}
\frac{1}{\Lambda}\leq \eta_0(X)\Big/ \frac{p_0(X)}{q_0(\tau, X)}\leq \Lambda.
\end{equation}
Under this assumption, the incremental revenue $\Delta_0$ is no longer point identified, and hence the IR and IC are not as well.
Nonetheless, partial identification still remains possible. In particular, respective upper and lower bounds on the incremental revenue are given by
\begin{align*}
\Delta_{\mathrm{LB}} &:= \E_{\obs}\left[Y \sum_{\tau \in \calJ} S_\tau\right] - \Lambda \E_{\obs}\left[g_0(X)p_0(X)\sum_{\tau \in \calJ}\frac{S_\tau}{q_0(\tau, X)}\right] \\ 
\Delta_{\mathrm{UB}} &:= \E_{\obs}\left[Y \sum_{\tau \in \calJ} S_\tau\right] - \frac{1}{\Lambda} \E_{\obs}\left[g_0(X)p_0(X)\sum_{\tau \in \calJ}\frac{S_\tau}{q_0(\tau, X)}\right],
\end{align*}
which follows because the formula for the incremental revenue is monotonically increasing in $\Lambda$.
These partial identification bounds naturally imply corresponding upper and lower bounds on both the IC and IR:
\begin{align*}
\rho_0 \in \left[\frac{\Delta_{\mathrm{LB}}}{\E_{\obs}[Y G]} , \frac{\Delta_{\mathrm{UB}}}{\E_{\obs}[Y G]}\right] \quad \text{and} \quad \kappa_0 \in \left[\frac{\Delta_{\mathrm{LB}}}{\E_{\obs}[A G]} , \frac{\Delta_{\mathrm{UB}}}{\E_{\obs}[A G]}\right].
\end{align*}

Using these partial identification bounds, we can compute corresponding \textit{sensitivity values} per channel, which are defined  as \textit{the largest value $\Lambda$ such that} $\Delta_{\mathrm{LB}} \geq 0$. Intuitively, the larger $\Lambda$ is, the more robust the sign of our estimate within a channel is to violations of transferability. 

We report our sensitivity findings when considering both all units in Figure~\ref{fig:au_sens_ic}. We note that, because our point estimates for the incrementality coefficient are negative in First-Party and Retail In-Store channels, we can only apply our sensitivity bounds to Retail Online and B2B. The figure shows the upper and lower bound on the IC as a function of the sensitivity parameter $\Lambda$, and includes a vertical bar noting the breakdown point (i.e., the point at which the lower bound on the IC point estimates becomes zero).  We note that while our plot focuses on the IC $\kappa_0$, the numbers are exactly analogous for the IR $\rho_0$. 
We note that the breakdown points for both of the aforementioned channels sit at around 1.25, indicating that our point estimates are only mildly insensitive to violations of transferability.\footnote{We briefly note how we compute the upper sensitivity bound $\Delta_{\mathrm{UB}}$ for the incremental revenue at a fixed parameter $\Lambda \geq 1$. The case of the lower bound is exactly analogous. To obtain the upper bound, one naturally needs to scale the second summand in the moment $m(W; g, p, q)$ and the $g$-representer term (i.e., the term involving $\alpha_g$) by $\Lambda$, as these terms both naturally involve the ratio $p_0/q_0$. Since the matrices $A_{0, p}$ and $A_{0, q}$ leveraged in the experimental correction naturally depend on these scaled quantities, they are correspondingly modified as well.}

\begin{figure}[H]
\centering
\includegraphics[width=\textwidth]{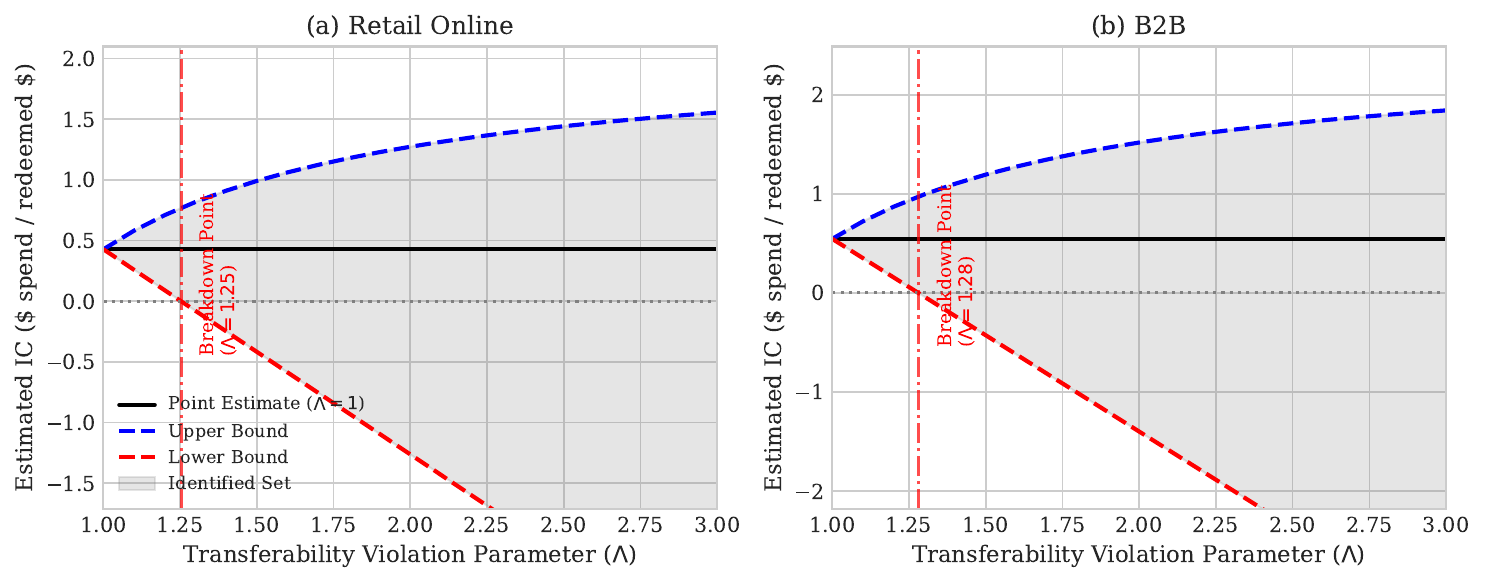}
\caption{Sensitivity to transferability violations for the IC.}
\label{fig:au_sens_ic}
\end{figure}

Figure~\ref{fig:sg_sens_ic} likewise displays the breakdown points for the IC when we restrict our estimates to suspected self-gifters, as defined in Section~\ref{sec:case_study:self_gift}. In this setting, since all four point estimates are positive regardless of channel, we can actually compute all four breakdown points. We see that breakdown point for First-Party, B2B, and Retail Online are all roughly the same, sitting between 1.52 and 1.69, and the breakdown point for Retail In-Store sits at around 2.01. This indicates that our findings for self-gifters are generally much more robust to violations than those presented for the entire population.

\begin{figure}[H]
\centering
\includegraphics[width=\textwidth]{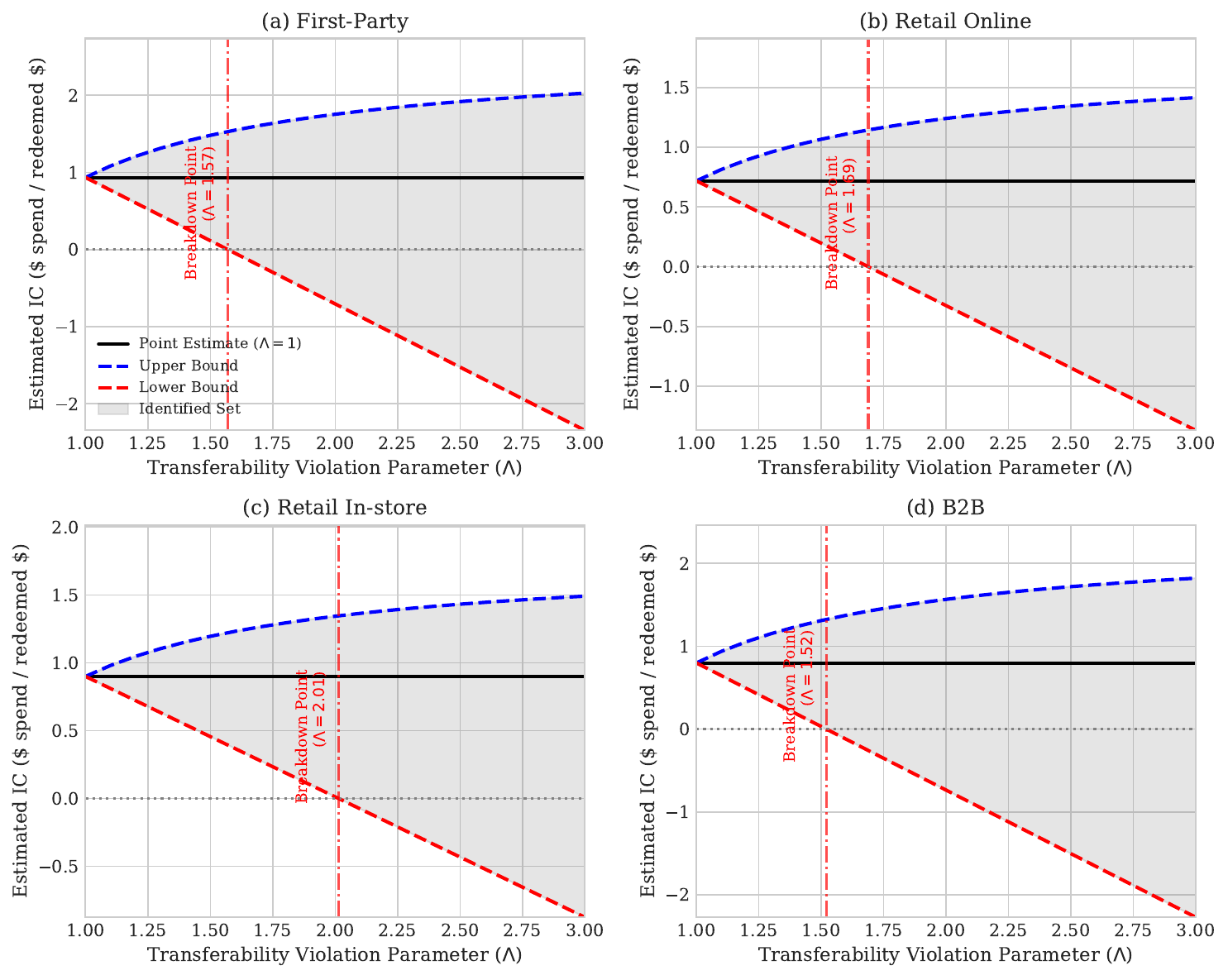}
\caption{Transferability sensitivity for IC, restricted to suspected self-gifters.}
\label{fig:sg_sens_ic}
\end{figure}

\subsection{Features and Grouping}
\label{app:case_study:features}

In our case study, we leverage 153 features  that are designed to capture a wealth of information about a customer's historical interactions with Airbnb. Our goal in gathering features is to explain customer behavior that can potentially impact gift card receipt probability, customer booking probability, and also realized gross booking value (GBV). As noted earlier, these customer behaviors are diverse, and can include self-gifting, price consciousness, pre-existing travel intent, and more. We thus find it fruitful to gather as many features as possible that may serve as proxies for these behaviors. Our features can be roughly grouped into three categories as follows.

\begin{description}
  \item[Gifting-related.] Gifting-related features capture a user's historical interactions with Airbnb's gift card program. These features include a mixture of both average and aggregate statistics, with some examples including the total number of gift cards used (broken down by region and channel), the average and minimum number of days between gift card purchase and redemption, and fraction of historical trips booked using gift card value, for example. See Table~\ref{tab:features_gifting} for an entire enumeration. For the experimental data, these features are measured over the three years leading up to the date of gift card assignment at the start of the experiment. For the observational dataset, these features are measured up to the point of a customer's first booking in the month under consideration, with the convention that features are constructed up to the start of the month for customers that do not make a purchase. 
  We note that, although the window of time used to define the observational features may include the date of gift card receipt (which is generally unknown for all customers), the features were not constructed using information about any received gift card that was leveraged towards a purchase in the target month.
  Many of the features are further broken down by (a) distribution channel (First-Party, B2B, Retail
    In-Store, Retail Online) and (b) geographic region (North America, EMEA,
    APAC, LATAM). Further, each feature associated with measuring the number of days between purchase and some event are accompanied by a binary
    missingness indicator, as some units may have never received a gift card from a certain region (or may have never received a gift card altogether). 
  \item[Historical Spend and Bookings.] Historical spend and booking features capture the volume,
    frequency, and spend amount associated with uncanceled bookings over the preceding three years. We use the same general convention for constructing these features as was leveraged for gifting-related features. This category includes
    average number of bookings per year, average and total nights booked, and the count of
    uncanceled bookings broken down by payment method (e.g., 
    credit card, wallet, bank account, etc.).
    Further, we measure the number of bookings with Friday/Saturday check-in, which is meant to capture vacation travel (vs.\ business travel). Likewise, we count the number of historical
    bookings with any discount applied.
    A full enumeration of these features can be found in Tables~\ref{tab:features_booking} and \ref{tab:features_spend}.
  \item[PDP-related.] PDP-related (or product detail page-related) features aim to capture a user's recent browsing
    behavior on listing detail pages of Airbnb's platform. These features are generally computed over 25 to 30 day
    windows prior to either a customer's first booking in the focal month, or prior to the start of the month for those who do not book. These features include counts of listing views (with and
    without search dates entered), listing impressions with dates, and the number of
    unique listings viewed. They also include point-in-time price information for the most
    recently viewed listing (accommodation nightly price, nightly price with fees and
    taxes, and total trip price), as well as cross-listing summary statistics (mean,
    interquartile range (IQR), and variance) of nightly and total prices across all recently
    viewed listings. Each price-based feature is accompanied by a binary missingness
    indicator. These features are displayed in Table~\ref{tab:features_pdp}.
\end{description}

Throughout the following table, ``NA ind.''\ denotes the binary missingness indicator for the corresponding feature.

\begin{longtable}{ll}
\caption{Gifting-related features.}
\label{tab:features_gifting} \\
\toprule
\textbf{Feature description} & \textbf{Feature description (cont.)} \\
\midrule
\endfirsthead
\multicolumn{2}{c}{\tablename\ \thetable{} -- continued} \\
\toprule
\textbf{Feature description} & \textbf{Feature description (cont.)} \\
\midrule
\endhead
\midrule
\multicolumn{2}{r}{\textit{Continued on next page}} \\
\endfoot
\bottomrule
\endlastfoot
Avg.\ days to claim                         & Avg.\ days to claim (NA ind.) \\
Avg.\ days to claim, 1st party              & Avg.\ days to claim, 1st party (NA ind.) \\
Avg.\ days to claim, B2B                    & Avg.\ days to claim, B2B (NA ind.) \\
Avg.\ days to claim, retail instore         & Avg.\ days to claim, retail instore (NA ind.) \\
Avg.\ days to claim, retail online          & Avg.\ days to claim, retail online (NA ind.) \\
Avg.\ days to claim, North America          & Avg.\ days to claim, North America (NA ind.) \\
Avg.\ days to claim, EMEA                   & Avg.\ days to claim, EMEA (NA ind.) \\
Avg.\ days to claim, Asia                   & Avg.\ days to claim, Asia (NA ind.) \\
Avg.\ days to claim, Latin America          & Avg.\ days to claim, Latin America (NA ind.) \\
Min.\ days to claim                         & Min.\ days to claim (NA ind.) \\
Min.\ days to claim, 1st party              & Min.\ days to claim, 1st party (NA ind.) \\
Min.\ days to claim, B2B                    & Min.\ days to claim, B2B (NA ind.) \\
Min.\ days to claim, retail instore         & Min.\ days to claim, retail instore (NA ind.) \\
Min.\ days to claim, retail online          & Min.\ days to claim, retail online (NA ind.) \\
Min.\ days to claim, North America          & Min.\ days to claim, North America (NA ind.) \\
Min.\ days to claim, EMEA                   & Min.\ days to claim, EMEA (NA ind.) \\
Min.\ days to claim, Asia                   & Min.\ days to claim, Asia (NA ind.) \\
Min.\ days to claim, Latin America          & Min.\ days to claim, Latin America (NA ind.) \\
Avg.\ days to redeem                        & Avg.\ days to redeem (NA ind.) \\
Avg.\ days to redeem, 1st party             & Avg.\ days to redeem, 1st party (NA ind.) \\
Avg.\ days to redeem, B2B                   & Avg.\ days to redeem, B2B (NA ind.) \\
Avg.\ days to redeem, retail instore        & Avg.\ days to redeem, retail instore (NA ind.) \\
Avg.\ days to redeem, retail online         & Avg.\ days to redeem, retail online (NA ind.) \\
Avg.\ days to redeem, North America         & Avg.\ days to redeem, North America (NA ind.) \\
Avg.\ days to redeem, EMEA                  & Avg.\ days to redeem, EMEA (NA ind.) \\
Avg.\ days to redeem, Asia                  & Avg.\ days to redeem, Asia (NA ind.) \\
Avg.\ days to redeem, Latin America         & Avg.\ days to redeem, Latin America (NA ind.) \\
Min.\ days to redeem                        & Min.\ days to redeem (NA ind.) \\
Min.\ days to redeem, 1st party             & Min.\ days to redeem, 1st party (NA ind.) \\
Min.\ days to redeem, B2B                   & Min.\ days to redeem, B2B (NA ind.) \\
Min.\ days to redeem, retail instore        & Min.\ days to redeem, retail instore (NA ind.) \\
Min.\ days to redeem, retail online         & Min.\ days to redeem, retail online (NA ind.) \\
Min.\ days to redeem, North America         & Min.\ days to redeem, North America (NA ind.) \\
Min.\ days to redeem, EMEA                  & Min.\ days to redeem, EMEA (NA ind.) \\
Min.\ days to redeem, Asia                  & Min.\ days to redeem, Asia (NA ind.) \\
Min.\ days to redeem, Latin America         & Min.\ days to redeem, Latin America (NA ind.) \\
\# GCs claimed                              & \# GCs claimed, 1st party \\
\# GCs claimed, 1st party, North America    & \# GCs claimed, 1st party, EMEA \\
\# GCs claimed, 1st party, Asia             & \# GCs claimed, 1st party, Latin America \\
\# GCs claimed, North America               & \# GCs claimed, EMEA \\
\# GCs claimed, Asia                        & \# GCs claimed, Latin America \\
\# GCs claimed, B2B                         & \# GCs claimed, B2B, North America \\
\# GCs claimed, B2B, EMEA                   & \# GCs claimed, B2B, Asia \\
\# GCs claimed, B2B, Latin America          & \# GCs claimed, retail instore \\
\# GCs claimed, retail instore, North America & \# GCs claimed, retail instore, EMEA \\
\# GCs claimed, retail instore, Asia        & \# GCs claimed, retail instore, Latin America \\
\# GCs claimed, retail online               & \# GCs claimed, retail online, North America \\
\# GCs claimed, retail online, EMEA         & \# GCs claimed, retail online, Asia \\
\# GCs claimed, retail online, Latin America & \\
Avg.\ GC value redeemed per trip            & Avg.\ GC value redeemed per GC trip \\
\% bookings using a GC                      & \% spend from GCs \\
\end{longtable}

\begin{table}[h]
\centering
\caption{Historical booking activity features.}
\label{tab:features_booking}
\begin{tabular}{ll}
\toprule
\textbf{Feature description} & \textbf{Feature description (cont.)} \\
\midrule
Avg.\ bookings per year          & Avg.\ nights per booking \\
Total nights booked              & \# bookings \\
\# bookings, Android Pay         & \# bookings, Apple Pay \\
\# bookings, bank account        & \# bookings, credit card \\
\# bookings, Geneva              & \# bookings, Klarna \\
\# bookings, PayPal              & \# bookings, Fri/Sat check-in \\
\# bookings with discounts       & \\
\bottomrule
\end{tabular}
\end{table}

\begin{table}[h]
\centering
\caption{Historical spend features.}
\label{tab:features_spend}
\begin{tabular}{ll}
\toprule
\textbf{Feature description} & \textbf{Feature description (cont.)} \\
\midrule
Avg.\ booking value per trip          & Avg.\ booking value per night \\
Total GBV                             & Total GBV, Android Pay \\
Total GBV, Apple Pay                  & Total GBV, bank account \\
Total GBV, credit card                & Total GBV, Geneva \\
Total GBV, Klarna                     & Total GBV, PayPal \\
Total GBV, Fri/Sat check-in           & Total GBV with discounts \\
\bottomrule
\end{tabular}
\end{table}

\begin{table}[h]
\centering
\caption{PDP (listing detail page) engagement and price features.}
\label{tab:features_pdp}
\begin{tabular}{ll}
\toprule
\textbf{Feature description} & \textbf{Feature description (cont.)} \\
\midrule
Listing views, 30d                                & Listing views, 30d (NA ind.) \\
Listing views with dates, 30d                     & Listing views with dates, 30d (NA ind.) \\
Listing impressions with dates, 25d               & Listing impressions with dates, 25d (NA ind.) \\
Unique listings viewed, 30d                    & \\
Last listing nightly price                        & Last listing nightly price (NA ind.) \\
Last listing nightly price with fees \& tax       & Last listing nightly price with fees \& tax (NA ind.) \\
Last listing total price                          & Last listing total price (NA ind.) \\
Avg.\ nightly price across listings               & Avg.\ nightly price across listings (NA ind.) \\
Avg.\ nightly price with fees \& tax              & Avg.\ nightly price with fees \& tax (NA ind.) \\
Avg.\ total price across listings                 & Avg.\ total price across listings (NA ind.) \\
IQR nightly price with fees \& tax                & IQR nightly price with fees \& tax (NA ind.) \\
IQR total price across listings                   & IQR total price across listings (NA ind.) \\
Variance of nightly price with fees \& tax        & Variance of nightly price with fees \& tax (NA ind.) \\
Variance of total price across listings           & Variance of total price across listings (NA ind.) \\
\bottomrule
\end{tabular}
\end{table}

\subsubsection{Selected Features for the Transferred Booking Hazards}

We conclude by noting the restricted selection of features used for training the logistic hazard models transferred from the experiment to the observational data. These features are intended to provide a small number of
representative features from each of the groups mentioned above. The 23 selected features are listed in
Table~\ref{tab:default_columns}.

\begin{table}[h]
\centering
\caption{The restricted set of features leveraged to train transferred probabilities/hazards.
}
\label{tab:default_columns}
\begin{tabular}{lll}
\toprule
\textbf{Group} & \textbf{Feature description} \\
\midrule
Gifting    & Avg.\ days to claim \\
Gifting    & Avg.\ days to claim (NA ind.) \\
Gifting    & Min.\ days to claim \\
Gifting    & Min.\ days to claim (NA ind.) \\
Gifting    & Avg.\ days to redeem \\
Gifting    & Avg.\ days to redeem (NA ind.) \\
Gifting    & \# GCs claimed \\
Gifting    & Avg.\ GC value redeemed per GC trip \\
Gifting    & \% spend from GCs \\
Gifting    & \% bookings using a GC \\
\midrule
Booking activity & Avg.\ bookings per year \\
Booking activity & Avg.\ nights per booking \\
\midrule
Spend      & Avg.\ booking value per trip \\
Spend      & Avg.\ booking value per night \\
\midrule
PDP        & Listing views, 30d \\
PDP        & Listing views, 30d (NA ind.) \\
PDP        & Last listing nightly price \\
PDP        & Last listing nightly price (NA ind.) \\
PDP        & Unique listings viewed, 30d \\
PDP        & IQR nightly price with fees \& tax \\
PDP        & IQR nightly price with fees \& tax (NA ind.) \\
PDP        & Variance of total price across listings \\
PDP        & Variance of total price across listings (NA ind.) \\
\bottomrule
\end{tabular}
\end{table}

\section{Extended Discussion}
\label{app:discuss}
The goal of this appendix is to continue the discussion of our empirical findings, which were outlined in Section~\ref{sec:case_study} and then interpreted in Section~\ref{sec:discuss}.  
As with the discussion section in the paper, this appendix is mostly qualitative in nature. 
First, in Section~\ref{app:discuss:econ_model}, we describe a behavioral model under which our findings for self-gifters (high extensive margins) can be justified even without an advertising effect. Then, in Section~\ref{app:discuss:assumptions}, we discuss potential threats to causal assumptions in the data we leverage. Lastly, in Section~\ref{app:discuss:profit}, we briefly mention how our results may be relevant to determining the profitability of gift card programs, and why we do not tackle measuring profitability in the paper.

\subsection{A Simple Behavioral Economic Model}
\label{app:discuss:econ_model}

In Section~\ref{sec:case_study:self_gift}, we found that self-gifters exhibited large extensive margins. For these units, attributing the lift in booking probability to something like an advertising effect associated with gift card receipt does not appear fully reasonable. In particular, self-gifters are potentially deal-seeking customers with pre-existing travel intent, and thus are likely aware of Airbnb's offerings. 
As noted in our discussion in Section~\ref{sec:discuss}, one might think that rational self-gifting units who are aware of the business should have zero extensive margins, in part due to the fungibility of gift card value and cash. 
While we heuristically justified that such results may be justified due to implicit discounts on gift cards available from third-party providers, we did not provide a rigorous economic justification. 

In this subsection we posit a simple behavioral economic model that corroborates our findings of high extensive margins amongst self-gifters. Further, this model will show that it is possible to have a substantial lift in booking probability while still having zero intensive margins. 
The role of this model is merely supportive, meant to provide additional economic insight into one way customers may make booking decisions that results in similar findings to those noted above.
We are not affirming that the model introduced in this section actually describes the behavior of customers, and we note that our primary identification/estimation framework only relies on \textit{statistical assumptions}, not \textit{behavioral economic} ones.

Consider customer $i$, who chooses between not booking and making a booking with gross booking value $y$ in a feasible set $\mathcal{Y}_i$. We assume that their utility of booking a property of value $y$ without a gift card is given as
\begin{equation}
\label{eq:utility_without_card}
U_i^0(y) = v_i + u_i(y) - y,
\end{equation}
where $v_i$ captures the customer's fixed preference for taking an Airbnb trip relative to the outside option, and $u_i(y)$ captures the utility obtained from the attributes of a booking costing $y$. We assume without loss of generality that the utility obtained from the outside option is normalized to zero.
Suppose that the utility maximization problem over $\calY_i$ has a solution, and set
\begin{equation}
\label{eq:optimal_booking_value}
y_i^\star \in \arg\max_{y\in\mathcal{Y}_i}\{u_i(y)-y\}
\end{equation}
to be the gross booking value associated with the highest utility option. 
We define a customer's baseline net surplus from booking to be
\begin{equation}
\label{eq:baseline_surplus}
\omega_i
:=
v_i+u_i(y_i^\star)-y_i^\star.
\end{equation}
Adopting the convention that a customer books when indifferent, we define the \textit{potential booking outcome  without a gift card} as
\begin{equation}
\label{eq:booking_without_card}
B_i(0)=\1\{\omega_i\geq 0\}.
\end{equation}

We now define a customer's utility function under a discounted gift card. Suppose that customer $i$ can acquire a gift card with face value $f_i$ at price $(1-\lambda_i)f_i$, where $\lambda_i\in[0,1]$ represents the proportional discount associated with the acquisition.\footnote{For instance, if a user gets 5\% cash back on the purchase of a \$100 gift card, we have $f_i = 100$ and $\lambda_i = 0.05$.} Let $c_i \geq 0$ denote the monetary equivalent of the search, acquisition, or transaction cost required to obtain and use the card.

For this simple economic model, we assume that (i) a customer can only ever have/leverage one gift card at a time, (ii) the customer fully redeems the card on a booking, and (iii) that the GBV of all properties in the consideration set exceed the gift card face value, i.e., that $y \geq f_i$ for all $y \in \calY_i$.\footnote{These assumptions are likely feasible in practice for low face value gift cards that are redeemed on expensive, multi-night stays.}
A customer booking a property with value $y$ using such a card pays $(1-\lambda_i)f_i$ to acquire the card and $y-f_i$ at redemption. Including the acquisition cost, the customer's total effective expenditure is
\[
(1-\lambda_i)f_i+(y-a_i)+c_i
=
y-\lambda_i f_i+c_i.
\]
Defining the \textit{net economic benefit} or \textit{lift} provided by the card as $\ell_i:=\lambda_i f_i-c_i$, the utility from booking under the gift card program is then
\begin{equation}
\label{eq:utility_with_card}
U_i^1(y)
=
v_i+u_i(y)-y+\ell_i.
\end{equation}
Because $\ell_i$ does not depend on $y$, access to a discounted gift card does not change the solution to the conditional booking problem:
\begin{equation}
\label{eq:unchanged_booking_value}
\arg\max_{y\in\mathcal{Y}_i}U_i^1(y)
=
\arg\max_{y\in\mathcal{Y}_i}U_i^0(y).
\end{equation}
In particular, $y_i^\star$ remains an optimal booking value under this simplistic economic model. The potential booking decision \textit{under the gift card program} is consequently
\begin{equation}
\label{eq:booking_with_card}
B_i(1)=\1\{\omega_i+\ell_i\geq0\}.
\end{equation}

Suppose henceforth that $\ell_i\geq0$, so that the financial benefit from the card weakly exceeds its acquisition cost. Equations~\eqref{eq:booking_without_card} and~\eqref{eq:booking_with_card} then imply $B_i(1)\geq B_i(0)$, and so the model generates monotonicity of the booking decision. Moreover, the individual extensive effect under this booking model is given by the booking difference
\begin{equation}
\label{eq:individual_extensive_effect}
B_i(1)-B_i(0)
=
\1\{-\ell_i\leq\omega_i<0\}.
\end{equation}
Customers whose baseline booking surplus lies in $[-\ell_i,0)$ do not book at the ordinary price but do book when the discounted gift card is available --- these are precisely the customers providing positive extensive margins.

Further, the potential spend of a customer with gift card receipt status $g \in \{0, 1\}$ is given by
\begin{equation}
\label{eq:potential_spend_structural}
Y_i(g):=y_i^\star B_i(g),
\qquad g\in\{0,1\}.
\end{equation}
Under our model, it follows exactly that the individual treatment effect is given by
\begin{equation}
\label{eq:individual_revenue_effect}
Y_i(1)-Y_i(0)
=
y_i^\star\1\{-\ell_i\leq\omega_i<0\}.
\end{equation}
In other words, the card changes whether or not a customer books (the extensive margin) but does not change the customer's chosen spend conditional on booking (the intensive margin). Thus, we see that our findings of strongly positive extensive margins in Airbnb's data are roughly justified from a simple economic model. We decompose the various behaviors of a customer (cannibalization, extensive-margin generating, non-participation) under our economic model in Table~\ref{tab:endogenous_customer_types}.

\begin{table}[!htb]
\centering
\begin{tabular}{p{0.22\textwidth}p{0.37\textwidth}p{0.30\textwidth}}
\toprule
\textbf{Surplus region} & \textbf{Booking behavior} & \textbf{Endogenous customer type} \\
\midrule
$\omega_i\geq0$
&
$B_i(0)=B_i(1)=1$
&
Cannibalizing customer: the customer books with or without the gift card.
\\[0.3em]
$-\ell_i\leq\omega_i<0$
&
$B_i(0)=0$ and $B_i(1)=1$
&
Extensive-margin customer: the gift card induces a booking that otherwise would not occur.
\\[0.3em]
$\omega_i<-\ell_i$
&
$B_i(0)=B_i(1)=0$
&
Nonparticipating customer: the gift card benefit is insufficient to induce a booking.
\\
\bottomrule
\end{tabular}
\caption{Customer types generated by the rational participation model.}
\label{tab:endogenous_customer_types}
\end{table}

\subsection{Validity of Causal Assumptions}
\label{app:discuss:assumptions}

Our primary identification results, Theorems~\ref{thm:identification} and \ref{thm:ident-type}, rest on four causal assumptions. Some assumptions are relatively innocuous. For instance, Assumption~\ref{ass:positivity} states that there is no region of the covariate space for which a customer receives a gift card with probability $1$, which is a relatively mild assumption and would only be violated if there existed strata of customers (determined by covariates $X$) who systematically only booked with gift cards. Further, Assumption~\ref{ass:partial-obs} is a structural assumption about the partial observability of the treatment in the observational data and hence easy to justify in our context.  However, Assumptions~\ref{ass:exogeneity} and \ref{ass:transfer} (or Assumption~\ref{ass:transfer-type} in the heterogeneous treatment setting) are more substantial and require careful consideration. While these assumptions cannot be verified or refuted with certainty from data alone, we devote this section to a systematic examination of the conditions under which each assumption may be violated. We further either argue why such violations are unlikely in our setting, or discuss where we think such assumptions may be partially violated in our use case. 

\subsubsection{Potential Violations of Conditional Exogeneity}
\label{sec:discuss:exogeneity}

Assumption~\ref{ass:exogeneity} requires that, conditional on observed covariates $X$, a customer's counterfactual spend without a gift card is independent of whether or not they actually received one. The primary threat to this condition is the existence of \textit{unobserved confounders}, latent variables that jointly impact gift card receipt status and control spending behavior. While exhausting the list of possible confounding behaviors would likely be impossible, we can enumerate what we believe to be the most likely threats to conditional exogeneity. We focus on three such examples: self-gifting vs.\ true-gifting behavior, preexisting customer travel intention, and customer wealth/price sensitivity. Naturally, these categories can overlap. As one example, many self-gifting customers likely possess preexisting travel intent. Likewise, wealthier consumers may be less prone to deal-seeking behaviors, and thus may not search for discounts on properties.
Despite these overlaps, we find it important to discuss each of these points in sequence and discuss how the observables we collect can mediate these latent variables.
We also conclude this subsection by noting practical limitations in constructing the features outlined in Section~\ref{app:case_study:features}, and how this may impact the validity of conditional exogeneity.

\paragraph{Self-Gifting vs.\ True-Gifting Behavior}
Perhaps the most notable unobserved feature in our problem is the indicator of whether or not a consumer being considered is engaging in self-gifting behavior. As a reminder, a unit is deemed to exhibit \textit{self-gifting behavior} if they procure or purchase their own gift cards. Likewise, \textit{true-gifting behavior} is characterized by a unit receiving cards from external sources, such as through friends, family, or potentially even business promotions. We naturally expect a customer's self-gifting status to impact both their probability of receiving a gift card and counterfactual spend behavior. For instance, self-gifting units may exhibit significantly higher gifting propensities, as they will regularly check for cashback or discounts on gift cards at their retailer of preference.
Meanwhile, patrons who do not participate in self-gifting may have more modest gift card receipt rates, as they would only receive cards for special occasions (e.g., a wedding or birthday) and thus may not leverage a card in a large fraction of their purchases. 
One may similarly speculate about differences in control spending behavior, perhaps expecting that many self-gifting units may have $Y(0) = 0$, indicating that they would have not made a booking (or would have purchased at a competing firm) without the implicit discount obtained through self-gifted cards.

In our empirical analysis, we leverage a variety of features that are targeted at mediating the effects of self-gifting behavior. As one example, a customer who has leveraged gift cards for a large fraction of their historical purchases (say, upwards of 40\%) or who has a large share of historical GBV consisting of gift card spend may have high self-gifting propensity. Further, customers for whom the minimum historical numbers of days between gift card purchase and claim/redemption is small (say, one or two days) may likewise be self-gifting.
We measure all of these quantities (amongst others) in an attempt to shut down potential confounding paths.

In addition to our rich measurements, in Section~\ref{sec:case_study:self_gift} we also took several steps throughout our analysis to ensure robustness to self-gifting.  First, we decomposed our estimates for both the IR and IC by channel, separately reporting numbers for channels where self-gifting is less likely (First-Party and Retail In-Store channels) and more likely (Retail Online and B2B channels). Second, we also measured both the IR and IC when we restricted our attention to inferred self-gifting units, who were defined as those who made a booking using a gift card that was less than seven days old. As noted above, the number of days between gift card purchase and redemption serves as a natural proxy for self-gifting behavior.

\paragraph{Customer Travel Intent}
Another unobserved confounder in our setting is whether or not the customer had preexisting travel intent. A customer's travel intent naturally impacts both their likelihood of receiving a gift card and how much they would spend without a gift card. 
As one example, if a customer is determined to visit a certain location (prior to gift card receipt), they may exhibit lower sensitivity to prices and will likely not have $Y(0) = 0$ (corresponding to not booking). Further, they may express travel desire to family and friends, and thus may be significantly more likely to receive a gift card. On the other hand, if a customer does not have preexisting intent, they may not book at all absent a gift card, and thus would have zero counterfactual spend. Further, gift card receipt may be less likely amongst individuals with low travel intent, as self-gifting would be improbable and friends may provide other gifts that they believe would be of higher utility to the customer in question.

While a customer's travel intent is not directly measurable, we do incorporate a variety of features aimed at capturing this unobservable into our estimation pipeline. We construct a large number of \textit{product details page (PDP)} features, which capture a customer's interaction with Airbnb's website. As examples, we measure the variability and interquartile range of both the total and nightly price of the properties that a customer views. Additionally, we measure the average nightly price and the number of unique property viewings. We detailed the full collection of such features in Table~\ref{tab:features_pdp} above. In addition to these PDP features, we also collect spend and booking-related features that may provide insight into a customer's travel intention. For instance, we measure the average number of bookings and average nights per booking, both over the previous three years. The aforementioned features (amongst others) all serve to mediate the effect of travel intention on both gift card receipt and customer spend amount.

\paragraph{Customer Mental Accounting}
Many existing works evaluating gift card programs note the diverse ways customers may view gift card value.
For instance, some units may exhibit a high propensity towards cannibalization, with any obtained gift card value directly replacing potential out-of-pocket spend. Likewise, other units may view obtained gift card value as \textit{windfall gains}, thus appending this value onto what they would have otherwise spent out-of-pocket. Further, some units may exhibit \textit{deal-seeking behavior}, only booking trips when there is some perceived discount on the property (with the discount either being offered directly through Airbnb or implicitly through a gift card available from a rewards program). All of these behaviors would result in different customer booking behaviors, and we have no way of directly ascertaining the precise behavior patterns of the customers. 

We again construct a rich set of features to capture the above modes of mental accounting. To determine whether a customer is likely to cannibalize or view gift card value as ``free money,'' we measure both the historical percentage of GBV that was made up of gift card spend as well as the average gift card value redeemed per trip. Likewise, to capture deal-seeking behavior, we measure both the number of historical bookings with discounts and the total historical GBV booked by a customer on trips that involved a discount. 
Further, by breaking our analysis down on a channel-by-channel basis, we also capture any systematic differences in customer behavior that may exist across channels. In fact, in our decomposition of the IR into intensive, extensive, and cannibalization shares, we find that there are systematic differences in cannibalization behavior across channels. In particular, third-party channels exhibit much smaller cannibalization shares than first party channels. 

\paragraph{Limitations of Feature Construction}
In Sections~\ref{sec:case_study:empirical_setup} and \ref{app:case_study:features}, we detailed how we constructed the various features used in our analysis. In particular, we noted that there were differences in feature construction between the experimental and observational datasets. For the former, feature construction was rather natural --- we could construct all average and aggregate features up to the start of the experiment. 
However, for the latter, there is no natural point prior to the (generally unknown) time of gift card receipt that we could use as a reference. Thus, to be relatively consistent in our analysis, we chose to construct all features up to the point of a customer's first booking in the target month, leveraging the start of the period for those units who did not book.

We note that, by making this choice, one may suspect that the constructed features serve as mediators on the causal pathway from gift card receipt $G$ to a booking decision $B$.
We first argue that this is not the case for historical gifting, booking, and customer spend features.
Suppose we operate under the assumption that, if a customer possesses a gift card, they will use it in their next transaction. 
Then, any transaction that is counted towards the aforementioned features must have been executed \textit{prior to the receipt} of the gift card that was counted as treatment. In particular, historical gifting features were explicitly constructed to exclude any gift card that was redeemed in the target period, and historical spend/booking features do not count bookings made in the target month.

The features that are most likely to serve as mediators between $G$ and $B$ in our setting are the \textit{PDP features}, as they measure the browsing intensity of a customer in a time period leading up to their booking decision. We have included these features because we view them as invaluable in describing potentially confounding customer behavior, such as travel intent and price sensitivity. This is because they provide contemporaneous proxies for a customer's latent travel intent. Under the natural behavioral assumption that gift card receipt serves to increase browsing rates and that increased browsing rates in turn provide increased booking rates, conditioning on these features would serve to block a positive mediated effect of gift card receipt $G$ on the booking variable $B$. Consequently, to the extent that these features do mediate the effect of gift card receipt on booking rates, we would expect under the above behavioral assumption their inclusion to provide downward bias for our incrementality estimates, in particular, through lower estimated extensive margins. We emphasize that this argument does not imply that conditioning on post-treatment variables is generically conservative; rather, it describes the expected direction of bias when conditioning on the PDP features noted above.

We note that another way one could hope to construct features in the observational dataset is as follows. For units who book without gift cards, we continue to measure features up until the point of booking. Likewise, since we know the date of gift card purchase for those who book with a gift card, we can measure historical behaviors up until that point in time. There are, however, two major problems with this approach. First, this does not resolve the problem of collecting measurements for individuals who do not book in the target period. While we could still leverage the start of the focal period for such units, this can result in features being seriously misaligned, especially between bookers and non-bookers who possess very old cards. Second, a unit's date of gift card receipt need not coincide with the date of  purchase, especially for true-gifted individuals. Because of these issues, we prefer to keep the current definitions of the features in our analysis.

\subsubsection{Potential Violations of Transferability}
\label{sec:discuss:transfer}

We now discuss potential threats to our main transferability assumptions (Assumptions~\ref{ass:transfer} and~\ref{ass:transfer-type}), which require that the ratios between the booking probability (conditional on covariates) with and without gift cards remain invariant between the experimental and observational distributions.

\paragraph{Differences in Mode of Gift Card Receipt.} In the experiment, individuals who received gift cards were notified via an email from Airbnb, with those who do not receive a gift card having their experience at Airbnb unchanged. Thus, we view the method of gift card acquisition in the experiment as being closest to a \textit{true-gifting} scenario, wherein an individual's colleagues provide them with either a physical or digital gift card. The experiment then most closely matches the gift card receipt and redemption patterns found in channels where true-gifting is commonplace, such as First-Party and Retail In-Store channels. The experiment is somewhat less representative of channels where self-gifting is highly incentivized (via rewards programs or point redemption), with Retail Online and B2B channels being notable examples. 
We take several steps in our analyses to deal with these discrepancies. As noted repeatedly above, we collect a rich set of features capturing customer self-gifting behavior through historical spend and gift card usage patterns.  In particular, these features contain several proxies for self-gifting behavior, such as the fraction of historical purchases made with gift cards and the minimum/average number of days between gift card receipt and redemption. 
We also conduct sensitivity analyses in segments of the population where we believe the treatment mechanism to differ significantly from the experiment, namely Retail Online and B2B.

\paragraph{Temporal Drift and Seasonality.} There are two potential temporal threats to the validity of our transferability assumption. First, there is the concern of temporal drift. The experimental and observational datasets were gathered at distinct points in time, with the experiment being conducted in 2022 and the observational data being collected in 2024. One may then speculate that there are systematic differences in booking rates across time. For instance, one may expect conditional booking rates to increase over time as a company scales and better advertises to its customers. To respond to this, we emphasize that our transferability assumption does not require the exact conditional booking probabilities to be consistent across the populations. Rather, only the ratios of booking probabilities must remain invariant. As noted in Equation~\eqref{eq:f_transfer} immediately following Assumption~\ref{ass:transfer}, there could be some function $f(x)$ satisfying 
\[
\P_{\obs}(B = 1 \mid X = x, G = 0) = f(x)p_0(x) \quad \text{and} \quad \P_{\obs}(B = 1 \mid X =x, G = 1) = f(x)q_0(x)
\]
and our assumption will still hold. This function $f$ could capture systematic, covariate-dependent changes in booking rates induced by temporal drift, such as general increases or decreases in booking rates over time. Using the same logic, one can also see that seasonal changes in booking rates can also be captured through this scaling function $f$. 

\paragraph{Geographic Extrapolation.} The experiment was conducted using mostly North American customers, with a small number of customers from EMEA (Europe, Middle East, and Africa) being included as well. 
One may then speculate that transferability would fail or be less justifiable for geographically or culturally dissimilar regions. To address this, we focus our analysis entirely on units who have received gift cards through North American channels. Further, if differences in conditional booking rates systematically impact both the gifted and ungifted populations, then we may believe our transferability assumption to be equally valid across geographies.

\subsection{On the Profitability of Gift Card Programs}
\label{app:discuss:profit}

In this paper, we do not directly address the underlying question of the profitability of gift card programs across geographies and channels. Instead, we focus on just determining \textit{incrementality}, measured either through the fraction of GBV amongst gift card recipients that is attributable to the existence of the gift card program (the IR) and the number of incremental dollars of GBV booked for each one dollar of gift card value redeemed (the IC). Our choice in doing this is intentional.

First, determining the  profitability of gift card programs requires a bespoke, business-by-business and channel-by-channel analysis. Naturally, different channels exhibit different cost structures. First-party sales are often the cheapest for a business to operate, as there are fewer distributional costs. Third-party providers sometimes require gift cards to be sold to them at below face value so that they can make a profit while distributing the product.
These discount rates can vary widely across channels, and even within a channel. Looking at Retail In-Store as an example, Store A may require the business to provide its gift cards at a 10\% discount, whereas Store B may require a more substantial 15\% discount. 
Further, businesses may require a third-party platform to manage gift card transactions and monitor for/manage fraud.
Thus, gift cards may exhibit significantly higher transaction fees than other forms of payment.
Likewise, treatment of breakage may subtly impact downstream calculations of profitability.
To recall, \textit{breakage} is the fraction of unredeemed gift card value that a business can claim as profit after the card's expiration date. Breakage rates and laws naturally differ across geographies, and even within the United States different states possess drastically different breakage laws.
While we could propose a toy framework under which one could conduct said profitability calculations, doing this may lead to misleading results.

Second, the question of profitability given our point estimates for the IR and IC is no longer a counterfactual or causal problem. A given business could leverage our framework for estimating the IR and IC alongside their internal cost structure to determine the profitability of their gift card program on a channel-by-channel basis. Attempting to develop a general framework through which all businesses could assess profitability would not only be a substantially different task (and a very difficult one at that), but would also distract from our proposed estimands and causal identification framework. We thus view the development of such frameworks for determining profitability as interesting directions for future work.

\end{document}